\newtheorem{defn}{Definition}[section]
\newtheorem{prop}[defn]{Proposition}
\newtheorem{cor}[defn]{Corollary}
\newtheorem{exmp}[defn]{Exemple}
\newtheorem{claim}[defn]{Claim}
\title{Subdivision into $i$-packings and $S$-packing chromatic number of some lattices}
\author[1,2]{Nicolas Gastineau\thanks{Author partially supported by the Burgundy Council}}
\author[2]{Hamamache Kheddouci}
\author[1]{Olivier Togni}
\affil[1]{\textit{Universit\'e de Bourgogne, 21078 Dijon cedex, France, Le2i, UMR CNRS 6303} }
\affil[2]{\textit{Université de Lyon, CNRS, Université Lyon 1, LIRIS, UMR5205, F-69622, France} }
\begin{document}
\maketitle
\begin{abstract}

An $i$-packing in a graph $G$ is a set of vertices at pairwise distance greater than $i$. For a nondecreasing sequence of integers $S=(s_{1},s_{2},\ldots)$, the $S$-packing chromatic number of a graph $G$ is the least integer $k$ such that there exists a coloring of $G$ into $k$ colors where each set of vertices colored $i$, $i=1,\ldots, k$, is an $s_i$-packing.
This paper describes various subdivisions of an $i$-packing into $j$-packings ($j>i$) for the hexagonal, square and
triangular lattices. These results allow us to bound the $S$-packing chromatic number for these graphs, with more precise
bounds and exact values for sequences $S=(s_{i}, i\in\mathbb{N}^{*})$, $s_{i}=d+ \lfloor (i-1)/n \rfloor$.
\end{abstract}
\section{Introduction}
Let $G=(V,E)$ be a (finite or infinite) graph and let $N(u)=\{v\in V(G) | \ uv\in E(G)\}$ be the set of neighbors of vertex $u$.
A set $X_{i}\subseteq V(G)$ is an $i$-packing if for any distinct pair $u$, $v$ $\in$ $X_{i}$, $d(u,v)> i$, where $d(u,v)$ denotes the usual shortest path distance between $u$ and $v$.
We will use $X_{i}$ to refer to an $i$-packing in a graph $G$.
A $k$-coloring $c$ of $G$ is a map from $V(G)$ to $\{1,\ldots, k\}$ such that for every pair of adjacent vertices $(u,v)$, we have $c(u)\neq c(v)$.
For a graph $G$ and a $k$-coloring $c$ of $G$, let $c_i$ be $\{u\in V(G) |\ c(u)=i\}$.
The smallest integer $k$ such that there exists a $k$-coloring of $G$ for which for every $i$, with $1\le i\le k$, $c_i$ is a $i$-packing, is called the \textit{packing chromatic number} of $G$ and is denoted by $\chi_{\rho}(G)$. This concept was introduced by Goddard \textit{et al.} \cite{GO2008} under the name of broadcast chromatic number.
More generally, for a nondecreasing sequence of integers $S=(s_{1},s_{2},\ldots)$, an \textit{$S$-packing $k$-coloring}
is a $k$-coloring $c$ of $V(G)$ such that for every $i$, with $1\le i\le k$, $c_i$ is a $s_{i}$-packing.
Such a coloring will also simply be called an $(s_1,\ldots,s_k)$-coloring. The {\em $S$-packing chromatic number} of $G$
denoted by $\chi^{S}_{\rho}(G)$, is the  smallest $k$ such that $G$ admits an $S$-packing $k$-coloring. For the
sequences $S=(s_{i}, i\in\mathbb{N}^{*})$, with $s_{i}=d+ \lfloor (i-1)/n \rfloor$, we call $\chi^{S}_{\rho}(G)$  the 
\textit{$(d,n)$-packing chromatic number} and denote it by $\chi^{d,n}_{\rho}(G)$.
For any connected graph $G$ such that $|V(G)|\ge d+1$, $\chi^{d,n}_{\rho}(G)\ge d+1$ and  $\chi^{1,1}_{\rho}(G)=\chi_{\rho}(G)$. For every bipartite graph $G$, $\chi^{1,2}_{\rho}(G)=2$ (a bipartite graph is 2-colorable).
Moreover, the smallest $n$ such that $\chi^{d,n}_{\rho}(G)= n$ corresponds to the $d$-distant chromatic number \cite{KR2008}, \textit{i.e.} the minimum number of $d$-packings that form a partition of the vertices.

Let $P_{\infty}$ denote the two-way infinite path, let $\mathbb{Z}^{2}=P_{\infty}\square P_{\infty}$ denote the planar square lattice (where $\square$ is the Cartesian product), $\mathscr{T}$ denote the planar triangular lattice and $\mathscr{H}$ denote the planar hexagonal lattice.
In this article, for an $(s_{1},s_{2},\ldots)$-coloring of a graph, we prefer to map vertices to the color multiset $\{s_{1},s_{2},\ldots\}$ even if two colors can then be denoted by the same number. This notation allows the reader to directly see to which type of packing the vertex belong depending on its color. When needed, we will denote colors of vertices in different $i$-packings by $i_a,i_b,\ldots$.

\subsection{Motivation and related work}
Packing colorings in graphs are inspired from frequency planning in wireless systems. The concept of $S$-packing
coloring emphasizes the fact that signals can have different powers similar to the packing coloring but enables the
presence of several signals with the same power, providing a more realistic model for the frequency assignment problem.

The packing chromatic number of lattices has been studied by several authors: Soukal and Holub \cite{SO2010} proved that
$\chi_{\rho}(\mathbb{Z}^{2})\le 17$, Ekstein \textit{et al.} \cite{EK2010} that
$12\le\chi_{\rho}(\mathbb{Z}^{2})$; Fiala \textit{et al.} \cite{FI2009} showed that $\chi_{\rho}(\mathscr{H})\le7$,
$\chi_{\rho}(\mathbb{Z}^{2}\square P_{2})=\infty$ and $\chi_{\rho}(\mathscr{H}\square P_{6})=\infty$ and Finbow and
Rall \cite{FIN2010} proved that $\chi_{\rho}(\mathscr{T})=\infty$.

$S$-packing colorings with sequences $S$ other than $(1,2,\ldots, k)$ first appear in \cite{GO2008,FI2010}. Goddard and Xu \cite{GOH2012} have recently studied $S$-packing colorings for the infinite path and for square and triangular lattices, determining conditions on the first elements of the sequence for which the graph is or is not $S$-packing-colorable.

Regarding the complexity, Goddard \textit{et al.} \cite{GO2008} proved that the problem of $(1,2,3)$-packing coloring is polynomial while $(1,2,3,4)$-packing coloring and $(1,2,2)$-packing coloring are NP-complete. Fiala and Golovach \cite{FI2010} showed that the problem of $(1,2,\ldots, k)$-coloring is NP-complete for trees. The NP-completeness of $(1,1,2)$-coloring was proved by Goddard and Xu \cite{GO2012} and afterward by Gastineau \cite{GA2015}.

While the packing coloring corresponds to an $S$-packing coloring with a strictly increasing sequence and the $d$-distant
chromatic number corresponds to a constant one, the sequence in the $(d,n)$-packing coloring also tends to infinity, but the parameter $n$ allows us to control its growth. 

Moreover, one can note that all the $S$-packing colorings of square and hexagonal lattices published so far have the property that the $s_1$-packing is
maximum and the other $s_i$-packings are obtained by subdivisions of $s_1$-packings (and are not always maximum).
Therefore, we find it interesting to study subdivision of an $i$-packing into $j$-packings, $j>i$, in lattices. These
subdivisions can in turn be used to describe patterns to obtain an $S$-packing coloring of a lattice.
However, determining the families of graphs $G$ for which for any $S$ such that $G$ is $S$-colorable, the $S$-coloring satisfies the above property is satisfied remains an open question.
Recently Goddard and Xu \cite{GOH2012} proved that there exist nondecreasing sequences $S$ such that $P_{\infty}$ is $S$-colorable and in any $(s_1,\ldots,\chi_{\rho}^{S}(P_\infty))$-packing coloring of $P_{\infty}$, the $s_1$-packing is not maximum, showing that for $P_{\infty}$, there are sequences $S$ for which the above property is not satisfied.


\subsection{Our results}
The second section introduces some definitions and results related to density. The third section introduces some
subdivision of the lattices into $i$-packings. The fourth and fifth sections give lower bounds resulting from Section 2
and upper bounds resulting from Section 3 for the $S$-packing chromatic number and the $(d,n)$-packing chromatic number
of the lattices $\mathscr{H}$, $\mathbb{Z}^{2}$ and $\mathscr{T}$.
Tables~\ref{tab1} ,~\ref{tab2} and~\ref{tab3} summarize the values obtained in this paper for the $(d,n)$-packing chromatic number, giving an idea of our results. The emphasized numbers are exact values and all pairs of values are lower and upper bounds. Lower bounds have been calculated from Proposition 2.2 and Propositions 2.5, 2.7 and 2.9. Some of the results for square and triangular lattice have been found independently by Goddard and Xu \cite{GOH2012}. 
\begin{table}
\begin{center}
\begin{tabular}{| c |  c | c | c | c | c | c |}
\hline
$d\backslash n$& 1 & 2 & 3 & 4 & 5 & 6 \\ \hline
1 & \textbf{7}\cite{AV2007,FI2009}  & \textbf{2} & \textbf{2} & \textbf{2} & \textbf{2} & \textbf{2} \\
2 & $\infty$ & 5 - 8 & \textbf{5} & \textbf{4} & \textbf{4} & \textbf{4} \\
3 & $\infty$ & 15 - 35 & 9 - 13 & 8 - 10 & 7 - 8 & \textbf{6} \\
4 & $\infty$ & 61 - ? & 20 - 58 & 15 - 27 & 13 - 21 &  12 - 18 \\
5 & $\infty$ & $\infty$   & 37 - ? & 25 - ? & 21 - ? & 19 - ? \\
8 & $\infty$ & $\infty$   & $\infty$ & ? & ? & ? \\
11 & $\infty$ & $\infty$   & $\infty$ & $\infty$ & ? & ? \\
13 & $\infty$ & $\infty$   & $\infty$ & $\infty$ & $\infty$ & ? \\
16 & $\infty$ & $\infty$   & $\infty$ & $\infty$ & $\infty$ & $\infty$ \\  \hline
\end{tabular}
\caption{Bounds for $(d,n)$-packing chromatic numbers of the hexagonal lattice.}
\label{tab1}
\end{center}
\end{table}
\begin{table}
\begin{center}
\begin{tabular}{| c |  c | c | c | c | c | c |}
\hline
$d\backslash n$& 1 & 2 & 3 & 4 & 5 & 6 \\ \hline
1 & 12 - 17 \cite{EK2010,SO2010} & \textbf{2} & \textbf{2} & \textbf{2} & \textbf{2} & \textbf{2} \\
2 & $\infty$ & 11 - 20 & 7 - 8 & \textbf{6} \cite{GOH2012} & \textbf{5} \cite{GOH2012} & \textbf{5} \\
3 & $\infty$ & 57 - ? & 16 - 33 & 12 - 20 & 10 - 17 & 10 - 14 \\
4 & $\infty$ & $\infty$ & 44 - ? & 25 - 56 & 20 - 34 & 18 - 28 \\
5 & $\infty$ & $\infty$ & 199 - ? & 50 - ? & 35 - ? & 29 - ? \\
6 & $\infty$ & $\infty$ & $\infty$ & ? & ? & ? \\
8 & $\infty$ & $\infty$ & $\infty$ & $\infty$ & ? & ? \\
10 & $\infty$ & $\infty$   & $\infty$ & $\infty$ & $\infty$ & ? \\ 
12 & $\infty$ & $\infty$   & $\infty$ & $\infty$ & $\infty$ &$\infty$ \\ \hline
\end{tabular}
\caption{Bounds for $(d,n)$-packing chromatic numbers of the square lattice.}
\label{tab2}
\end{center}
\end{table}
\begin{table}
\begin{center}
\begin{tabular}{| c |  c | c | c | c | c | c |}
\hline
$d\backslash n$& 1 & 2 & 3 & 4 & 5 & 6 \\ \hline
1 & $\infty$\cite{FIN2010} & 5 - 6 \cite{GOH2012} & \textbf{3} & \textbf{3} & \textbf{3} & \textbf{3} \\
2 & $\infty$ & 127  - ? & 14 - ? & 10 - 16 & 9 - 13 & 8 - 10 \\
3 & $\infty$ & $\infty$ & 81 - ? & 28 - 72 & 20 - 38 & 17 - 26 \\
4 & $\infty$ & $\infty$ & $\infty$ & 104 - ? & 49 - ? & 36 - ? \\
5 & $\infty$ & $\infty$ & $\infty$ & $\infty$ & ? & ? \\
7 & $\infty$ & $\infty$ & $\infty$ & $\infty$ & $\infty$ & ? \\ 
8 & $\infty$ & $\infty$ & $\infty$ & $\infty$ & $\infty$ & $\infty$ \\ \hline
\end{tabular}
\caption{Bounds for $(d,n)$-packing chromatic numbers of the triangular lattice.}
\label{tab3}
\end{center}
\end{table}
\section{Density of $i$-packings}
\subsection{Density of an $i$-packing in a lattice}
Let $G=(V,E)$ be a graph, finite or infinite and let $n$ be a positive integer.
For a vertex $x$ of G, the ball of radius $n$ centered at $x$ is the set $B_{n}(x)=\{v\in V(G)| d_{G}(x,v)\le n\}$ and the sphere of radius $n$ centered at $x$ is the set  $\partial B_{n}(x)=\{v\in V(G)| d_{G}(x,v)= n\}$.
The density of a set of vertices $X\subseteq V(G)$ is $d(X)=\limsup\limits_{l\to \infty}  \max\limits_{x\in V}\{ \frac{|X\cap B_{l}(x)|}{| B_{l}(x)|}\}$.\newline
The notion of $k$-area was introduced by Fiala \textit{et al.} \cite{FI2009}. We propose here a slightly modified definition:
\begin{defn}
Let $G$ be a graph, $x\in V(G)$, and let $k$ be a positive integer.
The $k$-area $A(x,k)$ assigned to $G$ is defined by :\newline
$A(x,k)=\left\{
    \begin{array}{ll}
        |B_{k/2}(x)| & \mbox{for $k$ even;} \\
        |B_{\lfloor k/2 \rfloor}(x)|+\sum\limits_{\substack{u\in\partial B_{\lceil k/2 \rceil}(x)}}\frac{|N(u)\cap B_{\lfloor k/2 \rfloor}(x)|+ |N(u)\cap \partial B_{\lceil k/2 \rceil}(x)|/2}{deg(u)} 
& \mbox{for $k$ odd.}
    \end{array}
\right.$
\end{defn}
For vertex-transitive graphs, the k-areas are the same for all vertices, hence we denote it by $A(k)$.\newline
The motivation for our modification of the notion of $k$-area with the introduction of the set of neighbors inside the sphere is to have sharper density bounds than the ones obtained by the initial notion of $k$-area. For the square and the hexagonal lattice the notion coincide as the relation is empty. However, for the triangular lattice, the density bound is smaller: the definition of Fiala \textit{et al.} \cite{FI2009} gives $A(1)=2$ whereas $A(1)=3$ in our case since there are adjacent vertices in the sphere (as for every $u\in\partial B_{1}(x)$, $|N(u)\cap \partial B_{1}(x)|/2=1$, then it adds one to the initial definition of $k$-area).
Figure 1 illustrates this example giving a coverage of the triangular lattice by balls of radius $1$. In one case (on the left) the balls are disjoint and in the second case (on the right) each sphere can be shared by several balls. Observe that in the second case, each vertex $u$ in a sphere centered at $x$ has two neighbors, and hence $|N(u)\cap \partial B_{1}(x)|/2=1$.
\begin{figure}[t]
\begin{center}
\begin{tikzpicture}
\foreach \x in {-0.5,0,0.5,...,4.5}
{
\draw (\x,0) -- (\x,3);
}
\foreach \y in {0,0.5, ...,3}
{
\draw (-0.5,\y) -- (4.5,\y);
}
\foreach \z in {-0.5,0,0.5,...,1.5}
{
\draw (\z,3) -- (\z+3,0);
}
\foreach \h in {0.5,1,...,2.5}
{
\draw (-0.5,\h) -- (\h-0.5,0);
\draw (\h+1.5,3) -- (4.5,\h);
}
\foreach \u in {0.5,1,...,2}
{
\node at (2*\u,\u) [circle,draw=blue!50,fill=blue!20] {};
\draw[line width=2pt,color=red] (2*\u-0.5,\u) -- (2*\u-0.5,\u+0.5);
\draw[line width=2pt,color=red] (2*\u,\u+0.5) -- (2*\u-0.5,\u+0.5);
\draw[line width=2pt,color=red] (2*\u,\u+0.5) -- (2*\u+0.5,\u);
\draw[line width=2pt,color=red] (2*\u+0.5,\u-0.5) -- (2*\u+0.5,\u);
\draw[line width=2pt,color=red] (2*\u+0.5,\u-0.5) -- (2*\u,\u-0.5);
\draw[line width=2pt,color=red] (2*\u,\u-0.5) -- (2*\u-0.5,\u);
}
\foreach \u in {0,0.5}
{
\node at (2*\u+3.5,\u) [circle,draw=blue!50,fill=blue!20] {};
\draw[line width=2pt,color=red] (2*\u+3,\u) -- (2*\u+3,\u+0.5);
\draw[line width=2pt,color=red] (2*\u+3.5,\u+0.5) -- (2*\u+3,\u+0.5);
}
\foreach \u in {0.5,1}
{
\node at (2*\u-0.5,\u+1.5) [circle,draw=blue!50,fill=blue!20] {};
\draw[line width=2pt,color=red] (2*\u-1,\u+1.5) -- (2*\u-1,\u+2);
\draw[line width=2pt,color=red] (2*\u-0.5,\u+2) -- (2*\u-1,\u+2);
\draw[line width=2pt,color=red] (2*\u-0.5,\u+2) -- (2*\u,\u+1.5);
\draw[line width=2pt,color=red] (2*\u,\u+1) -- (2*\u,\u+1.5);
\draw[line width=2pt,color=red] (2*\u,\u+1) -- (2*\u-0.5,\u+1);
\draw[line width=2pt,color=red] (2*\u-0.5,\u+1) -- (2*\u-1,\u+1.5);
}
\node at (0,0) [circle,draw=blue!50,fill=blue!20] {};
\node at (-0.5,1.5) [circle,draw=blue!50,fill=blue!20] {};
\node at (2.5,3) [circle,draw=blue!50,fill=blue!20] {};
\draw[line width=2pt,color=red] (-0.5,0) -- (-0.5,0.5);
\draw[line width=2pt,color=red] (-0.5,0.5) -- (0,0.5);
\draw[line width=2pt,color=red] (0,0.5) -- (0.5,0);
\draw[line width=2pt,color=red] (4,0) -- (3.5,0.5);
\draw[line width=2pt,color=red] (4,0.5) -- (4.5,0);
\draw[line width=2pt,color=red] (-0.5,1) -- (0,1);
\draw[line width=2pt,color=red] (0,1) -- (0,1.5);
\draw[line width=2pt,color=red] (0,1.5) -- (-0.5,2);
\draw[line width=2pt,color=red] (2,3) -- (2.5,2.5);
\draw[line width=2pt,color=red] (2.5,2.5) -- (3,2.5);
\draw[line width=2pt,color=red] (3,2.5) -- (3,3);

\foreach \x in {5,5.5,6,...,10}
{
\draw (\x,0) -- (\x,3);
}
\foreach \y in {0,0.5, ...,3}
{
\draw (5,\y) -- (10,\y);
}
\foreach \z in {5,5.5,...,7}
{
\draw (\z,3) -- (\z+3,0);
}
\foreach \h in {6,6.5,...,8}
{
\draw (5,\h-5.5) -- (\h-0.5,0);
\draw (\h+1.5,3) -- (10,\h-5.5);
}
\foreach \u in {0,0.5,...,3}
\foreach \v in {5.5,7}
{
\node at (\u+\v,\u) [circle,draw=blue!50,fill=blue!20] {};
}
\foreach \u in {0,0.5,...,1.5}
\foreach \v in {8.5}
{
\node at (\u+\v,\u) [circle,draw=blue!50,fill=blue!20] {};
}
\foreach \u in {1,1.5,...,3}
\foreach \v in {4}
{
\node at (\u+\v,\u) [circle,draw=blue!50,fill=blue!20] {};
}
\node at (10,0) [circle,draw=blue!50,fill=blue!20] {};
\node at (5,2.5) [circle,draw=blue!50,fill=blue!20] {};
\node at (5.5,3) [circle,draw=blue!50,fill=blue!20] {};
\foreach \u in {0,1.5}
\foreach \v in {0,1.5,3}
{
\draw[line width=2pt,color=red] (6+\v,\u) -- (6.5+\v,\u);
\draw[line width=2pt,color=red] (6+\v,\u) -- (5.5+\v,\u+0.5);
\draw[line width=2pt,color=red] (6.5+\v,\u) -- (6.5+\v,\u+0.5);
\draw[line width=2pt,color=red] (5+\v,\u+0.5) -- (5.5+\v,\u+0.5);
\draw[line width=2pt,color=red] (5.5+\v,\u+0.5) -- (5.5+\v,\u+1);
\draw[line width=2pt,color=red] (6+\v,\u+1) -- (6.5+\v,\u+0.5);
\draw[line width=2pt,color=red] (5.5+\v,\u+1) -- (6+\v,\u+1);
\draw[line width=2pt,color=red] (6+\v,\u+1) -- (6+\v,\u+1.5);
\draw[line width=2pt,color=red] (5.5+\v,\u+1) -- (5+\v,\u+1.5);
}
\draw[line width=2pt,color=red] (5,0) -- (5,0.5);
\draw[line width=2pt,color=red] (5,1.5) -- (5,2);
\draw[line width=2pt,color=red] (6,3) -- (6.5,3);
\draw[line width=2pt,color=red] (7.5,3) -- (8,3);
\draw[line width=2pt,color=red] (9,3) -- (9.5,3);
\draw[line width=2pt,color=red] (9.5,3) -- (10,2.5);
\draw[line width=2pt,color=red] (9.5,1.5) -- (10,1);
\draw[line width=2pt,color=red] (9.5,0.5) -- (10,0.5);
\draw[line width=2pt,color=red] (9.5,2) -- (10,2);
\draw[line width=2pt,color=red] (10,0.5) -- (10,1);
\draw[line width=2pt,color=red] (10,2) -- (10,2.5);
\node at (2,-0.5){A(2)=7};
\node at (7.5,-0.5){A(1)=3};
\end{tikzpicture}
\end{center}
\caption{Examples of $k$-area in $\mathscr{T}$.}
\end{figure}
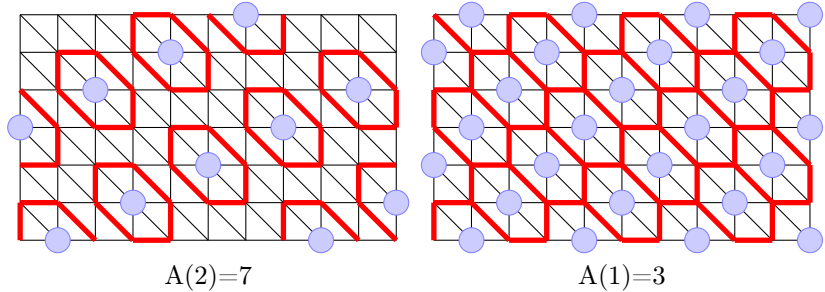

\begin{prop}
Let $G$ be a vertex-transitive graph with finite degree, and $i$ be a positive integer. If $X_{i}$ is an $i$-packing in $G$, then
$$d(X_{i})\le \frac{1}{A(i)}.$$
\end{prop}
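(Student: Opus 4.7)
The plan is a fractional charging argument. To each pair $(y,v)$ with $y\in X_i$ and $v\in V(G)$ I assign a weight $w(y,v)$ chosen so that summing $w(y,\cdot)$ over $v$ reproduces $A(y,i)$, while summing $w(\cdot,v)$ over $y\in X_i$ stays at most $1$. Writing $k=\lfloor i/2\rfloor$, I set $w(y,v)=1$ when $v\in B_k(y)$; for odd $i$ and $v\in\partial B_{k+1}(y)$ I set $w(y,v)=(|N(v)\cap B_k(y)|+|N(v)\cap\partial B_{k+1}(y)|/2)/\deg(v)$; and $w(y,v)=0$ otherwise. By construction, $\sum_v w(y,v)$ is exactly the expression defining $A(y,i)$, which by vertex-transitivity equals $A(i)$ for every $y$.

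Given the pointwise inequality $\sum_{y\in X_i}w(y,v)\le 1$, the density bound follows from a standard ball-count. Fix $x_0\in V(G)$; every vertex receiving positive weight from some $y\in X_i\cap B_\ell(x_0)$ lies in $B_{\ell+\lceil i/2\rceil}(x_0)$, so double-counting gives $A(i)\cdot|X_i\cap B_\ell(x_0)|\le|B_{\ell+\lceil i/2\rceil}(x_0)|$. Dividing by $|B_\ell(x_0)|$ and taking $\limsup$ over $\ell$, and using that on the lattices under consideration $|B_{\ell+c}(x_0)|/|B_\ell(x_0)|$ tends to $1$, one obtains $d(X_i)\le 1/A(i)$.

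The real content is proving the pointwise inequality. For even $i$ it is immediate: the packing condition $d(y,y')>2k$ forces the balls $B_k(y)$ to be pairwise disjoint, so each $v$ receives weight from at most one $y$. For $i=2k+1$ odd, I split cases. If $v\in B_k(y)$ for some $y$, that $y$ is unique by disjointness; moreover $v$ cannot lie on any $\partial B_{k+1}(y')$ for $y'\ne y$, since this would force $d(y,y')\le k+(k+1)=i$, violating the packing. Hence the sum collapses to $1$. Otherwise $v$ only appears on spheres $\partial B_{k+1}(y)$; expanding the weight and swapping the order of summation reduces the inequality to showing that every neighbor $u\in N(v)$ contributes at most $1$ to $\deg(v)\sum_y w(y,v)$.

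The delicate step, which I expect to be the main obstacle, is this per-neighbor bound. A neighbor $u$ with $d(u,y)=k$ and $d(v,y)=k+1$ is witnessed by at most one $y$ (else $d(y,y')\le 2k<i$), and such a $u$ cannot simultaneously serve a different $y'$ with $d(u,y')=k+1$ and $d(v,y')=k+1$ (else $d(y,y')\le k+(k+1)=i$); so whenever this ``interior'' contribution appears it equals exactly $1$ and closes the bound. The remaining case has $u$ lying on $\partial B_{k+1}(y_1),\dots,\partial B_{k+1}(y_m)$ for several packing vertices, all at distance $k+1$ from both $u$ and $v$, and each pair then satisfies $d(y_a,y_b)=2k+2$. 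The factor $1/2$ in the definition of $A(i)$ is calibrated precisely to absorb the double counting between such shared witnesses, and it is here that vertex-transitivity together with the geometry of the graph is used to ensure the half-weights still sum to at most $1$, which is exactly what makes our variant of the $k$-area tighter than the one of Fiala \emph{et al.}
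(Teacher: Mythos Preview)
Your charging argument via the weights $w(y,v)$ is precisely the paper's approach, only more carefully formalized: the paper sketches the same three ingredients---disjointness of the inner balls $B_k(y)$, the fractional contribution $|N(u)\cap B_k(y)|/\deg(u)$ from sphere vertices, and the assertion that an edge $uv$ with both endpoints on $\partial B_{k+1}(\cdot)$ lies in at most two such spheres---stating the last one without justification just as you do. Your flagging of that final step as the crux is exactly right, and your remark that it rests on the specific geometry of the lattice is apt: for $\mathscr{H}$ and $\mathbb{Z}^2$ the sphere--sphere term vanishes because they are bipartite, while for $\mathscr{T}$ one verifies directly that an edge has only two common neighbours on any sphere and these are adjacent. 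The paper's proof is no more complete than yours on this point; your explicit invocation of polynomial ball growth to pass from the local inequality to the density bound is a detail the paper omits entirely.
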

\begin{proof}
Observe that for arbitrary vertices $x$ and $y$ of an $i$-packing $X_{i}$, the sets $B_{\lfloor i/2 \rfloor}(x)$ and $B_{\lfloor i/2 \rfloor}(y)$ are disjoint, since the vertices $x$ and $y$ are at distance greater than $i$. Then $d(X_{i})<1/|B_{\lfloor i/2 \rfloor}(x)|$. Assume that $i$ is an odd number and let $u$ be a vertex at distance $\lceil i/2 \rceil$ from $x$, then $u$ has $\text{deg}(u)$ neighbors. If among these $\text{deg}(u)$ neighbors, $k$ neighbors are in $B_{\lfloor i/2 \rfloor}(x)$ then $u$ can be at distance $\lceil i/2 \rceil$ from only $k/\text{deg}(u)$ vertices in $X_{i}$. Hence $d(X_{i})<1/(|B_{\lfloor i/2 \rfloor}(x)|+\sum\limits_{\substack{u\in\partial B_{\lceil k/2 \rceil}(x)}}\frac{|N(u)\cap B_{\lfloor k/2 \rfloor}(x)|}{deg(u)} ) $.\newline
Moreover $u$ and a neighbor $v$ of $u$ in $B_{\lfloor i/2 \rfloor}(x)$ cannot be both at distance $\lceil i/2 \rceil$ from more than 2 vertices in $X_{i}$, therefore $uv$ can only belong to two spheres of radius $\lceil i/2 \rceil$ centered at a vertex in $X_{i}$. Hence it follows that $d(X_{i})<1/A(i)$.
\end{proof}
\begin{cor}
Let $G$ be a vertex-transitive graph with finite degree, and $i$ be a positive integer. If $G$ has a finite $S$-packing chromatic number, then
$$\sum^{\infty}_{i=1}\frac{1}{A(s_{i})}\ge 1.$$
\end{cor}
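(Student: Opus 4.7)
The plan is to exploit the fact that a finite $S$-packing coloring partitions $V(G)$ into $s_i$-packings, each of which has density bounded by Proposition 2.2, and then use subadditivity of density over a finite partition to recover the total density $1$ of the whole vertex set.

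Concretely, I would first fix $k=\chi^{S}_{\rho}(G)<\infty$ and let $c$ be an $S$-packing $k$-coloring. By definition, the color classes $c_{1},\ldots,c_{k}$ form a partition of $V(G)$, and each $c_{i}$ is an $s_{i}$-packing. Since $V(G)\cap B_{l}(x)=B_{l}(x)$ for every ball, we have $d(V(G))=1$. Next I would verify subadditivity of $d$ over this finite partition: for any ball $B_{l}(x)$,
\begin{equation*}
\frac{|V(G)\cap B_{l}(x)|}{|B_{l}(x)|}=\sum_{i=1}^{k}\frac{|c_{i}\cap B_{l}(x)|}{|B_{l}(x)|}\le \sum_{i=1}^{k}\max_{y\in V}\frac{|c_{i}\cap B_{l}(y)|}{|B_{l}(y)|},
\end{equation*}
so taking the $\max$ over $x$ and then the $\limsup$ as $l\to\infty$ (and using that the limit superior of a finite sum is at most the sum of the limit superiors) gives $1=d(V(G))\le\sum_{i=1}^{k}d(c_{i})$.

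Then I would apply Proposition 2.2 to each color class: $d(c_{i})\le 1/A(s_{i})$ for every $i\in\{1,\ldots,k\}$. Combining this with the previous inequality yields
\begin{equation*}
1\le \sum_{i=1}^{k}\frac{1}{A(s_{i})}\le \sum_{i=1}^{\infty}\frac{1}{A(s_{i})},
\end{equation*}
the last inequality being trivial since each $1/A(s_{i})$ is nonnegative. This is the desired conclusion.

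The only delicate point is the subadditivity step, since the density is defined with a $\limsup\max$ rather than a simple limit; the argument above handles it because the sum is finite (only $k$ terms), and both $\max$ and $\limsup$ commute with finite sums in the required direction. Apart from that, the proof is essentially bookkeeping on top of Proposition 2.2.
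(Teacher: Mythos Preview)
Your proof is correct and is precisely the argument the paper has in mind: the corollary is stated without proof immediately after Proposition~2.2, and your write-up simply makes explicit the intended density/subadditivity reasoning. The one point worth flagging is that the paper's $\max_{x\in V}$ is really a supremum over an infinite vertex set, but this changes nothing in your chain of inequalities.
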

\begin{cor}
Let $G$ be a vertex-transitive graph with finite degree, and $i$ be a positive integer. If $G$ has a finite $(d,n)$-packing chromatic number, then
$$\sum^{\infty}_{i=d}\frac{n}{A(i)}\ge 1.$$
\end{cor}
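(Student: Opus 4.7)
The plan is to derive this corollary as a direct specialization of the preceding corollary (the general $S$-packing density bound) to the particular sequence defining the $(d,n)$-packing chromatic number.

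First I would observe that a finite $(d,n)$-packing chromatic number is, by definition, a finite $S$-packing chromatic number for the specific nondecreasing sequence $S=(s_i)_{i\in\mathbb{N}^*}$ with $s_i=d+\lfloor(i-1)/n\rfloor$. Hence the hypothesis of the previous corollary is satisfied, giving
\[
\sum_{i=1}^{\infty}\frac{1}{A(s_i)}\ \ge\ 1.
\]

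Next I would reindex this sum according to the value taken by $s_i$. For each integer $k\ge d$, the equation $d+\lfloor(i-1)/n\rfloor=k$ holds precisely for the $n$ consecutive indices $i=n(k-d)+1,\,n(k-d)+2,\,\ldots,\,n(k-d)+n$. Consequently the sequence $(s_i)$ takes each value $k\ge d$ exactly $n$ times, so grouping terms by the common value of $A(s_i)$ yields
\[
\sum_{i=1}^{\infty}\frac{1}{A(s_i)}\ =\ \sum_{k=d}^{\infty}\frac{n}{A(k)}.
\]
Combining the two displays gives $\sum_{k=d}^{\infty} n/A(k)\ge 1$, which is the desired inequality (up to renaming the summation index).

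There is no real obstacle here: the only thing to verify carefully is the counting step, namely that $\lfloor(i-1)/n\rfloor$ takes every nonnegative integer value exactly $n$ times as $i$ ranges over $\mathbb{N}^*$, which is immediate from the definition of the floor function. The rest is purely formal manipulation of a convergent (or possibly divergent, in which case the bound holds trivially) series of nonnegative terms, so no convergence issues arise.
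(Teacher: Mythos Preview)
Your proof is correct and matches the paper's intended approach: the paper states this corollary without proof, immediately after the general $S$-packing density corollary, so it is clearly meant to follow by precisely the specialization and reindexing you carry out. Your counting of the $n$ indices with $s_i=k$ is the only nontrivial step, and you handle it correctly.
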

An $i$-packing $X_{i}$ is called a \textit{maximized $i$-packing} if for any other $i$-packing $X^{'}_{i}$, $d(X_{i})\ge d(X^{'}_{i})$.
\subsection{Density of an $i$-packing in the hexagonal lattice}
\begin{prop}
Let $\mathscr{H}$ be the hexagonal lattice, $x$ be a vertex in $V(\mathscr{H})$ and $n$ be a positive integer. Then
\begin{enumerate}
\setlength{\parskip}{.05cm}
\item $|\partial B_{n}(x)|=3n$;
\item $|B_{n}(x)|=\frac{3}{2}n^{2}+\frac{3}{2}n +1$.
\end{enumerate}
\label{prop1}
\end{prop}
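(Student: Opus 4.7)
The plan is to prove (1) first and derive (2) immediately by telescoping: $|B_n(x)|=1+\sum_{k=1}^{n}|\partial B_k(x)|=1+3\sum_{k=1}^{n}k=\frac{3}{2}n^2+\frac{3}{2}n+1$.

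For (1), I would introduce coordinates on $\mathscr{H}$ by identifying $V(\mathscr{H})$ with $\{(a,b,c)\in\mathbb{Z}^3 \mid a+b+c\in\{0,1\}\}$, placing $x$ at $(0,0,0)$, and declaring a vertex with $a+b+c=0$ adjacent to $(a+1,b,c)$, $(a,b+1,c)$, $(a,b,c+1)$, and a vertex with $a+b+c=1$ adjacent to $(a-1,b,c)$, $(a,b-1,c)$, $(a,b,c-1)$. This description realizes the $3$-regular bipartite honeycomb, with the two color classes being $a+b+c=0$ and $a+b+c=1$. The key step is then the distance formula $d_{\mathscr{H}}((0,0,0),(a,b,c))=|a|+|b|+|c|$: the upper bound is given by an explicit monotone path that changes one coordinate at a time, and the lower bound follows because each edge alters exactly one of $|a|,|b|,|c|$ by exactly $1$, so $|a|+|b|+|c|$ grows by at most $1$ along any path.

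Granted the distance formula, $\partial B_n(x)$ is in bijection with the integer triples $(a,b,c)$ satisfying $|a|+|b|+|c|=n$ together with $a+b+c=0$ when $n$ is even and $a+b+c=1$ when $n$ is odd. I would enumerate these by splitting on the \emph{signature} of $(a,b,c)$, that is, on how many entries are positive, zero, or negative. In each signature the two linear constraints pin down the sum of the positive entries and the sum of the absolute values of the negative entries, so the count reduces to that of ordered compositions of a small integer into a prescribed number of positive parts, multiplied by the number of choices for which coordinate positions carry which sign. Summing across signatures yields exactly $3n$ in both parity regimes.

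The main obstacle I expect is the bookkeeping in the signature enumeration, particularly for small $n$ where several sub-cases degenerate (for instance $n=1$, where the only surviving signature is ``one positive, two zeros''). Should that case analysis prove unwieldy, a natural alternative is an induction on $n$ that exploits the $3$-fold rotational symmetry of $\mathscr{H}$ about $x$: one decomposes $\partial B_n(x)$ into three congruent arcs and shows that each arc gains exactly one vertex when passing from level $n$ to level $n+1$, again giving $|\partial B_n(x)|=3n$.
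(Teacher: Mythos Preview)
Your proposal is correct, and for part (2) it is essentially the paper's argument: the paper also uses $|B_{n+1}(x)|=|B_n(x)|+|\partial B_{n+1}(x)|$ together with part (1), just phrased as induction rather than telescoping.

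For part (1), however, your route is genuinely different and considerably more thorough. The paper's entire proof of (1) is the single sentence ``As the set $\partial B_{n}(x)$ always contains three more vertices than $\partial B_{n-1}(x)$, then $|\partial B_{n}(x)|=3n$,'' with no justification of the recurrence. Your primary approach---the cube-coordinate model $\{(a,b,c)\in\mathbb{Z}^3:a+b+c\in\{0,1\}\}$, the distance formula $d=|a|+|b|+|c|$, and the signature enumeration---actually establishes the count from first principles and is independent of any inductive step. What this buys you is a self-contained proof that does not rely on an unproved geometric claim; what it costs is the case analysis you anticipate, though as you note the parity constraint $a+b+c\equiv n\pmod 2$ makes each signature a one-parameter family and the sums collapse cleanly to $3n$. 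Your fallback alternative (decompose $\partial B_n(x)$ into three congruent arcs via the $3$-fold symmetry and show each arc grows by one) is precisely a fleshed-out version of what the paper asserts without proof, and would be the shorter way to match the paper's intended level of detail.
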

\begin{proof}
1. As the set $\partial B_{n}(x)$ always contains three more vertices than $\partial B_{n-1}(x)$, then $|\partial B_{n}(x)|=3n$.\newline
2. The graph $\mathscr{H}$ is 3-regular and so $|B_{1}(x)|=4$. Suppose the statement is true for $n$, then $|B_{n+1}(x)|=|B_{n}(x)|+|\partial B_{n+1}(x)|=\frac{3}{2}n^{2}+\frac{3}{2}n +1+3(n+1)=(\frac{3}{2}n^{2}+\frac{3}{2}+3n)+(\frac{3}{2}n+\frac{3}{2})+1=\frac{3}{2}(n+1)^{2}+\frac{3}{2}(n+1) +1$
and the result follows by induction.
\end{proof}
\begin{prop}
Let $\mathscr{H}$ be the hexagonal lattice and $k$ be a positive integer. Then
\begin{enumerate}
\setlength{\parskip}{.05cm}
\item $A(2k)=\frac{3}{2}k^{2}+\frac{3}{2}k +1$;
\item $A(4k+1)=6k^{2}+6k +2$;
\item $A(4k+3)=6k^{2}+12k +6$.
\end{enumerate}
\end{prop}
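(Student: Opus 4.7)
The plan is to combine the ball-size formulas of Proposition~\ref{prop1} with a count of edges between consecutive spheres around $x$, exploiting the bipartiteness of $\mathscr{H}$ to make the definition of $A(\cdot)$ collapse considerably. The even case is immediate: the definition of the $k$-area reads $A(2k)=|B_k(x)|$, and Proposition~\ref{prop1} directly supplies the value $\tfrac{3}{2}k^{2}+\tfrac{3}{2}k+1$.

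For the odd arguments, the starting observation is that $\mathscr{H}$ is bipartite, so any two adjacent vertices lie in consecutive spheres around $x$. In particular, for $u\in\partial B_{\lceil k/2\rceil}(x)$ one has $|N(u)\cap\partial B_{\lceil k/2\rceil}(x)|=0$, and every neighbor of $u$ that belongs to $B_{\lfloor k/2\rfloor}(x)$ actually lies on $\partial B_{\lfloor k/2\rfloor}(x)$. Since $\deg(u)=3$, the definition of $A(k)$ collapses for odd $k$ to
\[
A(k)=|B_{\lfloor k/2\rfloor}(x)|+\tfrac{1}{3}\,e_{\lfloor k/2\rfloor},
\]
where $e_n$ denotes the number of edges joining $\partial B_n(x)$ to $\partial B_{n+1}(x)$.

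To evaluate $e_n$, I would double-count edges incident to $\partial B_n(x)$: each of its $3n$ vertices (Proposition~\ref{prop1}) contributes three edges, all of which, again by bipartiteness, go to $\partial B_{n-1}(x)$ or to $\partial B_{n+1}(x)$. This yields the recurrence $e_{n-1}+e_n=9n$ for $n\ge 1$, together with the base value $e_0=3$. Subtracting consecutive instances gives $e_{n+2}=e_n+9$, whence $e_{2k}=9k+3$ and $e_{2k+1}=9k+6$.

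Substituting $n=2k$ into the reduced formula produces $A(4k+1)=(6k^{2}+3k+1)+(3k+1)=6k^{2}+6k+2$, and substituting $n=2k+1$ gives $A(4k+3)=(6k^{2}+9k+4)+(3k+2)=6k^{2}+12k+6$ after expanding $|B_{2k+1}(x)|$ via Proposition~\ref{prop1}. The only step that uses structural information about $\mathscr{H}$ beyond vertex-transitivity is the vanishing of the in-sphere term, which reduces to the bipartiteness of the lattice; the remainder is routine arithmetic on the recurrence.
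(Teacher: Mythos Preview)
Your argument is correct. Both you and the paper begin identically: for even argument the definition gives $A(2k)=|B_k(x)|$ directly, and for odd argument bipartiteness kills the in-sphere term so that only the count of edges from $\partial B_{\lceil k/2\rceil}(x)$ back into $\partial B_{\lfloor k/2\rfloor}(x)$ matters. Where the two approaches diverge is in how that edge count is obtained. The paper asserts a structural classification of the vertices on the outer sphere: for $A(4k+1)$ it states that exactly $3k$ vertices of $\partial B_{2k+1}(x)$ have two inward neighbors and the remaining $3k+3$ have one (and similarly $3k$ versus $3k+6$ for $A(4k+3)$), then sums. You instead double-count all edges incident to $\partial B_n(x)$, obtaining the recurrence $e_{n-1}+e_n=9n$ with $e_0=3$, and solve it to get $e_{2k}=9k+3$, $e_{2k+1}=9k+6$. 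Your route is arguably tighter: it uses only the already-proved sphere sizes, $3$-regularity, and bipartiteness, whereas the paper's degree classification on the sphere is stated without justification. On the other hand, the paper's version records finer geometric information (the exact split of the sphere into degree-$1$ and degree-$2$ vertices toward the inside), which your recurrence method bypasses entirely.
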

\begin{proof}
1. The first property results easily from Proposition \ref{prop1}.\newline
2. If $n=4k+1$, then Proposition \ref{prop1} gives $|B_{2k}(x)|=\frac{3}{2}(2k)^{2}+\frac{3}{2}(2k) +1=6k^{2}+3k+1$.
For every vertex $y$ in $\partial B_{2k+1}(x)$, $y$ has no neighbor in $\partial B_{2k+1}(x)$ other than itself, so $|N(y)\cap \partial B_{2k+1}(x)|=0$.
We have to distinguish two kinds of vertices: $3k$ vertices have two neighbors in $B_{2k}(x)$ and $|\partial B_{2k+1}(x)|-3k=3k+3$ vertices have one neighbor in $B_{2k}(x)$.
Therefore, $|A(4k+1)|=6k^{2}+3k+1+\frac{6k}{3}+\frac{3k+3}{3}=6k^{2}+6k +2$.\newline
3. If $n=4k+3$, then Proposition \ref{prop1} gives $|B_{2k+1}(x)|=\frac{3}{2}(2k+1)^{2}+\frac{3}{2}(2k+1) +1=6k^{2}+9k+4$.
For every vertex $y$ in $\partial B_{2k+2}(x)$, $y$ has no neighbor in $\partial B_{2k+2}(x))$ other than itself, so $|N(y)\cap \partial B_{2k+2}(x)|=0$.
We have to distinguish two kinds of vertices: $3k$ vertices have two neighbors in $B_{2k+2}(x)$ and $|\partial B_{2k+2}(x)|-3k=3k+6$ vertices have one neighbor in $B_{2k}(x)$.
Hence, we have $|A(4k+3)|=6k^{2}+9k+4+\frac{6k}{3}+\frac{3k+6}{3}=6k^{2}+12k +6$.
\end{proof}
Note that this result appeared in the article of Goddard and Xu \cite{GOH2012}.
\subsection{Density of an $i$-packing in the square lattice}
\begin{prop}
Let $\mathbb{Z}^{2}$ be the square lattice, $x$ be a vertex in $V(\mathbb{Z}^{2})$ and $n$ be a positive integer. Then
\begin{enumerate}
\setlength{\parskip}{.05cm}
\item $|\partial B_{n}(x)|=4n$;
\item $|B_{n}(x)|=2 n^{2}+2 n +1$.
\end{enumerate}
\label{prop2}
\end{prop}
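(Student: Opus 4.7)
The plan is to follow essentially the same template as in Proposition~\ref{prop1} for the hexagonal lattice, adapting to the degree-$4$ setting of $\mathbb{Z}^2$. Since distance in $\mathbb{Z}^2 = P_\infty \square P_\infty$ coincides with the $\ell_1$ (Manhattan) distance, the sphere $\partial B_n(x)$ is exactly the set of vertices at $\ell_1$-distance $n$ from $x$, which forms a ``diamond'' shape.

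For part~1, the plan is a short induction on $n$. The base case $n=1$ is immediate since $\mathbb{Z}^2$ is $4$-regular. For the induction step, I would argue (as in the hexagonal case) that $\partial B_n(x)$ contains exactly $4$ more vertices than $\partial B_{n-1}(x)$: each ``edge'' of the diamond $\partial B_{n-1}(x)$ elongates by one vertex when moving outward by one unit, contributing one extra vertex per side, for a total of $4$ additional vertices. This gives $|\partial B_n(x)| = 4n$.

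For part~2, I would again use induction on $n$, together with the identity $|B_{n+1}(x)| = |B_n(x)| + |\partial B_{n+1}(x)|$ and the formula from part~1. The base case is $|B_1(x)| = 5 = 2(1)^2 + 2(1) + 1$. Assuming $|B_n(x)| = 2n^2 + 2n + 1$, the induction step is a routine calculation
\[
|B_{n+1}(x)| = 2n^2 + 2n + 1 + 4(n+1) = 2(n+1)^2 + 2(n+1) + 1.
\]

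There is no real obstacle here; the only thing that requires a small amount of care is justifying the recurrence $|\partial B_n(x)| = |\partial B_{n-1}(x)| + 4$ cleanly (for instance by exhibiting an explicit bijection between the four ``sides'' of the diamond at level $n-1$ and the corresponding sides at level $n$, plus the four ``corners'' $(\pm n,0), (0,\pm n)$). Once that is settled, both claims follow by induction exactly as in the hexagonal case.
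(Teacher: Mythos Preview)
Your proposal is correct and matches the paper's own proof essentially line for line: the paper also asserts that $\partial B_n(x)$ has four more vertices than $\partial B_{n-1}(x)$ to get $|\partial B_n(x)|=4n$, and then proves part~2 by induction from $|B_1(x)|=5$ via $|B_{n+1}(x)|=|B_n(x)|+|\partial B_{n+1}(x)|$. Your extra remark about justifying the recurrence via the diamond sides and corners is a bit more explicit than what the paper writes, but the approach is identical.
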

\begin{proof}
1. As the set $\partial B_{n}(x)$ always contains four more vertices than $\partial B_{n-1}(x)$, then $|\partial B_{n}(x)|=4n$.\newline
2. The graph $\mathbb{Z}^{2}$ is 4-regular and so $|B_{1}(x)|=5$. Suppose the statement is true for $n$, then $|B_{n+1}(x)|=|B_{n}(x)|+|\partial B_{n+1}(x)|=2 n^{2}+2 n +1+4n+4= 2 (n+1)^{2}+ 6n+5-4n-2= 2 (n+1)^{2} +2(n+1)+1$
and the result follows by induction.
\end{proof}
\begin{prop}
Let $\mathbb{Z}^{2}$ be the square lattice and $k$ be a positive integer. Then
\begin{enumerate}
\setlength{\parskip}{.05cm}
\item $A(2k)=2k^{2}+2k+1$;
\item $A(2k+1)=2k^{2}+4k +2$.
\end{enumerate}
\end{prop}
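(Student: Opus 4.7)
The plan is to invoke Definition 1.1 directly and reduce each case to a counting argument that uses Proposition 2.7 together with the bipartiteness of $\mathbb{Z}^2$.

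For Part 1, the case $k$ even in the definition of $A$ gives $A(2k)=|B_k(x)|$, so the formula $A(2k)=2k^2+2k+1$ follows immediately from Proposition \ref{prop2}(2). This step should be stated in one line.

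For Part 2, I would apply the odd case of the definition with $i=2k+1$, so that $\lfloor i/2\rfloor=k$ and $\lceil i/2\rceil=k+1$, and $\deg(u)=4$ for every $u$. The first key observation is that $\mathbb{Z}^2$ is bipartite with bipartition given by the parity of $|a|+|b|$, so any two adjacent vertices lie at distances differing by exactly one from $x$; in particular $|N(u)\cap\partial B_{k+1}(x)|=0$ for every $u\in\partial B_{k+1}(x)$. Thus the second summand in the numerator vanishes and only $|N(u)\cap B_k(x)|$ must be counted.

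Next I would split the sphere $\partial B_{k+1}(x)$, which has $4(k+1)$ vertices by Proposition \ref{prop2}(1), into the four ``axis'' vertices (of the form $(\pm(k+1),0)$ or $(0,\pm(k+1))$) and the $4(k+1)-4=4k$ remaining ``side'' vertices having both coordinates nonzero. An axis vertex has a unique neighbor in $B_k(x)$ (the one obtained by decreasing the nonzero coordinate by one in absolute value), while a side vertex $(a,b)$ with $|a|+|b|=k+1$ and $a,b\neq 0$ has exactly two such neighbors (obtained by pulling each coordinate one step toward $0$). Hence
\begin{equation*}
\sum_{u\in\partial B_{k+1}(x)}|N(u)\cap B_k(x)|=4\cdot 1+4k\cdot 2=8k+4.
\end{equation*}
Substituting in the definition gives
\begin{equation*}
A(2k+1)=|B_k(x)|+\frac{8k+4}{4}=2k^2+2k+1+2k+1=2k^2+4k+2,
\end{equation*}
as claimed. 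The only step requiring care is the axis/side case split, which is essentially a one-line geometric observation, so I do not expect any real obstacle; the whole proof should be about as short as the one given for the hexagonal lattice.
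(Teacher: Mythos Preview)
Your proof is correct and follows essentially the same approach as the paper: both use Proposition~\ref{prop2} for Part~1, and for Part~2 both observe that vertices on $\partial B_{k+1}(x)$ have no neighbors on that sphere, then split the $4(k+1)$ sphere vertices into $4$ vertices with one neighbor in $B_k(x)$ and $4k$ vertices with two, arriving at the same computation. Your version is slightly more explicit (invoking bipartiteness and giving coordinates for the axis/side split), but the argument is identical in substance.
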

\begin{proof}
1. The first property results easily from Proposition \ref{prop2}.\newline
2. If $n=2k+1$, then Proposition \ref{prop2} gives $|B_{k}(x)|=k^{2}+2k +1$.
For every vertex $y$ in $\partial B_{k+1}(x)$, $y$ has no neighbor in $\partial B_{k+1}(x)$ other than itself, so $|N(y)\cap \partial B_{k+1}(x)|=0$.
We have to distinguish two kinds of vertices: $4k$ vertices have two neighbors in $B_{k}(x)$ and 4 vertices have one neighbor in $B_{k}(x)$.
Hence, we have $|A(2k+1)|=2k^{2}+2k+1+2\frac{4k}{4}+\frac{4}{4}=2k^{2}+4k +2$.
\end{proof}
Note that this result appeared implicitly in the article of Fiala et al. \cite{FI2009}.
\subsection{Density of an $i$-packing in the triangular lattice}
\begin{prop}
Let $\mathscr{T}$ be the triangular lattice, $x$ be a vertex in $V(\mathscr{T})$ and $n$ be a positive integer. Then
\begin{enumerate}
\setlength{\parskip}{.05cm}
\item $|\partial B_{n}(x)|=6n$;
\item $|B_{n}(x)|=3 n^{2}+3 n +1$.
\end{enumerate}
\label{prop3}
\end{prop}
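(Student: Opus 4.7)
The plan is to mimic the structure of the proofs of Propositions~\ref{prop1} and~\ref{prop2}: first establish the sphere formula by a ``shell-grows-by-a-constant'' argument, then derive the ball formula by induction using $|B_{n}(x)|=|B_{n-1}(x)|+|\partial B_{n}(x)|$.

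For part~1, I would argue that in the triangular lattice $\mathscr{T}$, the sphere $\partial B_{n}(x)$ is a regular hexagonal ring around $x$ whose six sides each contain $n$ vertices, so moving from radius $n-1$ to radius $n$ adds exactly $6$ vertices to the sphere, one per side. Combined with the base case $|\partial B_{1}(x)|=6$ (which is immediate, since $\mathscr{T}$ is $6$-regular), a one-line induction then yields $|\partial B_{n}(x)|=6n$.

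For part~2, the induction is routine given part~1. The base case is $|B_{1}(x)|=7=3(1)^{2}+3(1)+1$. For the inductive step, assuming $|B_{n}(x)|=3n^{2}+3n+1$, one writes
\[
|B_{n+1}(x)| \;=\; |B_{n}(x)|+|\partial B_{n+1}(x)| \;=\; 3n^{2}+3n+1+6(n+1),
\]
and a direct expansion confirms this equals $3(n+1)^{2}+3(n+1)+1$.

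The only step requiring any care is the claim that each sphere in $\mathscr{T}$ grows by exactly $6$ vertices relative to its predecessor. This is a geometric fact about $\mathscr{T}$ rather than a calculation: it can be justified informally by appealing to the hexagonal shape of balls in the triangular lattice (as visible in Figure~1), or formally by parameterizing the vertices of $\mathscr{T}$ in the standard triangular coordinates and checking that each of the six ``edges'' of the hexagonal ring of radius $n$ contains exactly one more lattice point than the corresponding edge at radius $n-1$. Once this is accepted, the rest of the argument is identical in form to the hexagonal and square cases already treated.
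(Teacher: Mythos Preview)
Your proposal is correct and matches the paper's own proof essentially line for line: the paper asserts that $\partial B_{n}(x)$ contains six more vertices than $\partial B_{n-1}(x)$ to get $|\partial B_{n}(x)|=6n$, then uses $|B_{1}(x)|=7$ and the same inductive identity $|B_{n+1}(x)|=|B_{n}(x)|+|\partial B_{n+1}(x)|$ to obtain $3(n+1)^{2}+3(n+1)+1$. If anything, you give more justification for the ``grows by six'' step than the paper does.
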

\begin{proof}
1. As the set $\partial B_{n}(x)$ always contains six more vertices than $\partial B_{n-1}(x)$, then $|\partial B_{n}(x)|=6n$.\newline
2. The graph $\mathscr{T}$ is 6-regular and so $|B_{1}(x)|=7$. Suppose the statement is true for $n$, then $|B_{n+1}(x)|=|B_{n}(x)|+|\partial B_{n+1}(x)|=3 n^{2}+3 n +1+6n+6= 3 (n^{2}+1)+ 3n+1+6-3= 3 (n^{2}+1)+3(n+1)+1$
and the result follows by induction.
\end{proof}
\begin{prop}
Let $\mathscr{T}$ be the triangular lattice and $k$ be a positive integer. Then
\begin{enumerate}
\setlength{\parskip}{.05cm}
\item $A(2k)=3k^{2}+3k+1$;
\item $A(2k+1)=3k^{2}+6k +3$.
\end{enumerate}
\end{prop}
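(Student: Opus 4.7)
The plan is to apply the definition of $A(k)$ directly, with the two cases $k=2k$ (even) and $k=2k+1$ (odd) handled separately. Part~1 is essentially free: for $i=2k$, the definition collapses to $A(2k)=|B_k(x)|$, and Proposition~\ref{prop3} immediately yields $3k^2+3k+1$. All the work is in Part~2.

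For Part~2, I would set $i=2k+1$ so that $\lfloor i/2\rfloor=k$ and $\lceil i/2\rceil=k+1$, write
\[
A(2k+1)=|B_k(x)|+\sum_{u\in\partial B_{k+1}(x)}\frac{|N(u)\cap B_k(x)|+|N(u)\cap\partial B_{k+1}(x)|/2}{6},
\]
using that $\mathscr{T}$ is $6$-regular, and reduce the sum to a two-term count based on the geometric role of $u$ on the sphere. The key step is to classify the $6(k+1)$ vertices of $\partial B_{k+1}(x)$ into two groups: the $6$ \emph{corner} vertices (one at each of the six extremal directions) and the $6k$ \emph{edge} vertices lying strictly between consecutive corners. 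I will argue that a corner vertex has exactly $1$ neighbor in $B_k(x)$ and $2$ neighbors in $\partial B_{k+1}(x)$ (the remaining $3$ lying in $\partial B_{k+2}(x)$), while an edge vertex has exactly $2$ neighbors in $B_k(x)$ and $2$ in $\partial B_{k+1}(x)$ (the remaining $2$ in $\partial B_{k+2}(x)$).

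To justify this classification cleanly, I would fix a coordinate basis for $\mathscr{T}$ in which the six unit directions are $\pm(1,0),\pm(0,1),\pm(1,-1)$ and use the standard formula $d((0,0),(a,b))=\max(|a|,|b|,|a+b|)$. The corners of $\partial B_{k+1}(x)$ are then $(k+1,0),(0,k+1),(k+1,-k-1)$ and their negatives, and a short case check on the six neighbor-offsets shows the claimed distance distribution for each vertex type; the $n=0,1,2$ sanity checks already recorded above the statement (and the example $A(1)=3$ in Figure~1) serve as the base cases of a straightforward induction on $k$ if one prefers a more formal argument.

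Once the classification is in hand, the sum reduces to
\[
6\cdot\frac{1+2/2}{6}+6k\cdot\frac{2+2/2}{6}=2+3k,
\]
and adding $|B_k(x)|=3k^2+3k+1$ gives $A(2k+1)=3k^2+6k+3$. The main obstacle is really just the sphere-structure lemma above; everything else is bookkeeping. I would likely state the neighbor counts as a small preliminary claim so that the computation reads in two lines.
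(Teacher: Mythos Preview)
Your proposal is correct and follows essentially the same approach as the paper: for Part~1 the paper also just invokes $A(2k)=|B_k(x)|$ and Proposition~\ref{prop3}, and for Part~2 the paper likewise classifies the $6(k+1)$ vertices of $\partial B_{k+1}(x)$ into the $6$ corners (one neighbor in $B_k(x)$) and the $6k$ edge vertices (two neighbors in $B_k(x)$), notes that every sphere vertex has exactly two neighbors on the sphere, and sums. Your write-up is somewhat more detailed in justifying the corner/edge classification via coordinates and the distance formula, whereas the paper simply asserts the neighbor counts; but the structure and arithmetic are identical.
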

\begin{proof}
1. The first property result easily from Proposition \ref{prop3}.\newline
2. If $n=2k+1$, then Proposition \ref{prop3} gives $|B_{k}(x)|=3k^{2}+3k +1$.
For every vertex $y$ in $\partial B_{k+1}(x)$, $y$ has two neighbors in $\partial B_{k+1}(x))$ other than itself, so $|N(y)\cap \partial B_{2k+1}(x)|=2$.
We have to distinguish two kinds of vertices: $6k$ vertices have two neighbors in $B_{k}(x)$ and six vertices have one neighbor in $B_{k}(x)$.
Hence, we have $|A(2k+1)|=3k^{2}+3k+1+\frac{6k+6}{6}+2\frac{6k}{6}+\frac{6}{6}=3k^{2}+6k +3$.
\end{proof}
\section{Subdivision of an $i$-packing in $\mathscr{H}$, $\mathbb{Z}^{2}$ and $\mathscr{T}$}
\subsection{Subdivision of a 2-packing in $\mathscr{H}$}
\begin{figure}[t]
\begin{center}
\begin{tikzpicture}
\foreach \x in {0,1,...,2}
\foreach \y in {0,1}
{
\draw (\x,\y) -- (\x-0.5,\y);
\draw (\x-0.5,\y) --(\x-0.5,\y+0.5);
\draw (\x-0.5,\y+0.5) -- (\x,\y+0.5);
\draw (\x,\y+0.5) -- (\x,\y+1);
\draw (\x,\y) -- (\x+0.5,\y);
\draw (\x+0.5,\y) -- (\x+0.5,\y+0.5);
\draw (\x+0.5,\y+0.5) -- (\x,\y+0.5);
\draw (\x,\y+0.5) -- (\x,\y+1);
}
\foreach \w in {0,1}
{
\draw (0.5+\w,2) -- (0.5+\w,2.5);
}
\draw (-0.5,2) --(2.5,2);
\foreach \v in {0,1,2}
\foreach \u in {0,2}
{
\node at (\u,\v) [circle,draw=blue!50,fill=blue!20] {};
\node at (\u+0.3,\v+0.3){2};
}
\node at (1,0.5) [circle,draw=blue!50,fill=blue!20] {};
\node at (1.3,0.8){2};
\node at (1,1.5) [circle,draw=blue!50,fill=blue!20] {};
\node at (1.3,1.8){2};

\foreach \x in {3.5,4.5,5.5, 6.5}
\foreach \y in {0,1}
{
\draw (\x,\y) -- (\x-0.5,\y);
\draw (\x-0.5,\y) --(\x-0.5,\y+0.5);
\draw (\x-0.5,\y+0.5) -- (\x,\y+0.5);
\draw (\x,\y+0.5) -- (\x,\y+1);
\draw (\x,\y) -- (\x+0.5,\y);
\draw (\x+0.5,\y) -- (\x+0.5,\y+0.5);
\draw (\x+0.5,\y+0.5) -- (\x,\y+0.5);
\draw (\x,\y+0.5) -- (\x,\y+1);
}
\foreach \w in {0,1,2}
{
\draw (4+\w,2) -- (4+\w,2.5);
}
\draw (3,2) --(7,2);
\foreach \u in {3.5,6.5}
\foreach \v in {0,1,2}
{
\node at (\u,\v) [circle,draw=red!50,fill=red!20] {};
\node at (\u+0.3,\v+0.3){3};
}
\node at (5,0.5) [circle,draw=red!50,fill=red!20] {};
\node at (5.3,0.8){3};
\node at (5,1.5) [circle,draw=red!50,fill=red!20] {};
\node at (5.3,1.8){3};

\foreach \x in {8,9,...,12}
\foreach \y in {0,1}
{
\draw (\x,\y) -- (\x-0.5,\y);
\draw (\x-0.5,\y) --(\x-0.5,\y+0.5);
\draw (\x-0.5,\y+0.5) -- (\x,\y+0.5);
\draw (\x,\y+0.5) -- (\x,\y+1);
\draw (\x,\y) -- (\x+0.5,\y);
\draw (\x+0.5,\y) -- (\x+0.5,\y+0.5);
\draw (\x+0.5,\y+0.5) -- (\x,\y+0.5);
\draw (\x,\y+0.5) -- (\x,\y+1);
}
\foreach \w in {0,1,2,3}
{
\draw (8.5+\w,2) -- (8.5+\w,2.5);
}
\draw (7.5,2) --(12.5,2);
\node at (8,0) [circle,draw=green!50,fill=green!20] {};
\node at (8.3,0.3){4};
\node at (7.5,1.5) [circle,draw=green!50,fill=green!20] {};
\node at (7.8,1.8){4};
\node at (9.5,1) [circle,draw=green!50,fill=green!20] {};
\node at (9.8,1.3){4};
\node at (11.5,0.5) [circle,draw=green!50,fill=green!20] {};
\node at (11.8,0.8){4};
\node at (11,2) [circle,draw=green!50,fill=green!20] {};
\node at (11.3,2.3){4};
\end{tikzpicture}
\end{center}
\caption{The sets $X_{2}$ (2-packing), $X_{3}$ (3-packing) and $X_{4}$ (4-packing) in $\mathscr{H}$.}
\end{figure}
Let $X_{2}$ be the (unique) maximized 2-packing in $\mathscr{H}$ represented in Figure 2. Note that $d(X_{2})=1/A(2)=1/4$ and remark that four 2-packings form a partition of $\mathscr{H}$ if we translate $X_{2}$ three times.\newline
The hexagonal lattice can be seen as a subgraph of the square lattice. In fact in Figure 2, $\mathscr{H}$ is represented as subgraph of the usual representation of the square lattice. In the square lattice, we can choose one vertex as the origin and all the other vertices can be nominated by a Cartesian coordinate. In every description of $\mathscr{H}$, our origin $(0,0)$ will be a vertex in the packing that we want to describe such that there is no edge between $(0,0)$ and $(0,1)$. In fact we illustrate packings with a figure in this subsection but it will not be the case after; we will use Cartesian coordinates in order to describe a packing.
For example, $X_{2}$ from Figure 2 is the set of vertices: $X_{2}=\{(2x+4y,x)|\ x\in \mathbb{Z},\ y\in \mathbb{Z} \}$.\newline
In Appendix A, we recall a proposition about distance in the hexagonal lattice from Jacko and Jendrol \cite{JA2005}.
This proposition is useful to verify that a set is an $i$-packing. These verifications are left to the reader in the remaining propositions.

\begin{prop}
Let $k>0$ and $m>0$ be integers. There exist:

   \begin{enumerate}
\setlength{\parskip}{.05cm}
	\item $k^{2}$ $(3k-1)$-packings that form a partition of $X_{2}$;
	\item $2k^{2}$ $(4k-1)$-packings that form a partition of $X_{2}$;
	\item two $(3\times 2k-1)$-packings that form a partition of a $(4k-1)$-packings from Point 2;
	\item $m^{2}$ $(3mk-1)$-packings that form a partition of a $(3k-1)$-packing from Point 1;
	\item $m^{2}$ $(4mk-1)$-packings that form a partition of a $(4k-1)$-packing from Point 2.
   \end{enumerate}
\label{sub1}
\end{prop}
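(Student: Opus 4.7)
The plan is to prove each item constructively by exhibiting an explicit partition of the relevant packing via residue classes of integer parameters, and then appealing to the Jacko--Jendrol distance formula (Appendix A) to verify the packing condition.

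For item 1, using $X_2 = \{(2x+4y, x) \mid x, y \in \mathbb{Z}\}$, I would partition $X_2$ according to $(x \bmod k, y \bmod k) \in \{0, \ldots, k-1\}^2$, yielding $k^2$ disjoint classes. Two distinct points of one class differ in Cartesian coordinates by $(2k\alpha + 4k\beta, k\alpha)$ with $(\alpha, \beta) \in \mathbb{Z}^2 \setminus \{(0,0)\}$. The verification reduces to checking with Jacko--Jendrol that any such nonzero vector has hexagonal-lattice length at least $3k$, with equality attained for the nearest-neighbor pairs $(\alpha, \beta) \in \{(\pm 1, 0), (0, \pm 1), (\pm 1, \mp 1)\}$. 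For item 2, I would use the finer classes $(x \bmod 2k, y \bmod k)$, giving $2k^2$ classes whose in-class Cartesian differences are $(4k\alpha + 4k\beta, 2k\alpha)$, and check that these vectors have length at least $4k$. For item 3, I would split a single class of item 2 by the parity of $\lfloor x / (2k) \rfloor$, doubling the period in one direction; the resulting two sub-packings then have in-class differences of hexagonal-lattice length at least $6k$.

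For items 4 and 5, I would iterate the construction of item 1 inside the sublattice underlying a fixed $(3k-1)$- or $(4k-1)$-packing. Each such packing is itself a two-dimensional integer-parameterized sublattice with generators $v_1, v_2$ realizing the minimum distance $3k$ (respectively $4k$), and taking residues mod $m$ of the two integer parameters yields $m^2$ subclasses whose minimum in-class distance is $3mk$ (respectively $4mk$), by the same Jacko--Jendrol verification applied to the rescaled generators.

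The only real obstacle is the distance verifications themselves. In each case the problem reduces to minimizing the hexagonal-lattice distance of a lattice combination $\alpha v_1 + \beta v_2$ over $(\alpha,\beta) \in \mathbb{Z}^2 \setminus \{(0,0)\}$. Because the Jacko--Jendrol formula is piecewise linear in the signed coordinate differences, the minimum is attained on a bounded set of $(\alpha,\beta)$, so for each item only a handful of nearest-neighbor pairs need to be checked explicitly; this is precisely the routine computation the authors defer to the reader.
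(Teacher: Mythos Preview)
Your overall approach---partitioning $X_2$ by residue classes of the integer parameters $(x,y)$ and verifying minimum distances via the Jacko--Jendrol formula---is exactly the paper's approach, and items 1, 2, 4, 5 match the paper's constructions (the paper's $A_k$ and $B_k$ are precisely your residue-zero classes, translated by the same families of vectors). One small slip in item 1: equality $3k$ is \emph{not} attained at $(\alpha,\beta)=(0,\pm 1)$, where the difference vector is $(\pm 4k,0)$ of hexagonal length $4k$; this does not affect the argument.

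Item 3, however, contains a real error. Splitting a class of item 2 by the parity of $\lfloor x/(2k)\rfloor$ gives a sublattice generated (in Cartesian coordinates) by $(8k,4k)$ and $(4k,0)$; the second generator has hexagonal distance $4k$ from the origin, so the two pieces are still only $(4k-1)$-packings, not $(6k-1)$-packings. The issue is that the $(4k-1)$-packing $B_k$ has \emph{two} independent short vectors of length $4k$, namely $(4k,0)$ and $(0,2k)$, and doubling the period in the $\alpha$-direction kills neither of them. The split that works is by the parity of $\lfloor y/k\rfloor$: the resulting sublattice is generated by $(4k,2k)$ and $(8k,0)$, avoids both short vectors, and has minimum hexagonal distance $6k$. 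This is precisely the paper's decomposition $B_k = A_{2k}\cup\bigl(A_{2k}+(0,2k)\bigr)$.
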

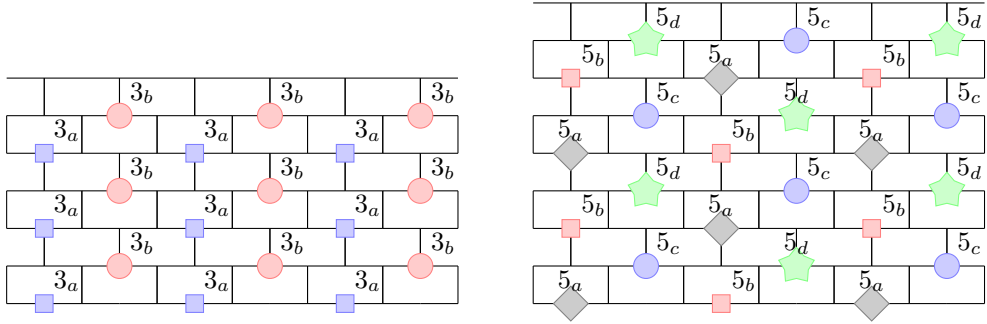
\begin{figure}[t]
\begin{center}
\begin{tikzpicture}
\foreach \x in {0,1,...,5}
\foreach \y in {0,1,2}
{
\draw (\x,\y) -- (\x-0.5,\y);
\draw (\x-0.5,\y) --(\x-0.5,\y+0.5);
\draw (\x-0.5,\y+0.5) -- (\x,\y+0.5);
\draw (\x,\y+0.5) -- (\x,\y+1);
\draw (\x,\y) -- (\x+0.5,\y);
\draw (\x+0.5,\y) -- (\x+0.5,\y+0.5);
\draw (\x+0.5,\y+0.5) -- (\x,\y+0.5);
\draw (\x,\y+0.5) -- (\x,\y+1);
}
\draw (-0.5,3) --(5.5,3);
\foreach \u in {0,2,4}
\foreach \v in {0,1,2}
{
\node at (\u,\v) [rectangle,draw=blue!50,fill=blue!20] {};
\node at (\u+0.3,\v+0.3){$3_a$};
\node at (\u+1,\v+0.5) [circle,draw=red!50,fill=red!20] {};
\node at (\u+1.3,\v+0.8){$3_b$};
}
\foreach \x in {7,8,...,12}
\foreach \y in {0,1,2,3}
{
\draw (\x,\y) -- (\x-0.5,\y);
\draw (\x-0.5,\y) --(\x-0.5,\y+0.5);
\draw (\x-0.5,\y+0.5) -- (\x,\y+0.5);
\draw (\x,\y+0.5) -- (\x,\y+1);
\draw (\x,\y) -- (\x+0.5,\y);
\draw (\x+0.5,\y) -- (\x+0.5,\y+0.5);
\draw (\x+0.5,\y+0.5) -- (\x,\y+0.5);
\draw (\x,\y+0.5) -- (\x,\y+1);
}
\draw (6.5,4) --(12.5,4);
\foreach \u in {7,11}
\foreach \v in {0,2}
{
\node at (\u,\v) [diamond,draw=black!50,fill=black!20] {};
\node at (\u,\v+0.3){$5_a$};
\node at (\u,\v+1) [rectangle,draw=red!50,fill=red!20] {};
\node at (\u+0.3,\v+1.3){$5_b$};
\node at (\u+1,\v+1.5) [star,draw=green!50,fill=green!20] {};
\node at (\u+1.3,\v+1.8){$5_d$};
\node at (\u+1,\v+0.5) [circle,draw=blue!50,fill=blue!20] {};
\node at (\u+1.3,\v+0.8){$5_c$};
}
\foreach \w in {9}
\foreach \s in {1,3}
{
\node at (\w,\s) [diamond,draw=black!50,fill=black!20] {};
\node at (\w,\s+0.3){$5_a$};
\node at (\w,\s-1) [rectangle,draw=red!50,fill=red!20] {};
\node at (\w+0.3,\s-0.7){$5_b$};
\node at (\w+1,\s-0.5) [star,draw=green!50,fill=green!20] {};
\node at (\w+1,\s-0.2){$5_d$};
\node at (\w+1,\s+0.5) [circle,draw=blue!50,fill=blue!20] {};
\node at (\w+1.3,\s+0.8){$5_c$};
}
\end{tikzpicture}
\end{center}
\caption{Two 3-packings forming a partition of $X_{2}$ (on the left) and four 5-packings forming a partition of $X_{2}$ (on the right).}
\end{figure}
\begin{proof}
1. Let $A_{k}$ be the $(3k-1)$-packing defined by $A_{k}=\{(2kx+4ky,kx)|\ x\in \mathbb{Z},\ y\in \mathbb{Z} \}$. Let $\mathcal{F}=\{(2i+4j,i)|i,j\in\{0,\ldots,k-1\}$ be a family of $k^{2}$ vectors. Make $k^{2}$ copies of the set $A_{k}$ and translate each one by a vector from $\mathcal{F}$ to obtain a partition of $X_{2}$.\newline
2. Let $B_{k}$ be the $(4k-1)$-packing defined by $B_{k}=\{(4kx,2ky)|\ x\in \mathbb{Z},\ y\in \mathbb{Z} \}$. Let $\mathcal{F}=\{(4i+2a,2j+a)|i,j\in\{0,\ldots,k-1\},\ a\in\{0,1\}\}$ be a family of $2k^{2}$ vectors. Make $2k^{2}$ copies of the set $B_{k}$ and translate each one by a vector from $\mathcal{F}$ to obtain a partition of $X_{2}$.\newline
3. Note that $A_{2k}\subseteq B_{k}$ and if $A'_{2k}$ is $A_{2k}$ translated by the vector $(0,2k)$, then $A'_{2k}\cup A_{2k}=B_{k}$.\newline
4. Note that $A_{mk}\subseteq A_{k}$. Let $\mathcal{F}=\{(2mki+4mkj,mki)|i,j\in\{0,\ldots,m-1\}\}$ be a family of $m^{2}$ vectors. Make $m^{2}$ copies of the set $A_{mk}$ and translate each one by a vector from $\mathcal{F}$ to obtain a partition of $A_{k}$.\newline
5. Note that $B_{mk}\subseteq B_{k}$. Let $\mathcal{F}=\{(4mki,2mkj)|i,j\in\{0,\ldots,m-1\}\}$ be a family of $m^{2}$ vectors. Make $m^{2}$ copies of the set $B_{mk}$ and translate each one by a vector from $\mathcal{F}$ to obtain a partition of $B_{k}$.
\end{proof}
Figure 3 illustrates a partition of $X_{2}$ from Points 1 and 2 for $k=1$.
In the remaining of the section, the proofs of decomposition of a set $X$ will be resumed in a table and the proofs of properties similar from those from Points 3, 4 and 5 will be left to the reader.
\subsection{Subdivision of a 3-packing in $\mathscr{H}$}
Let $X_{3}=\{(3x+6y,x)|\ x\in \mathbb{Z},\ y\in \mathbb{Z}\}$ be the maximized 3-packing in $\mathscr{H}$ from Figure 2. Note that $d(X_{3})=1/A(3)=1/6$ and that six 3-packings form a partition of $\mathscr{H}$ if we translate $X_{3}$ five times.
\begin{prop}
Let $k>0$ and $m>0$ be integers. There exist:

   \begin{enumerate}
\setlength{\parskip}{.05cm}
	\item $k^{2}$ $p_{1,k}$-packings, $p_{1,k}=(4k-1)$, that form a partition of $X_{3}$;
	\item $3k^{2}$ $p_{2,k}$-packings, $p_{2,k}=(6k-1)$, that form a partition of $X_{3}$;
	\item $8k^{2}$ $p_{3,k}$-packings, $p_{3,k}=(10k-1)$,  that form a partition of $X_{3}$;
	\item $24k^{2}$ $p_{4,k}$-packings, $p_{4,k}=(18k-1)$, that form a partition of $X_{3}$;
	\item $m^{2}$ $p_{j,mk}$-packings that form a partition of a $p_{j,k}$-packing from Point $j$, for $j\in\{1,\ldots,4\}$;
	\item three $(4\times 3k-1)$-packings that form a partition of a $(6k-1)$-packing from Point 2;
	\item two $(4\times 4k-1)$-packings that form a partition of a $(10k-1)$-packing from Point 3;
	\item four 17-packings and six 23-packings that form a partition of every 5-packing from Point 2.
   \end{enumerate}
\label{sub2}
\end{prop}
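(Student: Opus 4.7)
The plan is to follow the same scheme as in Proposition~\ref{sub1}: for each item, I would exhibit a sub-lattice $L \subseteq X_3$ together with a finite family $\mathcal{F}$ of coset representatives of cardinality equal to the announced number of sub-packings, then rely on the Jacko--Jendrol distance formula (Appendix~A) to verify that each translate $L+f$ is a packing of the required radius; as the authors have announced, these distance verifications will be left to the reader.

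Writing $u_1=(3,1)$ and $u_2=(6,0)$ so that $X_3 = \mathbb{Z}u_1 \oplus \mathbb{Z}u_2$, for Point~1 I would take $L_{1,k}$ to be the $k$-fold dilation of $X_3$, generated by $k u_1$ and $k u_2$, of index $k^2$; since the minimum pairwise $\mathscr{H}$-distance in $X_3$ is $4$, the corresponding minimum in $L_{1,k}$ is $4k$, yielding a $(4k-1)$-packing, and $\mathcal{F}_{1,k} = \{i u_1 + j u_2 : 0 \le i,j < k\}$ provides the partition. For Points~2--4, I would similarly exhibit sub-lattices $L_{2,k}$, $L_{3,k}$, $L_{4,k}$ of respective indices $3k^2$, $8k^2$, $24k^2$, with generators chosen so that the minimum $\mathscr{H}$-distance is at least $6k$, $10k$, $18k$; the density formulas of Section~2.2 confirm that such indices are compatible with the announced packing radii. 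Point~5 reduces to the scaling trick of Proposition~\ref{sub1}(4,5): $L_{j,mk}$ sits inside $L_{j,k}$ with index $m^2$, and $m^2$ translates tile the larger lattice.

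Points~6 and~7 parallel Proposition~\ref{sub1}(3) and rely on inclusions between already-constructed sub-lattices. For Point~6, the inclusion $L_{1,3k}\subseteq L_{2,k}$ has index~$3$, so three translates of a $(12k-1)$-packing reconstruct a $(6k-1)$-packing; analogously, for Point~7, $L_{1,4k}$ sits inside $L_{3,k}$ with index~$2$, yielding the decomposition of a $(10k-1)$-packing into two $(16k-1)$-packings.

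The part I expect to be the main obstacle is Point~8, the non-uniform partition of a $5$-packing (from Point~2 with $k=1$) into four $17$-packings together with six $23$-packings. Unlike the preceding items, no single dilation or uniformly-indexed sub-lattice yields this partition: the density budget $4/A(17) + 6/A(23) = 4/122 + 6/216$ exceeds $1/18$ only by a narrow margin, so the partition must be produced by a genuine combinatorial grouping rather than a scaling argument. Concretely, I would take a common refinement of the $5$-packing (a sub-lattice of index~$10$) and then collect its ten cosets into four classes that form $17$-packings and six classes that form $23$-packings, verifying the minimum distance on the short vectors that arise. This ad~hoc grouping, and the finite distance check that accompanies it, is the genuinely delicate step of the proposition.
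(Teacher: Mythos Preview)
Your overall scheme matches the paper's: exhibit explicit sub-lattices of $X_3$ together with coset representatives, and leave the distance checks (via the Jacko--Jendrol formula) to the reader. For Points~1--4 the paper actually writes down the lattices and vectors in a table; you only do this for Point~1 and merely assert existence for Points~2--4, which is thinner than what the paper provides but not wrong in spirit. Points~5, 6 and your discussion of Point~8 are all in line with the paper's treatment.

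There is, however, a genuine gap at Point~7. You claim that $L_{1,4k}$ sits inside $L_{3,k}$ with index~2, mirroring the inclusion $A_{2k}\subseteq B_k$ of Proposition~\ref{sub1}. This fails for \emph{every} admissible choice of $L_{3,k}$. With $L_{1,4k}=\langle(12k,4k),(24k,0)\rangle$, the three index-$2$ superlattices of $L_{1,4k}$ inside $X_3$ are obtained by adjoining one of $(6k,2k)$, $(12k,0)$, $(18k,2k)$; each of the resulting lattices contains a nonzero vector of $\mathscr{H}$-distance $8k$ (namely $(6k,2k)$, $(0,4k)$, or $(6k,-2k)$ respectively), so none is a $(10k-1)$-packing. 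Hence no $(10k-1)$-packing sublattice $L_{3,k}\subseteq X_3$ of index $8k^2$ can contain your $L_{1,4k}$. More generally, with the paper's own $L_{3,k}=\{(6kx+12ky,4kx)\}$, the short vectors $(6k,\pm 4k)$ and their sum $(12k,0)$ force every index-$2$ sublattice to contain a vector of length at most $12k<16k$; so the analogy with Proposition~\ref{sub1}(3) simply does not carry over, and Point~7 needs a different construction (or at least a different pair of $(16k-1)$-packings) than the one you propose.
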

\begin{proof} The proof is resumed in Table~\ref{tab4}, this table contains: in which $i$-packing $X$ will be decomposed (Column 1), the number of $i$-packings needed to form a partition of $X$ (Column 2), the description of an $i$-packing with Cartesian coordinates (assuming $x$ and $y$ are integers) (Column 3) and the family of translation vectors (Column 4). We assume that if we do copies of this $i$-packing and we translate each one by one of these vectors. Afterward, we obtain a partition of $X$ in $i$-packings.
\end{proof}
\subsection{Subdivision of a 4-packing in $\mathscr{H}$}
Let $X_{4}=\{(3x+7y,2x+y)|\ x\in \mathbb{Z},\ y\in \mathbb{Z}\}$ be the 4-packing in $\mathscr{H}$ from Figure 2. Note that $d(X_{4})=1/11$ and that $1/A(4)=1/10$. However, we claim that a 4-packing with density 1/10 does not exist.
Note that eleven 4-packings form a partition of $\mathscr{H}$ if we translate $X_{4}$ ten times.

\begin{prop}
Let $k>0$ and $m>0$ be integers. There exist:

   \begin{enumerate}
\setlength{\parskip}{.05cm}
	\item $k^{2}$ $p_{1,k}$-packings, $p_{1,k}=(5k-1)$, that form a partition of $X_{4}$;
	\item $2k^{2}$ $p_{2,k}$-packings, $p_{2,k}=(6k-1)$, that form a partition of $X_{4}$;
	\item $3k^{2}$ $p_{3,k}$-packings, $p_{3,k}=(8k-1)$, that form a partition of $X_{4}$;
	\item $6k^{2}$ $p_{4,k}$-packings, $p_{4,k}=(11k-1)$, that form a partition of $X_{4}$;
	\item $m^{2}$ $p_{j,mk}$-packings that form a partition of a $p_{j,k}$-packing from Point $j$, for $j\in\{1,\ldots,4\}$;
	\item two $(5\times 2k-1)$-packings that form a partition of a $(6k-1)$-packing from Point 2;
	\item two $(6k-1)$-packings that form a partition of a $(5k-1)$-packing from Point 1;
	\item three $(5\times 3k-1)$-packings that form a partition of a $(8k-1)$-packing from Point 2;
	\item three $(8k-1)$-packings that form a partition of a $(5k-1)$-packing from Point 1.
   \end{enumerate}
\label{sub3}
\end{prop}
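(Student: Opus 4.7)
The plan is to present the proof in exactly the same format as Proposition~\ref{sub2}, namely as a single table whose four columns record (i) the set $X$ being partitioned, (ii) the number of packings in the partition, (iii) a Cartesian description of a base packing, and (iv) a family of translation vectors whose cosets of this base packing tile $X$. The description of $X_{4}$ as the sublattice of $\mathbb{Z}^2$ generated by $v_{1}=(3,2)$ and $v_{2}=(7,1)$ makes this natural: each of the four base packings in Points 1--4 will be a sublattice of $X_{4}$ of index equal to the required number of packings ($k^{2}$, $2k^{2}$, $3k^{2}$, and $6k^{2}$ respectively), and the translation family will be a complete set of coset representatives of this sublattice in $X_{4}$.

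For Point 1, the candidate $(5k-1)$-packing is the subgroup $\{(3kx+7ky,2kx+ky)\mid x,y\in\mathbb{Z}\}$ with translation family $\{iv_{1}+jv_{2}\mid 0\le i,j<k\}$. For Points 2--4 I would choose, in the same spirit, sublattices whose generators are picked so that the shortest non-zero lattice vector has hexagonal distance strictly greater than $6k-1$, $8k-1$, and $11k-1$, respectively. Two verifications are then required for each row: first, that the described set is an $i$-packing with the claimed parameter, which by translation invariance of the sublattice reduces to a finite distance check using the Jacko--Jendrol formula recalled in Appendix~A applied to short lattice combinations; and second, that the translated copies tile $X_{4}$, which follows because the cardinality of the translation family equals the sublattice index and the listed representatives manifestly lie in pairwise distinct cosets.

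Points 5--9 would then be handled by observing containments between the sublattices already constructed. For Point 5, the $p_{j,mk}$-packing is obtained from the $p_{j,k}$-packing by the substitution $k\mapsto mk$ in the generators, hence is a subgroup of index $m^{2}$, and a suitable family of $m^{2}$ translation vectors yields a partition of the $p_{j,k}$-packing. For Points 6--9, in each case the larger packing is expressed as a disjoint union of two or three cosets of the smaller packing, which reduces to a finite check on the generators. Following the convention of Proposition~\ref{sub2}, these latter verifications can be left to the reader.

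The main obstacle is the distance verification for the base packings of Points 1--4. Since $d(X_{4})=1/11$ already lies strictly below the ideal bound $1/A(4)=1/10$, and since the densities of the constructed packings approach $1/A(p_{j,k})$ fairly closely (especially in the $(8k-1)$ and $(11k-1)$ cases), the generators must be chosen with care: too symmetric a sublattice may fail the distance condition, while too asymmetric a choice may fail to lie inside $X_{4}$ or may not have the right index. Once the correct generators have been identified, however, each distance check is a short case analysis using the Jacko--Jendrol formula, and the bookkeeping for the tiling property and for the recursive subdivisions in Points 5--9 is entirely routine.
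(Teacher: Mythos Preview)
Your plan is exactly the paper's approach: the paper's proof is literally ``See Table~\ref{tab5}'', and that table has the four columns you describe (packing parameter, number of packings, Cartesian description of a base sublattice, family of translation vectors). Your treatment of Point~1 is correct and in fact defines the \emph{same} sublattice as the paper's, just with the basis $(3k,2k),(7k,k)$ instead of the paper's $(3k,2k),(-k,3k)$; one checks $(7k,k)=2(3k,2k)-(-k,3k)$. Your handling of Points 5--9 via sublattice containments and coset representatives also matches the paper, which likewise leaves those verifications to the reader.

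The gap is that for Points 2--4 you only describe what you \emph{would} do, without exhibiting the sublattices. This is not a formality: as you yourself note, the generators ``must be chosen with care'', and finding sublattices of $X_{4}$ of indices $2k^{2}$, $3k^{2}$, $6k^{2}$ whose shortest hexagonal-distance vector exceeds $6k-1$, $8k-1$, $11k-1$ respectively is the entire substance of the proof. The paper supplies them explicitly: for Point~2 the base packing is $\{(7kx-ky,\,kx+3ky)\}$; for Point~3 it is $\{(7kx+2ky,\,kx+5ky)\}$; for Point~4 it is $\{(-2kx+11ky,\,6kx)\}$; in each case the translation family is a set of coset representatives of size $2k^{2}$, $3k^{2}$, $6k^{2}$. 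Without such explicit choices (or an argument guaranteeing their existence), your proposal is a correct outline but not yet a proof.
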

\begin{proof}
See Table~\ref{tab5}.
\end{proof}
\subsection{Subdivision of a 2-packing in $\mathbb{Z}^{2}$}
In the square lattice, we can choose one vertex as the origin and all the other vertices will be nominated by Cartesian coordinates. In all our representations our origin $(0,0)$ will be in the packing that we want to describe.
Let $X_{2}=\{(2x+y,x+3y)|\ x\in \mathbb{Z},\ y\in \mathbb{Z}\}$ be the maximized 2-packing in $\mathbb{Z}^{2}$ from Figure 4. Note that $d(X_{2})=1/A(2)=1/5$ and that five 2-packings form a partition of $\mathbb{Z}^{2}$ if we translate $X_{2}$ four times.\newline
\begin{figure}[t]
\begin{center}
\begin{tikzpicture}
\foreach \x in {-0.5,0,0.5,...,2.5}
{
\draw (\x,0) -- (\x,2);
}
\foreach \y in {0,0.5, ...,2}
{
\draw (-0.5,\y) -- (2.5,\y);
}
\foreach \u in {0,1}
\foreach \v in {0,1,2}
{
\node at (\v-\u/2,\v/2+\u) [circle,draw=blue!50,fill=blue!20] {};
\node at (\v-\u/2+0.3,\v/2+\u+0.3){2};
}
\node at (2.5,0) [circle,draw=blue!50,fill=blue!20] {};
\node at (2.8,0.3){2};
\foreach \x in {3,3.5,...,6}
{
\draw (\x,0) -- (\x,2);
}
\foreach \y in {0,0.5, ...,2}
{
\draw (3,\y) -- (6,\y);
}
\foreach \u in {3.5,5.5}
\foreach \v in {0,2}
{
\node at (\u,\v) [circle,draw=red!50,fill=red!20] {};
\node at (\u+0.3,\v+0.3){3};
}
\node at (4.5,1) [circle,draw=red!50,fill=red!20] {};
\node at (4.8,1.3){3};
\foreach \x in {6.5,7,7.5,...,12.5}
{
\draw (\x,0) -- (\x,2);
}
\foreach \y in {0,0.5, ...,2}
{
\draw (6.5,\y) -- (12.5,\y);
}
\foreach \v in {0,1,2}
{
\node at (7+3*\v/2,\v) [circle,draw=green!50,fill=green!20] {};
\node at (7+3*\v/2+0.3,\v+0.3){4};
}
\node at (11,0.5) [circle,draw=green!50,fill=green!20] {};
\node at (11.3,0.8){4};
\node at (12.5,1.5) [circle,draw=green!50,fill=green!20] {};
\node at (12.8,1.8){4};
\end{tikzpicture}
\end{center}
\caption{The sets $X_{2}$ (2-packing), $X_{3}$ (3-packing) and $X_{4}$ (4-packing) in $\mathbb{Z}^{2}$.}
\end{figure}
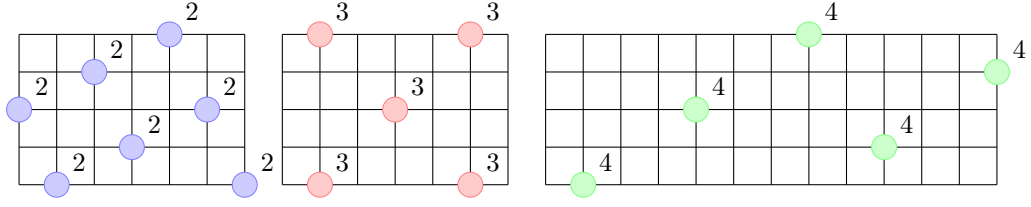
\begin{prop}
Let $k>0$, $m>0$ be integers. There exist:

 \begin{enumerate}
\setlength{\parskip}{.05cm}
	\item $k^{2}$ $(3k-1)$-packings that form a partition of $X_{2}$;
	\item $2k^{2}$ $(4k-1)$-packings that form a partition of $X_{2}$;
	\item two $(3\times 2k-1)$-packings that form a partition of a $(4k-1)$-packing from Point 2;
	\item two $(4k-1)$-packings that form a partition of a $(3k-1)$-packing from Point 1;
	\item $m^{2}$ $(3mk-1)$-packings that form a partition of a $(3k-1)$-packing from Point 1;
	\item $m^{2}$ $(4mk-1)$-packings that form a partition of a $(4k-1)$-packing from Point 2.
\end{enumerate}
\label{sub4}
\end{prop}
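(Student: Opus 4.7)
The plan is to follow the template of Proposition~\ref{sub1} and the table format of Propositions~\ref{sub2}–\ref{sub3}, exploiting the fact that $X_2$ is the sublattice of $\mathbb{Z}^2$ generated by $v_1=(2,1)$ and $v_2=(1,3)$, with graph distance equal to the Manhattan ($\ell^1$) metric. Every packing I need to produce will be realized as a further sublattice $\Lambda\subseteq X_2$, and every partition as the family of translates of $\Lambda$ by an explicit system of coset representatives. Verifying that $\Lambda$ is a $p$-packing then reduces to checking that the minimum $\ell^1$-norm over the nonzero elements of $\Lambda$ exceeds $p$.

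The two base constructions are $A_k=\mathbb{Z}(kv_1)+\mathbb{Z}(kv_2)=\{(2kx+ky,\,kx+3ky)\mid x,y\in\mathbb{Z}\}$, a sublattice of $X_2$ of index $k^2$, and $B_k=\mathbb{Z}(2kv_1)+\mathbb{Z}(kv_2)=\{(4kx+ky,\,2kx+3ky)\mid x,y\in\mathbb{Z}\}$, a sublattice of index $2k^2$. For Point~1 I would take the transversal $\mathcal{F}=\{(2i+j,\,i+3j)\mid 0\le i,j\le k-1\}$; translating $A_k$ by each element of $\mathcal{F}$ partitions $X_2$. For Point~2 the transversal is $\mathcal{F}=\{(2i+j,\,i+3j)\mid 0\le i\le 2k-1,\,0\le j\le k-1\}$. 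Points~5 and~6 then follow by applying the same pattern with $k$ replaced by $mk$, using that $A_{mk}\subset A_k$ and $B_{mk}\subset B_k$ (each of index $m^2$). Points~3 and~4 come from the chain of index-$2$ inclusions $A_{2k}\subset B_k\subset A_k$: indeed $2kv_2=2(kv_2)$ lies in $B_k$ and $2kv_1=2(kv_1)$ lies in $A_k$, so that $B_k=A_{2k}\sqcup(A_{2k}+kv_2)$ and $A_k=B_k\sqcup(B_k+kv_1)$.

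The main (and essentially only) obstacle is verifying the exact packing distance of $A_k$ and $B_k$, since the $\ell^1$-norms of the basis vectors alone do not suffice: one must also inspect short integer combinations. For $A_k$ the relevant short vectors are $kv_1=(2k,k)$ of norm $3k$, $kv_2=(k,3k)$ of norm $4k$, $k(v_1-v_2)=(k,-2k)$ of norm $3k$, and $k(v_1+v_2)=(3k,4k)$ of norm $7k$, so the minimum is $3k$. For $B_k$ the short vectors are $2kv_1=(4k,2k)$ of norm $6k$, $kv_2=(k,3k)$ of norm $4k$, $2kv_1-kv_2=(3k,-k)$ of norm $4k$, and $2kv_1+kv_2=(5k,5k)$ of norm $10k$, so the minimum is $4k$. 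These checks confirm the $(3k-1)$- and $(4k-1)$-packing properties (and in passing show that $A_k$ is not a $(3k)$-packing and $B_k$ is not a $(4k)$-packing). After recording all the constructions in a table analogous to Tables~\ref{tab4}--\ref{tab5}, the remaining verifications are routine coset bookkeeping.
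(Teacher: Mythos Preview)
Your proposal is correct and takes essentially the same approach as the paper: the paper's proof consists of Table~\ref{tab7}, which records the same sublattice $B_k$ for Point~2 and, for Point~1, the lattice generated by $kv_1$ and $k(v_2-v_1)=(-k,2k)$, which coincides with your $A_k$ under a change of basis. Points~3--6 are explicitly left to the reader in the paper (following the template of Proposition~\ref{sub1}), and your index-2 coset decompositions $B_k=A_{2k}\sqcup(A_{2k}+kv_2)$ and $A_k=B_k\sqcup(B_k+kv_1)$ are exactly what is intended.
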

\begin{proof} 
See Table~\ref{tab7}.
\end{proof}
\subsection{Subdivision of a 3-packing in $\mathbb{Z}^{2}$}
Let $X_{3}=\{(2x+4y,2y)|\ x\in \mathbb{Z},\ y\in \mathbb{Z}\}$ be the maximized 3-packing in $\mathbb{Z}^{2}$ from Figure 4. Note that $d(X_{3})=1/A(3)=1/8$ and that eight 3-packings form a partition of $\mathbb{Z}^{2}$ if we translate $X_{3}$ seven times.

\begin{prop}
Let $k>0$ and $m>0$ be integers. There exist:

   \begin{enumerate}
\setlength{\parskip}{.05cm}
	\item $k^{2}$ $(4k-1)$-packings that form a partition of $X_{3}$;
	\item $m^{2}$ $(4mk-1)$-packings that form a partition of a $(4k-1)$-packing from Point 1.
   \end{enumerate}
\label{sub5}
\end{prop}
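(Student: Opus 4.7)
My plan is to follow the template of the proofs of Propositions~\ref{sub1}--\ref{sub4}: present the argument as a table in the format of Table~\ref{tab7}, listing for each of the two claims an explicit $(4k-1)$-packing $C_k\subseteq X_3$ in Cartesian coordinates together with a family $\mathcal{F}$ of $k^2$ translation vectors. The routine verifications (that each $C_k+v$ is a $(4k-1)$-packing and that the translates form a partition of $X_3$) can then be left to the reader, as was done in the earlier propositions.

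For Point~1, the natural candidate for $C_k$ is the sublattice of $X_3$ whose two generators are $k$ times those of $X_3$. Because pairwise $L^1$-distances in a $\mathbb{Z}^2$-sublattice scale linearly when the generators are scaled by $k$, the minimum distance inside $C_k$ is exactly $k$ times the minimum distance inside $X_3$, namely $4k$; hence $C_k$ is a $(4k-1)$-packing. For $\mathcal{F}$ I would take the $k^2$ vectors obtained by combining the two $X_3$-generators with integer coefficients ranging from $0$ to $k-1$; these form a full system of coset representatives of $C_k$ in $X_3$, so the translates $\{C_k+v : v\in\mathcal{F}\}$ cover $X_3$ exactly once.

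For Point~2, I would exploit the nesting $C_{mk}\subseteq C_k$ and run the same construction one scale lower: the $m^2$ vectors obtained by combining the two generators of $C_k$ with integer coefficients in $\{0,\ldots,m-1\}$ give the desired translation family. The main (minor) obstacle in both parts is the bookkeeping needed to verify that $\mathcal{F}$ really hits every coset of $C_k$ in $X_3$ exactly once, which reduces to checking $[X_3:C_k]=k^2$; this follows from a determinant computation, since the generator matrix of $C_k$ is $k$ times that of $X_3$, giving a determinant ratio of $k^2$ and hence the same index. Once this index count is confirmed, both claims follow immediately from the sublattice structure and the linear scaling of the $L^1$-distance.
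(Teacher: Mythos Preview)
Your proposal is correct and is essentially identical to the paper's proof. The paper merely records in Table~\ref{tab7} the $(4k-1)$-packing $C_k=\{(2kx+4ky,\,2kx)\mid x,y\in\mathbb{Z}\}$ together with the translation family $\mathcal{F}=\{(2i+4j,\,2i)\mid i,j\in\{0,\ldots,k-1\}\}$, which is exactly the sublattice $kX_3$ and the coset representatives you describe; Point~2 is left to the reader there as well, just as you propose.
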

\begin{proof}
See Table~\ref{tab7}.
\end{proof}
\subsection{Subdivision of a 4-packing in $\mathbb{Z}^{2}$}
Let $X_{4}=\{(3x+8y,2x+y)|\ x\in \mathbb{Z},\ y\in \mathbb{Z}\}$ be the maximized 4-packing in $\mathbb{Z}^{2}$ from Figure 4. Note that $d(X_{4})=1/A(4)=1/13$ and that thirteen 4-packings form a partition of $\mathbb{Z}^{2}$ if we translate $X_{4}$ twelve times.
\begin{prop}
Let $k>0$, $m>0$ be integers. There exist:

   \begin{enumerate}
\setlength{\parskip}{.05cm}
	\item $k^{2}$ $(5k-1)$-packings that form a partition of $X_{4}$;
	\item $2k^{2}$ $(6k-1)$-packings that form a partition of $X_{4}$;
	\item two $(5\times 2k-1)$-packings that form a partition of a $(6k-1)$-packing from Point 2;
	\item two $(6k-1)$-packings that form a partition of a $(5k-1)$-packing from Point 1;
	\item $m^{2}$ $(5mk-1)$-packings that form a partition of a $(5k-1)$-packing from Point 1;
	\item $m^{2}$ $(6mk-1)$-packings that form a partition of a $(6k-1)$-packing from Point 2.
\end{enumerate}
\label{sub6}
\end{prop}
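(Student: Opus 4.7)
The plan is to mirror the template of Propositions \ref{sub1}--\ref{sub5}: exhibit explicit sublattice generators for each requested packing together with a family of translation vectors whose cosets yield the claimed partition, all of which can be summarized in a table analogous to Table~\ref{tab5}. The minimum $L^1$-distance in the resulting sublattice determines the packing radius, while its index in $X_4$ governs the size of the translation family.

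For Point 1, I take $A_k=\{(3kx+8ky,\ 2kx+ky)\mid x,y\in\mathbb{Z}\}=k\cdot X_4$. The shortest vector of $X_4$ has $L^1$-length $5$ (realized by $(3,2)$), so the shortest vector of $A_k$ has length $5k$, making $A_k$ a $(5k-1)$-packing. Taking $\mathcal{F}_1=\{(3i+8j,\ 2i+j)\mid 0\le i,j<k\}$ as translation family yields the $k^2$ cosets partitioning $X_4$.

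For Point 2, the key is to locate a sublattice of index $2$ in $X_4$ whose minimum $L^1$-distance reaches $6$. I take $B_1$ generated by $(5,-1)=(8,1)-(3,2)$ and $(11,3)=(8,1)+(3,2)$; the change-of-basis matrix has determinant $-2$, and a short enumeration of small integer combinations $a(5,-1)+b(11,3)$ confirms that the minimum $L^1$-distance is $6$, attained for instance at $(a,b)=(1,0)$ and $(a,b)=(2,-1)$. Scaling gives $B_k=k\cdot B_1$, a $(6k-1)$-packing of index $2k^2$ in $X_4$; coset representatives $\{\alpha(3,2)+\beta(8,1)\mid 0\le\alpha<2k,\ 0\le\beta<k\}$ provide the translation family of the required size.

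For Points 3--6 I would establish the two inclusion chains $A_{2k}\subseteq B_k$ (since $2k(3,2)=k(11,3)-k(5,-1)\in B_k$ and $2k(8,1)=k(11,3)+k(5,-1)\in B_k$) and $B_k\subseteq A_k$ (since $(5,-1),(11,3)\in X_4$, so their $k$-multiples lie in $A_k$). Each inclusion has index $2$, which yields the two-translate partitions claimed in Points 3 and 4. Points 5 and 6 follow from the standard scaling $A_{mk}\subseteq A_k$ and $B_{mk}\subseteq B_k$ together with translation families of the form $\{(3mki+8mkj,\ 2mki+mkj)\mid 0\le i,j<m\}$ and its $B$-analogue. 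The only mildly delicate step is the minimum-distance verification for $B_1$, which reduces to a finite case check; everything else is routine lattice bookkeeping that parallels the proofs of Propositions \ref{sub3} and \ref{sub5}.
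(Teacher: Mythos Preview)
Your proposal is correct and follows essentially the same approach as the paper: the paper's proof consists of Table~\ref{tab7}, which records for $X_4$ the $(5k-1)$-sublattice $\{(3kx-2ky,2kx+3ky)\}$ and the $(6k-1)$-sublattice $\{(6kx+ky,4kx+5ky)\}$ together with their translation families, leaving Points 3--6 to the reader exactly as in Proposition~\ref{sub1}. Your $A_k$ and $B_k$ are the very same sublattices written in different generators (e.g.\ your $B_1$ generators $(5,-1),(11,3)$ satisfy $(6,4)=(11,3)-(5,-1)$ and $(1,5)=(11,3)-2(5,-1)$), and you supply the minimum-$L^1$-distance checks that the paper omits; so the two arguments coincide up to a change of basis.
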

\begin{proof}
See Table~\ref{tab7}.
\end{proof}
\subsection{Subdivision of an independent set in $\mathscr{T}$}
The square lattice can be seen as a subgraph of the triangular lattice. In fact in Figure 5, the triangular lattice is represented as a supergraph of the square lattice. Therefore, we can choose one vertex as the origin and all the other vertices will be nominated by Cartesian coordinates. In all our representations our origin $(0,0)$ will be a vertex such that $(0,0)$ has $(1,0)$, $(0,1)$, $(-1,0)$, $(0,-1)$, $(-1,1)$ and $(1,-1)$ as neighbors.
Let $X_{1}=\{(x+3y,x)|\ x\in \mathbb{Z},\ y\in \mathbb{Z} \}$ be the (unique) maximized independent set (1-packing) in $\mathscr{T}$ from Figure 5. Note that $d(X_{1})=1/A(1)=1/3$ and that three independent sets form a partition of $\mathscr{T}$ if we translate $X_{1}$ two times.
\begin{figure}[t]
\begin{center}
\begin{tikzpicture}
\foreach \x in {-0.5,0,0.5,...,1.5}
{
\draw (\x,0) -- (\x,2);
}
\foreach \y in {0,0.5, ...,2}
{
\draw (-0.5,\y) -- (1.5,\y);
}
\foreach \z in {-0.5}
{
\draw (\z,2) -- (\z+2,0);
}
\foreach \h in {0.5,1,1.5}
{
\draw (-0.5,\h) -- (\h-0.5,0);
\draw (\h-0.5,2) -- (1.5,\h);
}
\foreach \u in {0,0.5,1}
\foreach \v in {0,1}
{
\node at (\u-\v/2,\u+\v) [circle,draw=blue!50,fill=blue!20] {};
\node at (\u-\v/2+0.3,\u+\v+0.1){1};
}
\node at (1.5,1.5) [circle,draw=blue!50,fill=blue!20] {};
\node at (1.8,1.6){1};
\node at (1.5,0) [circle,draw=blue!50,fill=blue!20] {};
\node at (1.8,0.1){1};
\foreach \x in {2,2.5,3,...,6}
{
\draw (\x,0) -- (\x,2);
}
\foreach \y in {0,0.5, ...,2}
{
\draw (2,\y) -- (6,\y);
}
\foreach \z in {2,2.5,3,3.5,4}
{
\draw (\z,2) -- (\z+2,0);
}
\foreach \h in {0.5,1,1.5}
{
\draw (2,\h) -- (\h+2,0);
\draw (\h+4,2) -- (6,\h);
}
\foreach \u in {0,0.5,1,1.5}
{
\node at (2*\u+2.5,\u) [circle,draw=red!50,fill=red!20] {};
\node at (2*\u+2.8,\u+0.1){2};
}
\foreach \u in {0,0.5}
{
\node at (2*\u+2,\u+1.5) [circle,draw=red!50,fill=red!20] {};
\node at (2*\u+2.3,\u+1.6){2};
}
\node at (6,0.1) [circle,draw=red!50,fill=red!20] {};
\node at (6.3,0.1){2};
\foreach \x in {6.5,7,7.5,...,12.5}
{
\draw (\x,0) -- (\x,2);
}
\foreach \y in {0,0.5, ...,2}
{
\draw (6.5,\y) -- (12.5,\y);
}
\foreach \z in {6.5,7,7.5, ...,10.5}
{
\draw (\z,2) -- (\z+2,0);
}
\foreach \h in {0.5,1,1.5}
{
\draw (6.5,\h) -- (\h+6.5,0);
\draw (\h+10.5,2) -- (12.5,\h);
}
\foreach \u in {0,1,2}
{
\node at (\u+7,\u) [circle,draw=green!50,fill=green!20] {};
\node at (\u+7.3,\u+0.1){3};
\node at (\u+10,\u) [circle,draw=green!50,fill=green!20] {};
\node at (\u+10.3,\u+0.1){3};
}
\end{tikzpicture}
\end{center}
\caption{The sets $X_{1}$ (1-packing), $X_{2}$ (2-packing) and $X_{3}$ (3-packing) in $\mathscr{T}$.}
\end{figure}
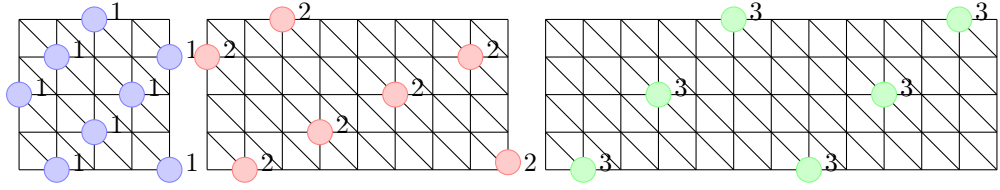
\begin{prop}
Let $k>0$ and $m>0$ be integers. There exist:
\begin{enumerate}
\setlength{\parskip}{.05cm}
	\item $k^{2}$ $(2k-1)$-packings that form a partition of $X_{1}$;
	\item $3k^{2}$ $(3k-1)$-packings that form a partition of $X_{1}$;
	\item three $(3k-1)$-packings that form a partition of a $(2k-1)$-packing from Point 1;
	\item $m^{2}$ $(2mk-1)$-packings that form a partition of a $(2k-1)$-packing from Point 1;
	\item $m^{2}$ $(3mk-1)$-packings that form a partition of a $(3k-1)$-packing from Point 2.
\end{enumerate}
\label{sub7}
\end{prop}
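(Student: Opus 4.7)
The plan is to follow the template established in Propositions~\ref{sub1}--\ref{sub6}: for each point, define the candidate $j$-packing as an explicit sublattice of $X_{1}\subset\mathbb{Z}^{2}$, and then exhibit a family of translation vectors that forms a complete set of coset representatives of this sublattice inside the ambient set. The verification splits into (i) a distance check that the sublattice is a $j$-packing in $\mathscr{T}$, using the triangular-lattice distance $d((a_{1},b_{1}),(a_{2},b_{2}))=\max(|a_{1}-a_{2}|,|b_{1}-b_{2}|,|(a_{1}-a_{2})+(b_{1}-b_{2})|)$, and (ii) a counting check that the translates cover the ambient set without overlap, which reduces to comparing densities with the bound of Proposition~2.2.

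For Point 1 I take $A_{k}=\{(kx+3ky,\,kx)\mid x,y\in\mathbb{Z}\}$, which is $k\cdot X_{1}$ as a subgroup of $\mathbb{Z}^{2}$. A non-trivial difference has the form $(k(a+3b),\,ka)$ for $(a,b)\in\mathbb{Z}^{2}\setminus\{(0,0)\}$; a short case analysis on the signs of $a$ and $b$ shows that $k\max(|a+3b|,\,|a|,\,|2a+3b|)\ge 2k$, with equality at $(a,b)=(1,0)$ and at $(a,b)=(1,-1)$, so $A_{k}$ is exactly a $(2k-1)$-packing. Since $[X_{1}:A_{k}]=k^{2}$, this is matched by the family $\mathcal{F}_{1}=\{(i+3j,\,i)\mid 0\le i,j\le k-1\}\subset X_{1}$ of $k^{2}$ elements, easily seen to lie in pairwise distinct cosets modulo $A_{k}$. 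For Point 2 I take $B_{k}=\{(3kx,3ky)\mid x,y\in\mathbb{Z}\}\subset X_{1}$; a non-trivial difference $(3ka,3kb)$ has distance at least $3k$, so $B_{k}$ is a $(3k-1)$-packing, and the $3k^{2}$ cosets of $B_{k}$ in $X_{1}$ are enumerated by $\mathcal{F}_{2}=\{(i+3j,\,i)\mid 0\le i<3k,\,0\le j<k\}$.

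Point 3 follows from the nesting $B_{k}\subset A_{k}$ of index $3$: the three translates of $B_{k}$ by $(0,0),\,(k,k)$ and $(2k,2k)$ (three elements of $A_{k}$ in distinct $B_{k}$-cosets) partition $A_{k}$. Points 4 and 5 use the scalings $A_{mk}=m\cdot A_{k}\subset A_{k}$ and $B_{mk}=m\cdot B_{k}\subset B_{k}$, each of index $m^{2}$; the required coset representatives are $\{(mki+3mkj,\,mki)\mid 0\le i,j<m\}$ and $\{(3mki,\,3mkj)\mid 0\le i,j<m\}$ respectively, in direct parallel with the corresponding points of Propositions~\ref{sub1} and~\ref{sub4}.

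The main obstacle is the sharp distance computation for $A_{k}$ in Point 1: the triangular metric carries the extra diagonal term $|a+b|$, absent from the square and hexagonal cases, so one must check that the minimum of the three quantities $|a+3b|,\,|a|,\,|2a+3b|$ is realised not only along the coordinate direction $(1,0)$ but also along the diagonal $(1,-1)$, which is where an ordinary $\ell_{1}$ or $\ell_{\infty}$ reflex would miss it. Once this is in hand, $A_{k}$ has density $1/(3k^{2})=1/A(2k-1)$ by Proposition~2.9, confirming that $A_{k}$ is a maximized $(2k-1)$-packing and reducing all remaining partition statements to the coset counts above.
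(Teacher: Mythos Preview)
Your approach matches the paper's: the paper's proof is just Table~\ref{tab10}, which gives exactly your $A_{k}=\{(kx+3ky,kx)\}$ for Point~1 and, after the change of variables $(u,v)=(x+y,x)$, your $B_{k}=\{(3ku,3kv)\}$ for Point~2, with equivalent translation families; Points~3--5 are left to the reader there. Your distance formula $d(0,(a,b))=\max(|a|,|b|,|a+b|)$ is the closed form of the piecewise formula in Appendix~A, and your check that $\min_{(a,b)\ne(0,0)}\max(|a+3b|,|a|,|2a+3b|)=2$ is correct.

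There is, however, a slip in Points~4 and~5. Your proposed ``coset representatives'' $\{(mki+3mkj,\,mki)\mid 0\le i,j<m\}$ all lie in $A_{mk}$ itself, and likewise $\{(3mki,\,3mkj)\mid 0\le i,j<m\}$ all lie in $B_{mk}$; hence every translate coincides with the identity coset and you do not obtain a partition of $A_{k}$ (resp.\ $B_{k}$). The correct families are obtained by removing the spurious factor $m$: take $\{(ki+3kj,\,ki)\mid 0\le i,j<m\}\subset A_{k}$ for Point~4 and $\{(3ki,\,3kj)\mid 0\le i,j<m\}\subset B_{k}$ for Point~5. (Proposition~\ref{sub1}, Point~4, which you cite as the model, carries the same typo, so ``direct parallel'' explains the error but does not repair it.) With this correction your argument is complete.
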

\begin{proof}
See Table~\ref{tab10}.
\end{proof}
\subsection{Subdivision of a 2-packing in $\mathscr{T}$}
Let $X_{2}=\{(2x-y,x+3y)|\ x\in \mathbb{Z},\ y\in \mathbb{Z}\}$ be the maximized 2-packing in $\mathscr{T}$ from Figure 5. Note that $d(X_{2})=1/A(2)=1/7$ and that seven 2-packings form a partition of $\mathscr{T}$ if we translate $X_{2}$ six times.
\begin{prop}
Let $k>0$ and $m>0$ be integers. There exist:
\begin{enumerate}
\setlength{\parskip}{.05cm}
	\item $k^{2}$ $(3k-1)$-packings that form a partition of $X_{2}$;
	\item $m^{2}$ $(3mk-1)$-packings that form a partition of a $(3k-1)$-packing from Point 1.
\end{enumerate}
\label{sub8}
\end{prop}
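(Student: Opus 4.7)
The plan is to mirror the construction used in Propositions~\ref{sub1} and~\ref{sub7}: single out a natural ``$k$-scaled'' sub-lattice $A_{k} \subset X_{2}$, verify that $A_{k}$ is a $(3k-1)$-packing via the distance formula for $\mathscr{T}$, and then exhibit $k^{2}$ translation vectors whose cosets partition $X_{2}$. Following the authors' convention, the whole argument should ultimately be recorded as one row of Table~\ref{tab10}.

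Concretely I take
\[
A_{k} = \{(k(2x-y),\ k(x+3y)) \mid x,y \in \mathbb{Z}\},
\]
the set of points of $X_{2}$ whose underlying parameters $(x,y)$ are both multiples of $k$. The first step is to show that $A_{k}$ is a $(3k-1)$-packing. The distance in $\mathscr{T}$ admits the formula $d((a_{1},b_{1}),(a_{2},b_{2})) = \max(|\Delta_{x}|,|\Delta_{y}|,|\Delta_{x}+\Delta_{y}|)$, which scales linearly under simultaneous dilation of $(\Delta_{x},\Delta_{y})$ by $k$; hence the distance between any two points of $A_{k}$ equals $k$ times the distance between the corresponding pair of points of $X_{2}$. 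Since $X_{2}$ is already a $2$-packing and the metric is integer-valued, the minimum interpoint distance in $X_{2}$ is at least $3$, so $A_{k}$ has minimum interpoint distance at least $3k$, as required.

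The second step is the partition. I use the $k^{2}$ translation vectors
\[
\mathcal{F}_{k} = \{(2i-j,\ i+3j) \mid 0 \le i, j \le k-1\}.
\]
For any point $(2x-y, x+3y) \in X_{2}$, the Euclidean divisions $x = kx'+i$, $y = ky'+j$ with $0 \le i, j \le k-1$ give the unique decomposition $(2x-y, x+3y) = (k(2x'-y'),\ k(x'+3y')) + (2i-j,\ i+3j)$, placing the point in exactly one translate $A_{k} + v$ with $v \in \mathcal{F}_{k}$. Point~2 is proved at the next level by the same argument: $A_{mk} \subset A_{k}$ is a $(3mk-1)$-packing by the same scaling identity, and the $m^{2}$ vectors $\{(k(2i-j),\ k(i+3j)) \mid 0 \le i, j \le m-1\}$ partition any fixed $(3k-1)$-packing from Point~1.

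The only non-routine input is the scaling computation in Step~1; the rest is routine coset bookkeeping. Since the authors' policy is to leave such distance verifications to the reader, proving the proposition reduces to filling in the appropriate row of Table~\ref{tab10}.
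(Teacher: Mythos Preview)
Your argument is correct and follows the same scheme as the paper: exhibit a sublattice $A_{k}\subset X_{2}$, check it is a $(3k-1)$-packing, and list $k^{2}$ coset representatives. The only cosmetic difference is the choice of generators: the paper's Table~\ref{tab10} records the packing as $\{(2kx+7ky,\,kx)\}$ with vectors $(2i+7j,\,i)$, which is the same sublattice as your $\{(k(2x-y),\,k(x+3y))\}$ under the unimodular change of basis $(7k,0)=3(2k,k)-(-k,3k)$; your parametrization has the mild advantage that the distance bound follows immediately from homogeneity of the metric, whereas the paper simply leaves the verification to the reader.
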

\begin{proof}
See Table~\ref{tab10}.
\end{proof}
\subsection{Subdivision of a 3-packing in $\mathscr{T}$}
Let $X_{3}=\{(2x+6y,2x)|\ x\in \mathbb{Z},\ y\in \mathbb{Z}\}$ be the maximized 3-packing in $\mathscr{T}$ from Figure 5. Note that $d(X_{3})=1/A(3)=1/12$ and that twelve 3-packings form a partition of $\mathscr{T}$ if we translate $X_{3}$ eleven times.
\begin{prop}
Let $k>0$ and $m>0$ be integers. There exist:
\begin{enumerate}
\setlength{\parskip}{.05cm}
	\item $k^{2}$ $(4k-1)$-packings that form a partition of $X_{3}$;
	\item $3k^{2}$ $(6k-1)$-packings that form a partition of $X_{3}$;
	\item three $(4\times 3k-1)$-packings that form a partition of a $(6k-1)$-packing from Point 1;
	\item $m^{2}$ $(4mk-1)$-packings that form a partition of a $(4k-1)$-packing from Point 1;
	\item $m^{2}$ $(6mk-1)$-packings that form a partition of a $(6k-1)$-packing from Point 2.
\end{enumerate}
\label{sub9}
\end{prop}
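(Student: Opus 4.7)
The plan is to follow the template already used in Propositions~\ref{sub7} and~\ref{sub8}: for each point I exhibit a sublattice of $X_{3}$ with the required minimum distance by giving an explicit $\mathbb{Z}$-basis in Cartesian coordinates, and then list a family of translation vectors that forms a set of coset representatives of the ambient packing modulo this sublattice. Since $\mathscr{T}$ is vertex-transitive, the $i$-packing condition reduces to checking that every non-zero vector of the sublattice has $\mathscr{T}$-distance strictly greater than $i$, and in each case the minimum is achieved on a small list of short candidate vectors, easily verified by hand. The number of translates is determined by a determinant computation, since $X_{3}$ itself is the lattice with basis $(2,2),(6,0)$ of absolute determinant $12$.

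For Point~1, I take $A_{k}$ with basis $(2k,2k),(6k,0)$, i.e.\ $A_{k}=kX_{3}$; the determinant $12k^{2}$ gives index $k^{2}$ in $X_{3}$, and every non-trivial distance in $A_{k}$ is $k$ times one in $X_{3}$, hence at least $4k$. For Point~2, I take $B_{k}$ with basis $(6k,0),(0,6k)$, which sits inside $X_{3}$ by solving $2u+6v=6kx$, $2u=6ky$ for integers $u,v$, has determinant $36k^{2}$ and therefore index $3k^{2}$ in $X_{3}$, and has minimum distance $6k$ attained on the six shortest non-zero vectors $(\pm 6k,0),(0,\pm 6k),(\pm 6k,\mp 6k)$.

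Point~3 (where the reference to ``Point~1'' in the statement must be read as ``Point~2'', since it is Point~2 that produces $(6k-1)$-packings) is the heart of the proof. The six short vectors of $B_{k}$ listed above show that $B_{k}$, viewed as an induced subgraph of $\mathscr{T}$, is isomorphic to a rescaled triangular lattice. Mimicking the classical index-$3$ partition of $\mathscr{T}$ into three independent sets via the sublattice spanned by $(2,-1)$ and $(1,1)$, I take $C_{k}$ with basis $(12k,-6k),(6k,6k)$: it has index $3$ in $B_{k}$, and a direct check shows that $(12k,-6k),(6k,6k),(18k,0),(6k,-12k)$ all have $\mathscr{T}$-distance $12k$, so the minimum distance in $C_{k}$ is $12k$. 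A short coset check (verifying $(6k,0)\notin C_{k}$ while $(18k,0)\in C_{k}$) then identifies $\{(0,0),(6k,0),(12k,0)\}$ as a transversal of $B_{k}/C_{k}$.

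Points~4 and~5 follow by scaling inside the constructions of Points~1 and~2 respectively: $A_{mk}\subseteq A_{k}$ has index $m^{2}$ with transversal $\{(2ki+6kj,\,2ki)\mid 0\le i,j\le m-1\}$, and $B_{mk}\subseteq B_{k}$ has index $m^{2}$ with transversal $\{(6ki,\,6kj)\mid 0\le i,j\le m-1\}$. The main obstacle is Point~3: one has to recognise the hexagonal symmetry hidden inside the apparently square $B_{k}$ and write down a non-axis-aligned sublattice $C_{k}$ whose shortest vectors all still achieve distance exactly $12k$. Once $C_{k}$ is identified, all the remaining verifications amount to routine arithmetic on small-coefficient integer vectors and can be recorded in Table~\ref{tab10} in the same format as the previous triangular-lattice propositions.
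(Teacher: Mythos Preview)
Your proposal is correct and follows the same template as the paper. Your sublattices for Points~1 and~2 coincide with those recorded in Table~\ref{tab10}: your $A_{k}$ is exactly the paper's $\{(2kx+6ky,2kx)\}$, and your $B_{k}$ with basis $(6k,0),(0,6k)$ is the same lattice as the paper's $\{(6kx+6ky,6kx)\}$ under a change of basis. For Points~3--5 the paper simply leaves the verifications to the reader (as announced after Proposition~\ref{sub1}), so your explicit sublattice $C_{k}$ for Point~3, your identification of the typo (the $(6k-1)$-packing comes from Point~2, not Point~1), and your transversals for Points~4 and~5 are genuine added value rather than a different approach.
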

\begin{proof}
See Table~\ref{tab10}.
\end{proof}
\section{$S$-packing chromatic number}
\subsection{General properties}
In the previous section, we obtained several properties of subdivision of an $i$-packings in a lattice.
This section illustrates general properties obtained on the $S$-packing chromatic number using only a small part of these properties. 
For a given sequence $S$, one can find other colorings of a lattice using properties from the previous section.
\begin{cor}
Let $a_{0}=1$.
If $s_{1}=2$ and there exist three integers $1<a_{1}<\ldots<a_{3}$ and three integers $k_{1},\ldots,k_{3}$  such that $s_{a_{i}}\le 3k_{i}-1$ and $a_{i}-a_{i-1}\ge k_{i}^{2}$ or $s_{a_{i}}\le 4k_{i}-1$ and $a_{i}-a_{i-1}\ge 2k_{i}^{2}$ for $i\in\{1,\ldots,3\}$ then $\chi^{S}_{\rho}(\mathscr{H})\le a_{3}$.
\end{cor}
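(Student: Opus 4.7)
My approach is to construct an explicit $(s_1,\ldots,s_{a_3})$-coloring of $\mathscr{H}$ by exploiting the four-fold tiling of $\mathscr{H}$ by translates of $X_2$ and then refining three of those translates using Proposition \ref{sub1}. The key observation recorded just before Proposition \ref{sub1} is that $V(\mathscr{H})$ decomposes as a disjoint union $Y_0 \sqcup Y_1 \sqcup Y_2 \sqcup Y_3$, where $Y_0 = X_2$ and $Y_1, Y_2, Y_3$ are the three translates realizing the optimal density $d(X_2) = 1/4$. Since every $Y_i$ is a 2-packing and $s_1 = 2$, I would start by assigning color $1$ to all vertices of $Y_0$.

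Next, for each $i \in \{1,2,3\}$, I would refine $Y_i$ according to whichever branch of the hypothesis holds. In the first case, $s_{a_i} \le 3k_i - 1$ with $a_i - a_{i-1} \ge k_i^2$, so Proposition \ref{sub1}(1) partitions $Y_i$ into $k_i^2$ disjoint $(3k_i-1)$-packings; I would give each of these packings a distinct color from the block $\{a_{i-1}+1,\ldots,a_{i-1}+k_i^2\}\subseteq \{a_{i-1}+1,\ldots,a_i\}$. In the second case, $s_{a_i} \le 4k_i - 1$ with $a_i - a_{i-1} \ge 2k_i^2$, so Proposition \ref{sub1}(2) partitions $Y_i$ into $2k_i^2$ disjoint $(4k_i-1)$-packings; I would assign distinct colors from $\{a_{i-1}+1,\ldots,a_{i-1}+2k_i^2\}\subseteq \{a_{i-1}+1,\ldots,a_i\}$. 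If $a_i - a_{i-1}$ strictly exceeds the required threshold, the leftover color indices are simply left unused, which is harmless for an upper bound.

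Verification is then a direct monotonicity argument: every vertex receives exactly one color in $\{1,\ldots,a_3\}$, and for each used color $j$ with $a_{i-1} < j \le a_i$, its color class is an $r$-packing with $r \in \{3k_i-1,\ 4k_i-1\}$; since $S$ is non-decreasing, $s_j \le s_{a_i} \le r$, so the class is indeed an $s_j$-packing. Hence $\chi^{S}_{\rho}(\mathscr{H}) \le a_3$.

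I do not foresee any genuine obstacle in this proof, since the combinatorial content is already packaged inside Proposition \ref{sub1}; the only thing requiring care is the bookkeeping between the four tiles $Y_0,Y_1,Y_2,Y_3$ and the four index ranges $\{1\},\ \{2,\ldots,a_1\},\ \{a_1+1,\ldots,a_2\},\ \{a_2+1,\ldots,a_3\}$, and the use of the monotonicity of $S$ to propagate the constraint $s_{a_i} \le 3k_i-1$ (or $4k_i-1$) down to every $s_j$ in the corresponding block.
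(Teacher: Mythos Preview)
Your argument is correct and is precisely the construction the paper intends: the corollary is stated without proof as an immediate consequence of the four-fold partition of $\mathscr{H}$ into translates of $X_2$ together with Proposition~\ref{sub1}, and your write-up spells out exactly that construction with the appropriate bookkeeping and the monotonicity of $S$.
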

This corollary can be useful to find upper bounds for a given sequence.
For example, if $S=(2,2,2,2,\ldots)$, then taking $a_{1}=2$, $a_{2}=3$ and $a_{3}=4$, Corollary 4.1 gives us $\chi^{S}_{\rho}(\mathscr{H})\le 4$ (this result is in fact treated in next subsection).
Similarly, for the sequence $S=(2,3,3,5,5,5,5,7,7,7,7,7,7,7,7,\ldots)$ , then taking $a_{1}=3$, $a_{2}=7$ and $a_{3}=15$, Corollary 4.1 gives us $\chi^{S}_{\rho}(\mathscr{H})\le 15$.
There are similar results for $s_{1}=3$ or $s_{1}=4$ using Propositions 3.3 and 3.4.\newline
For the two remaining lattices, the two following properties are given for $\mathbb{Z}^{2}$ with $s_1=2$ and for $\mathscr{T}$ with $s_1=1$. There exist similar properties for $\mathbb{Z}^{2}$ with $s_1=3$ or 4 using Propositions 3.5 and 3.6 and for $\mathscr{T}$ with $s_1=2$ or 3 using Propositions 3.8 and 3.9.
\begin{cor}
Let $a_{0}=1$.
If $s_{1}=2$ and there exist four integers $1<a_{1}<\ldots<a_{4}$ and four integers $k_{1},\ldots,k_{4}$  such that $s_{a_{i}}\le 3k_{i}-1$ and $a_{i}-a_{i-1}\ge k_{i}^{2}$ or $s_{a_{i}}\le 4k_{i}-1$ and $a_{i}-a_{i-1}\ge 2k_{i}^{2}$ for $i\in\{1,\ldots,4\}$ then $\chi^{S}_{\rho}(\mathbb{Z}^{2})\le a_{4}$.
\end{cor}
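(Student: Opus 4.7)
The strategy directly extends the construction behind Corollary 4.1, now adapted to $\mathbb{Z}^{2}$. The key ingredient is the observation, recalled just before Proposition 3.7, that five translates of the maximized 2-packing $X_{2}$ partition $V(\mathbb{Z}^{2})$. I would assign color $1$ to the vertices of $X_{2}$; this is valid since $s_{1}=2$ and $X_{2}$ is a 2-packing. Denote the four remaining translates by $Y_{1}, Y_{2}, Y_{3}, Y_{4}$; each is congruent to $X_{2}$ and therefore inherits every subdivision property of $X_{2}$ stated in Proposition 3.7.

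For each $i \in \{1,2,3,4\}$, I would subdivide $Y_{i}$ by applying Proposition 3.7. If the first branch of the hypothesis holds, i.e.\ $s_{a_{i}} \le 3k_{i}-1$, then Point 1 of Proposition 3.7 produces $k_{i}^{2}$ sub-packings partitioning $Y_{i}$, each of which is a $(3k_{i}-1)$-packing; if the second branch holds, i.e.\ $s_{a_{i}} \le 4k_{i}-1$, then Point 2 produces $2k_{i}^{2}$ sub-packings partitioning $Y_{i}$, each a $(4k_{i}-1)$-packing. I would then assign the colors $a_{i-1}+1, \ldots, a_{i}$ to these sub-packings, giving a distinct color to each sub-packing (recall $a_{0}=1$). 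The inequality $a_{i}-a_{i-1} \ge k_{i}^{2}$ (respectively $\ge 2k_{i}^{2}$) in the hypothesis guarantees that enough colors are available; any surplus colors in that range receive no vertex and form empty color classes, which trivially satisfy the $s_{j}$-packing requirement.

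To verify correctness, for each color $j$ with $a_{i-1}+1 \le j \le a_{i}$ that has been assigned a sub-packing, the assigned class is a $(3k_{i}-1)$-packing (or $(4k_{i}-1)$-packing); since $S$ is nondecreasing, $s_{j} \le s_{a_{i}} \le 3k_{i}-1$ (respectively $s_{j} \le 4k_{i}-1$), so the class is also an $s_{j}$-packing. As the five sets $X_{2}, Y_{1}, \ldots, Y_{4}$ cover all of $V(\mathbb{Z}^{2})$, the constructed map is a well-defined $S$-packing coloring using only colors from $\{1, \ldots, a_{4}\}$, yielding $\chi^{S}_{\rho}(\mathbb{Z}^{2}) \le a_{4}$.

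I do not anticipate any substantive obstacle: the argument is essentially a combinatorial bookkeeping exercise combining the 5-partition of $\mathbb{Z}^{2}$ into 2-packings with the subdivision of a single 2-packing given by Proposition 3.7. The only delicate point is matching color indices to sub-packing distances, which is immediate from the monotonicity of $S$ together with the two-case hypothesis; the passage from three summands (as in the hexagonal case) to four corresponds exactly to the fact that $\mathbb{Z}^{2}$ admits one more translate of $X_{2}$ in its partition than $\mathscr{H}$ does.
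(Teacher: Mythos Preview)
Your proof is correct and follows exactly the approach implicit in the paper: the corollary is stated without proof there, but it is the direct square-lattice analogue of Corollary~4.1, relying on the partition of $\mathbb{Z}^{2}$ into five copies of $X_{2}$ together with Proposition~3.7 (Points~1 and~2) to subdivide the four non-trivially colored copies. Your bookkeeping of colors and the use of the monotonicity of $S$ are precisely what the paper intends.
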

\begin{cor}
Let $a_{0}=1$.
If $s_{1}=1$ and there exist two integers $1<a_{1}<a_{2}$ and two integers $k_{1}$ and $k_{2}$  such that $s_{a_{i}}\le 2k_{i}-1$ and $a_{i}-a_{i-1}\ge k_{i}^{2}$ or $s_{a_{i}}\le 3k_{i}-1$ and $a_{i}-a_{i-1}\ge 3k_{i}^{2}$ for $i\in\{1,\ldots,2\}$ then $\chi^{S}_{\rho}(\mathscr{T})\le a_{2}$.
\end{cor}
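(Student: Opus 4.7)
The plan is to mimic the construction used in the preceding corollaries of this subsection, this time invoking Proposition~\ref{sub7} (subdivision of an independent set in $\mathscr{T}$) together with the fact, recalled just before that proposition, that $V(\mathscr{T})$ is partitioned into three translates of the maximum $1$-packing $X_1$.

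First, I would assign color $1$ to one of the three translates of $X_1$; this is legitimate because $s_1=1$ and $X_1$ is an independent set. The two remaining translates of $X_1$ will carry, respectively, the colors in $\{2,\ldots,a_1\}$ (that is $a_1-a_0=a_1-1$ colors, since $a_0=1$) and the colors in $\{a_1+1,\ldots,a_2\}$ (that is $a_2-a_1$ colors).

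Then, for each $i\in\{1,2\}$, I would apply Proposition~\ref{sub7} to the $i$-th remaining translate of $X_1$: depending on which clause of the hypothesis holds at index $i$, either Point~1 partitions this translate into $k_i^2$ $(2k_i-1)$-packings or Point~2 partitions it into $3k_i^2$ $(3k_i-1)$-packings. Since $a_i-a_{i-1}$ is at least $k_i^2$ (respectively $3k_i^2$), the colors in the slot $\{a_{i-1}+1,\ldots,a_i\}$ suffice to assign one color per piece of the subdivision. Monotonicity of $S$ together with $s_{a_i}\le 2k_i-1$ (respectively $\le 3k_i-1$) gives $s_j\le 2k_i-1$ (respectively $\le 3k_i-1$) for every $j$ in this slot, so the piece used as color class $j$, being a $(2k_i-1)$-packing (respectively $(3k_i-1)$-packing), is automatically an $s_j$-packing.

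The resulting coloring uses exactly $a_2$ colors and is an $S$-packing coloring of $\mathscr{T}$, giving $\chi^S_\rho(\mathscr{T})\le a_2$. I do not anticipate any real obstacle: the combinatorial content is already packaged in Proposition~\ref{sub7}, and the role of the two hypotheses is purely to guarantee that, at each of the two ``stages'', enough color indices are available and that those indices correspond to $s_j$'s small enough to be compatible with the distance of the packings produced by the subdivision.
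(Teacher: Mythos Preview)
Your proposal is correct and follows exactly the approach the paper intends: the paper does not spell out a proof of this corollary, but the discussion after Corollary~4.1 makes clear that each such result is obtained by taking the partition of the lattice into translates of a maximized $s_1$-packing (here the three translates of $X_1$ in $\mathscr{T}$), assigning color~$1$ to one translate, and then invoking the relevant subdivision proposition (here Proposition~\ref{sub7}) on each remaining translate to distribute the colors of the corresponding slot. Your handling of the counting and of the monotonicity of $S$ is precisely what is needed.
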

\subsection{$S$-packing chromatic number and distance coloring}
Jacko and Jendrol \cite{JA2005}, Fertin \textit{et al.} \cite{FE2003} and Ševcıková \cite{SE2001} have studied distance colorings of $\mathscr{H}$, $\mathbb{Z}^{2}$ and $\mathscr{T}$ respectively. The following propositions comes from their work and can be translated in $S$-packing coloring:
\begin{prop}[\cite{JA2005}]
Let $n$ and $d$ be integers. The minimum $n$ such that $s_{1}=d$, $s_{n}=d$ and $\chi^{S}_{\rho}(\mathscr{H})=n$ is given by
$$n=\left\{
    \begin{array}{ll}
	\lceil\frac{3}{8}(d+1)^{2}\rceil & \mbox{for $d$ odd;}\\
	\lceil\frac{3}{8}(d+4/3)^{2}\rceil & \mbox{for $d$ even}.\\
\end{array}
\right.$$
\end{prop}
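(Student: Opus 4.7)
The quantity to determine is nothing but the distance-$d$ chromatic number of the hexagonal lattice, that is, the minimum number $\chi_d(\mathscr{H})$ of $d$-packings whose union is $V(\mathscr{H})$. Indeed, since $S$ is non-decreasing, the condition $s_1 = s_n = d$ forces $s_1 = \cdots = s_n = d$, and an $S$-packing $n$-coloring of $\mathscr{H}$ is then precisely a partition into $n$ $d$-packings; so $\chi^S_\rho(\mathscr{H}) = n$ exactly means $\chi_d(\mathscr{H}) = n$. The plan is thus to prove matching lower and upper bounds for $\chi_d(\mathscr{H})$.

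For the lower bound I would start from the density inequality of Proposition 2.2, which gives $\chi_d(\mathscr{H}) \ge A(d)$. When $d$ is odd, a straightforward calculation using Proposition 2.6 (split on $d \equiv 1 \pmod 4$ vs.\ $d \equiv 3 \pmod 4$) shows that $A(d) = \lceil \tfrac{3}{8}(d+1)^2 \rceil$, giving the claimed formula at once. When $d$ is even the density bound $A(d)$ is too small, so it has to be improved. Here I would argue, as the paper already does for the 4-packing case, that $\mathscr{H}$ admits no $d$-packing with density $1/A(d)$: any periodic extremal $d$-packing corresponds to a translation sublattice of $\mathscr{H}$ whose fundamental domain would have area exactly $A(d)$, and a local analysis of $\partial B_{d/2}(x)$ (using the distance formula recalled in Appendix A) shows this is incompatible with minimum graph distance $d+1$; the true minimum fundamental area is $\lceil \tfrac{3}{8}(d+4/3)^2 \rceil$.

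For the upper bound I would exhibit an explicit periodic partition. Using the identification of $\mathscr{H}$ as a subgraph of $\mathbb{Z}^2$ introduced in Section~3.1, choose a translation sublattice $\Lambda'$ whose minimum non-zero vector has graph length exactly $d+1$ in $\mathscr{H}$ and whose index in the translation group of $\mathscr{H}$ equals the claimed $n$. Then each of the $n$ cosets of $\Lambda'$ is a translate of a $d$-packing, and together they partition $V(\mathscr{H})$. For odd $d$ one uses an Eisenstein-type sublattice generated by two vectors of graph length $d+1$ at angle $2\pi/3$, yielding index $A(d) = \lceil \tfrac{3}{8}(d+1)^2 \rceil$. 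For even $d$ one elongates one of the generators by a single unit to realise index $\lceil \tfrac{3}{8}(d+4/3)^2 \rceil$, matching the lower bound.

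The hard part is the even case: on the lower-bound side, ruling out a $d$-packing of density $1/A(d)$ requires the delicate local-geometry argument sketched above; on the upper-bound side, showing that the slightly elongated sublattice still has minimum graph distance $d+1$ (and no more) requires the explicit hexagonal distance formula of Jacko and Jendrol. Once these two ingredients are in place, the ceiling values $\lceil \tfrac{3}{8}(d+1)^2 \rceil$ and $\lceil \tfrac{3}{8}(d+4/3)^2 \rceil$ come out of a routine arithmetic simplification, completing the proof.
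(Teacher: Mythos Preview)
The paper does not prove this proposition at all: it is stated with the citation \cite{JA2005} and is simply imported from Jacko and Jendrol's paper on distance colouring of the hexagonal lattice. There is therefore no ``paper's own proof'' to compare your proposal against.

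That said, your outline is the right shape for how the Jacko--Jendrol theorem is actually proved: identify the problem as computing $\chi_d(\mathscr{H})$, get the lower bound from packing density, and realise the upper bound by an explicit periodic sublattice. Your odd case is essentially complete, since the area computation indeed gives $A(d)=\lceil\tfrac{3}{8}(d+1)^2\rceil$ and the Eisenstein-type sublattice attains it. The even case, however, is where all the work lies, and your sketch there is not yet a proof. Saying that ``a local analysis of $\partial B_{d/2}(x)$ shows this is incompatible with minimum graph distance $d+1$'' and that ``the true minimum fundamental area is $\lceil\tfrac{3}{8}(d+4/3)^2\rceil$'' is precisely the content of the theorem, not an argument for it. In Jacko and Jendrol's paper this step requires a careful case analysis of which sublattices of the hexagonal translation group have prescribed minimum graph distance, using their explicit distance formula; it is not a one-line observation. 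Similarly, the upper-bound construction for even $d$ needs the exact generators and a verification that the minimum distance is $d+1$, which you only assert.

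So: correct strategy, and nothing to compare against in this paper, but if you want a self-contained proof you must supply the full even-$d$ argument rather than point at where it would go.
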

\begin{prop}[\cite{FE2003}]
Let $n$ and $d$ be integers. The minimum $n$ such that $s_{1}=d$, $s_{n}=d$ and $\chi^{S}_{\rho}(\mathbb{Z}^{2})=n$ is given by
$$n=\left\{
    \begin{array}{ll}
	\frac{1}{2}(d+1)^{2} & \mbox{for $d$ odd;}\\
	\frac{1}{2}((d+1)^{2}+1) & \mbox{for $d$ even}.\\
\end{array}
\right.$$
\end{prop}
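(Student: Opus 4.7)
The plan is to establish matching lower and upper bounds. Since $s_1=s_n=d$ together with monotonicity forces $s_i=d$ for all $1\le i\le n$, the quantity to compute is the minimum number of $d$-packings partitioning $V(\mathbb{Z}^2)$, i.e.\ the $d$-distant chromatic number.

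For the lower bound I would invoke Proposition~2.2 directly: the $n$ color classes partition $V(\mathbb{Z}^2)$, so their densities sum to $1$; each is a $d$-packing and has density at most $1/A(d)$, hence $n\ge A(d)$. Substituting the values of $A(d)$ computed in Proposition~2.8 gives exactly $n\ge \frac{(d+1)^2+1}{2}$ for $d$ even and $n\ge \frac{(d+1)^2}{2}$ for $d$ odd.

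For the upper bound I would exhibit an explicit sublattice $\Lambda\subset\mathbb{Z}^2$ of index $A(d)$ that is itself a $d$-packing; its $A(d)$ cosets then give the required partition into translates (each a $d$-packing, since translations are isometries). For $d=2k+1$ odd, take $\Lambda$ generated by $(k+1,k+1)$ and $(k+1,-(k+1))$, of index $2(k+1)^2$: each nonzero element has the form $((a+b)(k+1),(a-b)(k+1))$, with $L^1$-norm $(k+1)(|a+b|+|a-b|)=2(k+1)\max(|a|,|b|)\ge 2k+2>d$, immediate from the $L^1$-orthogonality of the chosen basis. For $d=2k$ even, take $\Lambda$ generated by $(k+1,k)$ and $(k,-(k+1))$, of index $(k+1)^2+k^2=2k^2+2k+1$.

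The main obstacle is the $L^1$-norm verification in the even case, which is exactly the two-dimensional perfect Lee sphere tiling. Writing $u=a(k+1)+bk$ and $v=ak-b(k+1)$ for a general element, one has $u+v=(2k+1)a-b$, $u-v=a+(2k+1)b$ and $|u|+|v|=\max(|u+v|,|u-v|)$. If $a=0$ and $b\ne 0$, then $|u|+|v|=(2k+1)|b|\ge 2k+1$; if $|(2k+1)a-b|\ge 2k+1$, the bound is immediate; otherwise $a\ne 0$ and $|(2k+1)a-b|\le 2k$, in which case the identity $a+(2k+1)b=(1+(2k+1)^2)a-(2k+1)((2k+1)a-b)$ gives $|a+(2k+1)b|\ge (1+(2k+1)^2)|a|-(2k+1)\cdot 2k\ge 2k+2$. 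Hence every nonzero element of $\Lambda$ has $L^1$-norm $>d$, completing the upper bound and matching the lower bound exactly.
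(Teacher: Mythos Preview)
The paper does not supply its own proof of this proposition; it is quoted from Fertin, Godard, and Raspaud \cite{FE2003} as a known result on the $d$-distant chromatic number of $\mathbb{Z}^2$. So there is no in-paper argument to compare against, and your proposal stands as an independent proof.

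Your argument is essentially correct. A small imprecision: with the paper's limsup definition of density it is not literally true that ``the densities sum to $1$''; what you actually get (and all you need) is that for every ball the fractions sum to $1$, hence $1\le\sum_i d(X_i)\le n/A(d)$, giving $n\ge A(d)$. Your explicit sublattices are the standard ones. The odd case is clean. In the even case the key identity $|u|+|v|=\max(|u+v|,|u-v|)$ is correct, and your algebraic manipulation $a+(2k+1)b=(1+(2k+1)^2)a-(2k+1)\bigl((2k+1)a-b\bigr)$ together with the bound $(1+(2k+1)^2)-2k(2k+1)=2k+2$ indeed forces $|u|+|v|\ge 2k+1>d$ for every nonzero lattice vector. (This is exactly the classical perfect Lee-sphere tiling of $\mathbb{Z}^2$.) Since $A(d)$ is an integer in both parities, the lower and upper bounds coincide and the stated formulas follow.
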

\begin{prop}[\cite{SE2001}]
Let $n$ and $d$ be integers. The minimum $n$ such that $s_{1}=d$, $s_{n}=d$ and $\chi^{S}_{\rho}(\mathscr{T})=n$ is given by
$$n=\left\lceil\frac{3}{4}(d+1)^{2}\right\rceil.$$
\end{prop}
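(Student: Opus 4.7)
The quantity $n$ in the statement is exactly the $d$-distant chromatic number of $\mathscr{T}$: since the sequence $s_1,\ldots,s_n$ is nondecreasing with $s_1 = s_n = d$, every $s_i$ equals $d$, and an $(s_1,\ldots,s_n)$-packing coloring is precisely a partition of $V(\mathscr{T})$ into $n$ $d$-packings. I plan to prove $n = \lceil 3(d+1)^2/4 \rceil$ by establishing matching lower and upper bounds.

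For the lower bound, I apply Corollary 2.4 to the constant sequence $s_i = d$, which yields $n/A(d) \ge 1$, hence $n \ge A(d)$. By Proposition 2.9, $A(2k) = 3k^2 + 3k + 1$ and $A(2k+1) = 3k^2 + 6k + 3 = 3(k+1)^2$. A direct computation then shows $A(2k) = \lceil 3(2k+1)^2/4 \rceil$ (using $3(2k+1)^2/4 = 3k^2 + 3k + 3/4$) and $A(2k+1) = 3(k+1)^2 = \lceil 3(2k+2)^2/4 \rceil$, giving the lower bound for both parities of $d$.

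For the upper bound, I exhibit an explicit partition by producing, in the coordinate system of Section 3.7 where the triangular distance is $\|(x,y)\| = \max(|x|, |y|, |x+y|)$, a sublattice $\Lambda_d \subseteq \mathbb{Z}^2$ of index exactly $A(d)$ whose every nonzero element has norm strictly greater than $d$. For $d = 2k$ I take $\Lambda_{2k}$ generated by $(k+1, k)$ and $(-k, 2k+1)$; its determinant is $(k+1)(2k+1) + k^2 = 3k^2 + 3k + 1 = A(2k)$, and each generator has triangular norm exactly $2k+1$. For $d = 2k+1$ I take $\Lambda_{2k+1}$ generated by $(k+1, k+1)$ and $(2k+2, -(k+1))$; its determinant is $3(k+1)^2 = A(2k+1)$, and each generator has triangular norm exactly $2k+2$. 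The $A(d)$ cosets of $\Lambda_d$ in $\mathbb{Z}^2$ then partition $V(\mathscr{T})$ into $A(d)$ $d$-packings, yielding the upper bound.

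The main obstacle is verifying that \emph{every} nonzero combination $a\mathbf{v}_1 + b\mathbf{v}_2$ in $\Lambda_d$ has triangular norm greater than $d$, not just the generators. Using the antipodal symmetry $(x,y) \mapsto (-x,-y)$, it suffices to handle $(a,b)$ in two sign patterns (same sign and opposite sign). In each pattern one of the three quantities $|x|$, $|y|$, $|x+y|$ is bounded below by $d+1$ through a routine algebraic estimate, because the generators are chosen so that both $\mathbf{v}_1$ and $\mathbf{v}_2$, as well as $\mathbf{v}_1 + \mathbf{v}_2$ and $\mathbf{v}_1 - \mathbf{v}_2$, already lie on the sphere of radius $d+1$ centered at the origin. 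Once the minimum-norm verification is complete, combining the two bounds yields the claimed equality.
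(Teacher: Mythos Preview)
The paper does not prove this proposition; it is quoted from \cite{SE2001} as a known result on the $d$-distant chromatic number of $\mathscr{T}$, so there is no ``paper's proof'' to compare against. Your proposal is thus a self-contained argument where the paper gives none, and its overall strategy---density lower bound via $A(d)$, upper bound via an explicit index-$A(d)$ sublattice whose cosets are $d$-packings---is correct and fits naturally with the tools of Sections~2 and~3.

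Two small points to clean up. First, the corollary you want for the lower bound is Corollary~2.3 (applied to the constant sequence $s_i=d$, giving $n/A(d)\ge 1$), not Corollary~2.4, which is specialised to the $(d,n)$-sequence $s_i=d+\lfloor(i-1)/n\rfloor$. Second, your parenthetical justification for the minimum-norm check is inaccurate: $\mathbf{v}_1+\mathbf{v}_2$ does \emph{not} lie on the sphere of radius $d+1$ in either parity (for $d=2k$ it has norm $3k+2$, for $d=2k+1$ it has norm $3k+3$). Fortunately the actual verification goes through directly without that claim: for $d=2k$, writing a general lattice point as $(x,y)=(a(k+1)-bk,\,ak+b(2k+1))$ with $x+y=a(2k+1)+b(k+1)$, one checks that if $a,b$ have the same sign then $|x+y|\ge 2k+1$ or $|y|\ge 2k+1$, while if they have opposite signs then $|x|\ge 2k+1$. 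For $d=2k+1$ every lattice point is $(k+1)(a+2b,\,a-b)$, and $\max(|a+2b|,|a-b|,|2a+b|)\le 1$ forces $a=b=0$. So replace the sphere remark with these short case checks and the argument is complete.
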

\section{$(d,n)$-packing chromatic number}
\subsection{Hexagonal lattice}
\begin{prop}
$\chi^{2,1}_{\rho}(\mathscr{H})=\infty, \ \chi^{5,2}_{\rho}(\mathscr{H})=\infty, \ \chi^{8,3}_{\rho}(\mathscr{H})=\infty, \ \chi^{11,4}_{\rho}(\mathscr{H})=\infty, \ \chi^{13,5}_{\rho}(\mathscr{H})=\infty\ and\ \chi^{16,6}_{\rho}(\mathscr{H})=\infty$.
\end{prop}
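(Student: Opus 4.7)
The strategy is to apply the contrapositive of Corollary 2.4: showing
\[
\sum_{i=d}^{\infty} \frac{n}{A(i)} < 1
\]
immediately yields $\chi^{d,n}_{\rho}(\mathscr{H}) = \infty$. Using Proposition 2.6, the three residue classes of $i$ modulo~$4$ give the closed forms
\[
\frac{1}{A(2k)} = \frac{2}{3k^{2}+3k+2}, \quad \frac{1}{A(4k+1)} = \frac{1}{2(3k^{2}+3k+1)}, \quad \frac{1}{A(4k+3)} = \frac{1}{6(k+1)^{2}},
\]
so I would split the infinite sum according to $i \bmod 4$ and estimate each piece separately.

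The next step is to establish telescoping tail bounds for each class. From $3k^{2}+3k+2 > 3k(k+1)$ and $2(3k^{2}+3k+1) > 6k(k+1)$, a partial-fraction argument gives, for every $K\ge 1$,
\[
\sum_{k\ge K}\frac{1}{A(2k)} < \frac{2}{3K}, \qquad \sum_{k\ge K}\frac{1}{A(4k+1)} < \frac{1}{6K},
\]
and the standard estimate $\sum_{j\ge K+1} 1/j^{2} < 1/K$ yields $\sum_{k\ge K}\frac{1}{A(4k+3)} < \frac{1}{6K}$. These are tight enough to handle the remote tail of the series for any of the six pairs $(d,n)$ in the statement.

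For each pair, I would then fix a cutoff $I$ (depending on $(d,n)$), evaluate $1/A(d),\ldots,1/A(I)$ exactly by Proposition 2.6, partition the remaining indices $i>I$ into their three residue classes modulo~$4$, and apply the tail bounds with the appropriate starting values of $K$. Multiplying the resulting estimate by $n$ reduces each case to a finite numerical check that the total is strictly below~$1$.

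The main obstacle is the case $(d,n)=(2,1)$, because the series $\sum_{i=2}^{\infty} 1/A(i)$ is numerically very close to~$1$ (approximately $0.996$). For the tail bounds to become small enough I expect to push the explicit computation out to roughly $I=30$: concretely, $\sum_{i=2}^{30} 1/A(i) \approx 0.909$ and the combined tail contribution is bounded by $\frac{2}{3\cdot 16}+\frac{1}{6\cdot 8}+\frac{1}{6\cdot 7} < 0.087$, giving a total under $1$ by only a few thousandths. The remaining five cases $(5,2), (8,3), (11,4), (13,5), (16,6)$ are progressively easier: as $d$ grows, the tail $\sum_{i\ge d} 1/A(i)$ shrinks quickly enough that $n$ times it falls comfortably below~$1$ after only a handful of explicit evaluations combined with the telescoping tail bound. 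The whole proposition thus reduces to the bookkeeping of the six finite computations, with the $(2,1)$ case being the only genuinely delicate one.
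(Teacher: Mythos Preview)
Your proposal is correct and follows the same overall plan as the paper---invoke the density criterion and show $\sum_{i\ge d} n/A(i)<1$---but the bookkeeping is handled differently. The paper does not split off a finite head and bound the tail elementarily; instead it observes that the three subseries
\[
\sum_{k\ge 1}\frac{1}{\tfrac{3}{2}k^{2}+\tfrac{3}{2}k+1},\qquad
\sum_{k\ge 0}\frac{1}{6k^{2}+6k+2},\qquad
\sum_{k\ge 0}\frac{1}{6k^{2}+12k+6}
\]
have closed forms (two of them via $\pi\tanh(\cdot)$, the third via $\pi^{2}/6$), and uses the resulting numerical bound $\sum_{i\ge 1}1/A(i)<1.494$. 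Each case $(d,n)$ is then settled in a single line by subtracting the exact values $1/A(1),\ldots,1/A(d-1)$ and multiplying by $n$; in particular the delicate case $(2,1)$ falls out immediately as $1.494-1/A(1)=0.994<1$ with no further computation. Your telescoping bounds are perfectly sound and more elementary, but they force you to evaluate roughly thirty initial terms by hand in the $(2,1)$ case to squeeze under $1$, whereas the closed-form route avoids that entirely.
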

\begin{proof}
Let $\mathscr{H}$ be the hexagonal lattice and $k$ be an integer, $k\ge16$.\newline 
$\sum\limits_{i=1}^{k}\frac{1}{A(i)}=\sum\limits_{i=1}^{n}\frac{1}{A(2i)}+\sum\limits_{i=0}^{n}\frac{1}{A(4i+1)}+\sum\limits_{i=0}^{k}\frac{1}{A(4i+3)}=
\sum\limits_{i=1}^{k}\frac{1}{\frac{3}{2}i^{2}+\frac{3}{2}i +1}+\sum\limits_{i=0}^{k}\frac{1}{6i^{2}+6i +2}+\sum\limits_{i=0}^{k}\frac{1}{6i^{2}+12i +6}<\frac{2}{15}\sqrt{15}\pi\tanh(\frac{1}{6}\pi\sqrt{15})+\frac{1}{6}\sqrt{3}\pi\tanh(\frac{1}{6}\pi\sqrt{3})+\frac{1}{36}\pi^{2}-1<1.494.$\newline
Therefore: $\sum\limits_{i=2}^{k}\frac{1}{A(i)}<1.494-\frac{1}{A(1)}<0.994<1$, 
$\sum\limits_{i=5}^{k}\frac{2}{A(i)}<2(1.494-\sum\limits_{i=1}^{4}\frac{1}{A(i)})<0.955<1$, 
$\sum\limits_{i=8}^{k}\frac{3}{A(i)}<3(1.494-\sum\limits_{i=1}^{7}\frac{1}{A(i)})<0.935<1$,
$\sum\limits_{i=11}^{k}\frac{4}{A(i)}<4(1.494-\sum\limits_{i=1}^{10}\frac{1}{A(i)})<0.925<1$ and 
$\sum\limits_{i=13}^{k}\frac{5}{A(i)}<5(1.494-\sum\limits_{i=1}^{12}\frac{1}{A(i)})<0.986<1$, $\sum\limits_{i=16}^{k}\frac{6}{A(i)}<6(1.494-\sum\limits_{i=1}^{15}\frac{1}{A(i)})<0.968<1$.
Corollary 2.3 allows us to conclude.
\end{proof}
\begin{prop}
$\chi^{2,2}_{\rho}(\mathscr{H})\le 8$, $\chi^{2,3}_{\rho}(\mathscr{H})\le 5$ and $\forall n\ge 4$, $\chi^{2,n}_{\rho}(\mathscr{H})=4$.
\end{prop}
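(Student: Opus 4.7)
The plan is to establish the three bounds separately. All three upper bounds exploit the fact, recorded in Subsection 3.1, that four translates of the maximized 2-packing $X_{2}$ partition $V(\mathscr{H})$; call these translates $X_{2}^{(1)},\ldots,X_{2}^{(4)}$. The matching lower bound $\chi^{2,n}_{\rho}(\mathscr{H})\ge 4$ will come from the density inequality of Proposition 2.1.

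When $n\ge 4$ the sequence $(s_{i})$ begins with at least four $2$'s, so assigning color $j$ to $X_{2}^{(j)}$ for $j=1,\ldots,4$ yields a valid $(2,n)$-packing $4$-coloring, hence $\chi^{2,n}_{\rho}(\mathscr{H})\le 4$. For $\chi^{2,3}_{\rho}(\mathscr{H})\le 5$ the relevant sequence is $(2,2,2,3,3)$: use $X_{2}^{(1)},X_{2}^{(2)},X_{2}^{(3)}$ for the three colors of type $2$, and invoke Proposition~\ref{sub1}(2) with $k=1$ (two $3$-packings partition $X_{2}$) to split $X_{2}^{(4)}$ and receive the two colors of type $3$.

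The main content is the $\le 8$ bound, where the sequence is $(2,2,3,3,4,4,5,5)$. The plan is to use $X_{2}^{(1)}$ and $X_{2}^{(2)}$ for the two colors of type $2$; to split $X_{2}^{(3)}$ into two $3$-packings via Proposition~\ref{sub1}(2) with $k=1$ and use them for the two colors of type $3$; and to \emph{cascade} on $X_{2}^{(4)}$, first splitting it into two $3$-packings (Proposition~\ref{sub1}(2) with $k=1$), then refining each of those into two $5$-packings (Proposition~\ref{sub1}(3) with $k=1$), producing four $5$-packings in $\mathscr{H}$. Since every $5$-packing is a fortiori a $4$-packing, two of them carry the colors of type $4$ and the other two carry the colors of type $5$. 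The density accounting is tight, $2\cdot\tfrac{1}{4}+2\cdot\tfrac{1}{8}+4\cdot\tfrac{1}{16}=1$, which is a good sanity check that no further refinement is required.

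The lower bound $\chi^{2,n}_{\rho}(\mathscr{H})\ge 4$ is immediate from Proposition 2.1: since every $s_{i}$ in a $(2,n)$-sequence satisfies $s_{i}\ge 2$ and $A$ is nondecreasing, each color class has density at most $1/A(2)=1/4$; the classes partitioning $V(\mathscr{H})$ therefore force at least $4$ colors. The step I expect to be the main obstacle is the cascaded refinement used in the $\le 8$ bound: one must verify that the $5$-packings produced by the two successive partitions are genuine $5$-packings in $\mathscr{H}$ (not merely $5$-packings relative to $X_{2}$), so that employing them as $4$- and $5$-color classes is legitimate. This is guaranteed by the formulation of Proposition~\ref{sub1}, whose output packings are defined through graph distance in $\mathscr{H}$, but it is the place where a more cavalier choice of refinement could break the argument.
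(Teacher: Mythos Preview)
Your proof is correct and follows essentially the same approach as the paper: partition $\mathscr{H}$ into four translates of $X_{2}$, use two of them for the colors $2$, split one into two $3$-packings for the colors $3$, and split the last into four $5$-packings to serve as the two colors $4$ and two colors $5$. The only cosmetic difference is that the paper obtains the four $5$-packings directly via Proposition~\ref{sub1}(1) with $k=2$, whereas you cascade through Points~(2) and~(3) with $k=1$; these two routes produce the same partition, and your explicit justification of the lower bound $\chi^{2,n}_{\rho}(\mathscr{H})\ge 4$ via $A(2)=4$ is exactly what the paper relies on implicitly.
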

\begin{proof}
Using Proposition~\ref{sub1}, we define a $(2,n)$-packing coloring of $\mathscr{H}$ for each $n=2,3$ and $n\ge4$.
$\mathscr{H}$ can be partitioned into four 2-packings, the first two ones can be colored by color 2, the third one by two colors 3 and the last one by four colors under 5, it will be two 4 and two 5, to conclude $\chi^{2,2}_{\rho}(\mathscr{H})\le8$.
$\mathscr{H}$ can be partitioned into four 2-packings, the first three ones can be colored by colors 2 and the third one by two colors 3, to conclude $\chi^{2,3}_{\rho}(\mathscr{H})\le5$.
$\mathscr{H}$ can be partitioned into four 2-packings, hence $\forall n\ge 4$, $\chi^{2,n}_{\rho}(\mathscr{H})=4$.
\end{proof}
The following table summarizes the colorings defined in the above proof. The symbol P in the table refers to the packings we use and how we subdivide them into $i$-packings ($A_{i}$ is an $i$-packing) and the symbol C refers to the associated colors we use for each $i$-packing.
By $k\times A_{i}$ we mean we use $k$ $i$-packings, and  by $k\times i$ we mean we use $k$ colors $i$.
In the rest of the paper, similar proofs will be only described by a table using the same format than this one.
\begin{center}
\begin{tabular}{| c | c | c | c | c |}
\hline
\multirow{3}{*}{(2,2)-packing} & \multirow{2}{*}{P} & \multirow{2}{*}{2$\times X_{2}$} & $X_{2}$ & $X_{2}$ \\ \cline{4-5}
& & & 2$\times A_{3}$ & 4$\times A_{5}$ \\ \cline{2-5}
& C & 2$\times$2 & 2$\times$3 & 2$\times$4, 2$\times$5 \\ \hline
\multirow{3}{*}{(2,3)-packing} & \multirow{2}{*}{P} & \multirow{2}{*}{3$\times X_{2}$} & $X_{2}$ & \\ \cline{4-5}
& & & 2$\times A_{3}$ & \\ \cline{2-5}
& C & 3$\times$2 & 2$\times$3 & \\ \hline
\end{tabular}
\end{center}
\begin{prop}
$\chi^{3,2}_{\rho}(\mathscr{H})\le 35$, $\chi^{3,3}_{\rho}(\mathscr{H})\le 13$, $\chi^{3,4}_{\rho}(\mathscr{H})\le 10$, $\chi^{3,5}_{\rho}(\mathscr{H})\le 8$ and $\forall n\ge 6$, $\chi^{3,n}_{\rho}(\mathscr{H})=6$.
\end{prop}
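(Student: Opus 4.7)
The strategy mirrors the one used in the previous proposition: partition $\mathscr{H}$ into six copies of $X_3$ (since six 3-packings partition $\mathscr{H}$), keep the first few translates as color classes with color value $3$, and then break the remaining translates into finer $i$-packings using the subdivisions from Proposition~\ref{sub2}. I plan to record each case in a single table in the same format as in the proof of the $(2,n)$-result.

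For $n\ge 6$, the six $X_3$ translates directly give a coloring with six colors, all of value $3$, hence $\chi^{3,n}_\rho(\mathscr{H})\le 6$. The matching lower bound is Corollary~2.3 applied to the sequence $s_i=3$ for $i\le 6$: any $(3,n)$-coloring uses color classes of density at most $1/A(3)=1/6$, so at least six classes are needed. For $n=5$, keep five $X_3$ translates as colors of value $3$ (colors $1$--$5$) and, on the remaining translate, apply Point~2 with $k=1$ to split it into three $5$-packings, used as colors $6,7,8$ (each is a $5$-packing, hence \textit{a fortiori} a $4$-packing as required for those positions). For $n=4$, use four $X_3$ translates as color $3$, then split each of the two remaining translates into three $5$-packings via Point~2; the first three pieces serve as $4$-packings (colors $5,6,7$), the next two as $4$-packings (colors $8$) or $5$-packings (colors $9,10$). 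For $n=3$, use three $X_3$ translates as color $3$, subdivide one $X_3$ into three $5$-packings (Point~2) for colors $4,5,6$, another into three $5$-packings for colors $7,8,9$, and the last via Point~1 with $k=2$ into four $7$-packings, used as colors $10,11,12$ (serving as $6$-packings) and color $13$ (as a $7$-packing).

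The real work is the case $n=2$, which aims at a $35$-coloring. Here I keep two $X_3$'s as colors of value $3$ and must extract $33$ color classes from the four remaining $X_3$ translates, whose required packing values rise from $4$ up to $20$. The plan is to cascade Points~1--8 of Proposition~\ref{sub2}: coarse subdivisions via Points~1 and 2 furnish the small color values, and then the scaling Point~5 together with Points~6, 7 and 8 push each intermediate packing further down into the $11$- through $23$-packing range, with Point~4 (giving $24$ translates of a $17$-packing inside one $X_3$) used to catch the highest values. Concretely I would earmark one $X_3$ for colors in the $4$--$10$ range (using Points~1 with $k=1,2$ and Point~2 with $k=1$), one $X_3$ for colors in the $11$--$15$ range (Point~2 refined by Point~5 or Points~6, 8), and two $X_3$'s to cover the $16$--$20$ range (Point~4 refined by Point~5). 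The outcome is tallied in a table of the same format as in the $(2,n)$-proof.

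The main obstacle is the bookkeeping in the $n=2$ case: one has to choose, for each of the 33 required color values, a subdivision chain from Proposition~\ref{sub2} that produces a packing of the correct radius, while making sure that the pieces chosen from the same $X_3$ form a partition and not just a packing. Because Points~6, 7 and 8 impose specific arithmetic relations between the parameters $k$ and the packing radii, the available combinations are rigid; I would build the table from the top (the $20$-packing) downward, filling in the smaller radii last where there is the most flexibility. Once the table is written, verifying that the indicated translates form a partition of the relevant $X_3$ (or of a previous sub-packing) follows directly from the translation-vector families given in the proof of Proposition~\ref{sub2}, and verifying that each class has the required packing radius uses Jacko--Jendrol's distance formula in $\mathscr{H}$ recalled in Appendix~A.
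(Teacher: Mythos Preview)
Your strategy and your constructions for $n\ge 6$, $n=5$, $n=4$ and $n=3$ coincide with the paper's: six translates of $X_3$, keep $n$ of them as colour~$3$, and split the rest via Proposition~\ref{sub2} (Point~2 with $k=1$ for the $5$-packings, Point~1 with $k=2$ for the four $7$-packings in the $n=3$ case). The lower bound $6$ for $n\ge 6$ is indeed the density count you give.

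The gap is in your $n=2$ sketch. The allocation ``one $X_3$ for colour values $4$--$10$, one $X_3$ for $11$--$15$, two $X_3$'s for $16$--$20$'' is density-infeasible. Two copies of $X_3$ have total density $\tfrac{2}{6}=\tfrac{1}{3}$, whereas the nine colour classes of value in $\{16,\ldots,20\}$ are packings of radius at least $16$ and hence have density at most $1/A(16)=1/109$ each, giving a total below $9/109<0.09$. You cannot cover two whole $X_3$'s with such sparse pieces, so most of these two translates would remain uncoloured. For the same reason, a single $X_3$ is far too dense to be filled only by classes of value $\ge 11$.

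What the paper does instead is to \emph{mix} small and large radii inside the same $X_3$. Concretely: one $X_3$ is split into $3\times A_5$ (colours $4,4,5$); one into $4\times A_7$ (colours $6,6,7,7$); one into $4\times A_9$ together with $8\times A_{15}$ (Point~3 with $k=1$ gives eight $9$-packings, four of which are kept and four are halved into $15$-packings via Point~7), covering colours $8,8,9,9,12,\ldots,15$; and the last $X_3$ is split as $A_5,\ 3\times A_{11},\ 4\times A_{17},\ 6\times A_{23}$, where the key step is Point~8 (one $5$-packing splits into four $17$-packings and six $23$-packings), yielding colours $5,10,10,11,11,16,\ldots,19,20$. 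The presence of the dense piece $A_5$ in this last $X_3$ is exactly what makes the density balance; your ``build from the top down'' heuristic would miss this, because the high-value colours must share their $X_3$ with a low-value one.
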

\begin{proof}
Using Proposition~\ref{sub2}, we define a $(3,n)$-packing coloring of $\mathscr{H}$ for each $n=2,3,4,5$ and $n\ge6$.
$\mathscr{H}$ can be partitioned into six 3-packings, hence $\forall n\ge 6$, $\chi^{3,n}_{\rho}(\mathscr{H})=6$. The other colorings are described in Table~\ref{tabc1}.
\end{proof}
\begin{prop}
$\chi^{4,3}_{\rho}(\mathscr{H})\le 58$, $\chi^{4,4}_{\rho}(\mathscr{H})\le 27$, $\chi^{4,5}_{\rho}(\mathscr{H})\le 21$, $\chi^{4,6}_{\rho}(\mathscr{H})\le 18$ and from \cite{JA2005} $\forall n\ge 11$, $\chi^{4,n}_{\rho}(\mathscr{H})= 11$.
\end{prop}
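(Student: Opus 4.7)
The plan is to construct explicit $(4,n)$-packing colorings of $\mathscr{H}$ for each $n\in\{3,4,5,6\}$, and to deduce the case $n\ge 11$ from the $4$-distance coloring of $\mathscr{H}$. For the latter, \cite{JA2005} shows that $\mathscr{H}$ partitions into $11$ translates of $X_{4}$; this is exactly the partition already exhibited before Proposition~\ref{sub3}, and it provides a $(4,n)$-packing coloring with $11$ colors for every $n\ge 11$. The matching lower bound $11$ comes from Corollary 2.4, or equivalently from the same reference.

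The toolkit for the finite cases is Proposition~\ref{sub3}. A translate of $X_{4}$ can be split into $k^{2}$ copies of a $(5k-1)$-packing (Point 1), $2k^{2}$ copies of a $(6k-1)$-packing (Point 2), $3k^{2}$ copies of an $(8k-1)$-packing (Point 3), or $6k^{2}$ copies of an $(11k-1)$-packing (Point 4); Point 5 allows a further $m^{2}$-refinement at each radius, and Points 6--9 allow trading between these families. For a given $n$, the plan is to reserve $n$ of the eleven translates of $X_{4}$ for the $n$ colors with $s_{i}=4$, and then partition the remaining $11-n$ translates into subpackings whose multiset of radii matches exactly the demanded multiset $s_{n+1},s_{n+2},\ldots$, with no vertex left uncolored.

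For $n=6$ the schedule is short: use $6$ translates as colors $1$-$6$; split three further translates into two $5$-packings each (Proposition~\ref{sub3}, Point 2 with $k=1$) to cover colors $7$-$12$; and split the last two translates into three $7$-packings each (Point 3, $k=1$) for colors $13$-$18$, since a $7$-packing is in particular a $6$-packing. For $n=5,4,3$ the same idea applies, but many more packing levels must be served, so further refinements (Point 5 with $m\ge 2$, and Points 6--9) are required; the natural presentation of each resulting schedule is a table of the same format as Table~\ref{tabc1}.

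The principal obstacle is the tight exact-match combinatorics, most acutely for $n=3$: the eight remaining translates of $X_{4}$ must be partitioned to produce exactly three $i$-packings for each $i\in\{5,\ldots,22\}$ plus one $23$-packing. Because the multiplicities available at each level are constrained to special forms (namely $k^{2}$, $2k^{2}$, $3k^{2}$, $6k^{2}$ and their $m^{2}$-refinements, possibly composed with the cross-splittings of Points 6--9), and since every vertex of every translate must receive a color, the feasible schedules are scarce. Exhibiting one is the substantive content of the proof, and is precisely what determines the numerical upper bounds $58$, $27$, $21$, and $18$ in the statement.
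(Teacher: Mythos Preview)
Your approach is exactly the paper's: partition $\mathscr{H}$ into eleven translates of $X_{4}$, reserve $n$ of them for the colors with $s_{i}=4$, and refine the remaining $11-n$ via Proposition~\ref{sub3}. Your $n\ge 11$ argument and your explicit $n=6$ schedule coincide with the paper's (the paper records $6\times X_{4}$, then $3\times X_{4}\to 6\times A_{5}$, then $2\times X_{4}\to 6\times A_{7}$, colored $6\times 4$, $6\times 5$, $6\times 6$, which is precisely what you wrote).

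The gap is that for $n=3,4,5$ you only describe the method and explicitly concede that ``exhibiting one is the substantive content of the proof.'' That is accurate, and it means the bounds $58$, $27$, $21$ are not yet established in your write-up. The paper supplies the missing schedules in Table~\ref{tabc2}; they rely not just on Points~1--4 of Proposition~\ref{sub3} but essentially on the cross-splittings of Points~6--9 (for instance, to obtain $A_{5},6\times A_{9}$ from two copies of $X_{4}$ one splits $X_{4}\to 2\times A_{5}$ by Point~2 and then one $A_{5}\to 2\times A_{9}$ by Point~6, while the second $X_{4}\to 4\times A_{9}$ by Point~1 with $k=2$). Also note a small overstatement in your $n=3$ paragraph: you do not need ``exactly three $i$-packings for each $i\in\{5,\ldots,22\}$''; you need three packings of radius at least $i$ assignable to color $i$, and the paper's schedule indeed uses radii $5,7,9,11,14,19,23$ only, reassigning larger-radius pieces downward as needed.
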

\begin{proof}
Using Proposition~\ref{sub3}, we define a $(4,n)$-packing coloring of $\mathscr{H}$ for each $n=3,4,5,6$ and $n\ge11$.
$\mathscr{H}$ can be partitioned into eleven 4-packings. The other colorings are described in Table~\ref{tabc2}.
\end{proof}
\subsection{Square lattice}
\begin{prop}
$\chi^{2,1}_{\rho}(\mathbb{Z}^{2})=\infty, \ \chi^{4,2}_{\rho}(\mathbb{Z}^{2})=\infty, \ \chi^{6,3}_{\rho}(\mathbb{Z}^{2})=\infty, \ \chi^{8,4}_{\rho}(\mathbb{Z}^{2})=\infty, \ \chi^{10,5}_{\rho}(\mathbb{Z}^{2})=\infty\ and \ \chi^{12,6}_{\rho}(\mathbb{Z}^{2})=\infty$.
\end{prop}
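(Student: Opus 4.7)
The plan is to mirror the argument used in Proposition 5.1 for the hexagonal lattice, now applied to $\mathbb{Z}^2$ via Corollary 2.3 and the area formulas of Proposition 2.7. Specifically, if $\chi^{d,n}_{\rho}(\mathbb{Z}^{2})$ were finite, then Corollary 2.3 would force
\[
\sum_{i=d}^{\infty}\frac{n}{A(i)}\ \ge\ 1.
\]
So for each of the six pairs $(d,n)\in\{(2,1),(4,2),(6,3),(8,4),(10,5),(12,6)\}$ I would derive a contradiction by showing that this tail sum is strictly less than $1$.

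First I would establish the closed-form expression
\[
\sum_{i=1}^{\infty}\frac{1}{A(i)}\ =\ \sum_{k=1}^{\infty}\frac{1}{2k^{2}+2k+1}\ +\ \sum_{k=0}^{\infty}\frac{1}{2k^{2}+4k+2},
\]
by splitting $i$ into even and odd indices and substituting the formulas $A(2k)=2k^{2}+2k+1$ and $A(2k+1)=2k^{2}+4k+2$ from Proposition 2.7. The second series is elementary: since $2k^{2}+4k+2=2(k+1)^{2}$, it equals $\pi^{2}/12$. The first can be evaluated via the standard partial-fraction identity for $\sum 1/((k+1/2)^{2}+a^{2})$ with $a^{2}=1/4$, giving a closed form in terms of $\pi\tanh(\pi/2)$ (analogous to the $\tanh$ expressions that appear in the hexagonal case). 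Summing these two contributions yields an explicit numerical upper bound $C$ for $\sum_{i=1}^{\infty}1/A(i)$.

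Once $C$ is in hand, the tail estimates reduce to an arithmetic check: for each pair $(d,n)$, I would compute $\sum_{i=1}^{d-1}1/A(i)$ explicitly from the $A$ formulas, and then bound
\[
\sum_{i=d}^{\infty}\frac{n}{A(i)}\ \le\ n\Bigl(C-\sum_{i=1}^{d-1}\frac{1}{A(i)}\Bigr),
\]
verifying in each of the six cases that the right-hand side is strictly less than $1$. This is the same mechanical template as in Proposition 5.1, except that the $A$ values and the cutoffs $d$ differ. The main (only) real obstacle is the numerical one: the chosen cutoffs are exactly those for which the inequality flips, so the bounds will be rather tight and the computation of $\sum_{i=1}^{d-1}1/A(i)$ must be done carefully — there is little slack, and using a too-coarse estimate for $C$ would fail to rule out, say, the $(10,5)$ case. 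With exact rational values for the finite initial sums and the explicit closed forms for the infinite tails, each of the six strict inequalities follows, and Corollary 2.3 then delivers the desired conclusion that all six $(d,n)$-packing chromatic numbers are infinite.
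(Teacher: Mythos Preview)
Your proposal is correct and follows essentially the same route as the paper: split $\sum_{i\ge 1}1/A(i)$ into even and odd parts using Proposition~2.7, evaluate the two series in closed form (the paper obtains $\tfrac{1}{2}\pi\tanh(\tfrac{\pi}{2})+\tfrac{\pi^{2}}{12}-1<1.264$), then for each pair $(d,n)$ subtract the finite initial segment, multiply by $n$, and check the result is below $1$ before invoking the density corollary. The only difference is cosmetic---the paper records the six numerical bounds explicitly ($0.764,\,0.877,\,0.917,\,0.938,\,0.951,\,0.959$) rather than leaving them as a verification step.
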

\begin{proof}
Let $\mathbb{Z}^{2}$ be the square lattice and $k$ be an integer, $k\ge12$.\newline
$\sum\limits_{i=1}^{k}\frac{1}{A(i)}=\sum\limits_{i=1}^{k}\frac{1}{A(2i)}+\sum\limits_{i=0}^{k}\frac{1}{A(2i+1)}=\sum\limits_{i=1}^{k}\frac{1}{2i^{2}+2i+1}+\sum\limits_{i=0}^{k}\frac{1}{2i^{2}+4i +2}<\frac{1}{2}\pi \tanh(\frac{1}{2}\pi)+\frac{1}{12}\pi^{2}-1<1.264$.\newline
Therefore: $\sum\limits_{i=2}^{k}\frac{1}{A(i)}<1.264-\frac{1}{A(1)}<0.764<1$, 
$\sum\limits_{i=4}^{k}\frac{2}{A(i)}<2(1.264-\sum\limits_{i=1}^{3}\frac{1}{A(i)})<0.877<1$, 
$\sum\limits_{i=6}^{k}\frac{3}{A(i)}<3(1.264-\sum\limits_{i=1}^{5}\frac{1}{A(i)})<0.917<1$, 
$\sum\limits_{i=8}^{k}\frac{4}{A(i)}<4(1.264-\sum\limits_{i=1}^{7}\frac{1}{A(i)})<0.938<1$, 
$\sum\limits_{i=10}^{k}\frac{5}{A(i)}<5(1.264-\sum\limits_{i=1}^{9}\frac{1}{A(i)})<0.951<1$ and 
$\sum\limits_{i=12}^{k}\frac{6}{A(i)}<6(1.264-\sum\limits_{i=1}^{11}\frac{1}{A(i)})<0.959<1$.
Corollary 2.3 allows us to conclude.
\end{proof}
\begin{prop}
$\chi^{2,2}_{\rho}(\mathbb{Z}^{2})\le 20$, $\chi^{2,3}_{\rho}(\mathbb{Z}^{2})\le 8$, $\chi^{2,4}_{\rho}(\mathbb{Z}^{2})\le 6$ and $\forall n\ge 5$, $\chi^{2,n}_{\rho}(\mathbb{Z}^{2})= 5$.
\end{prop}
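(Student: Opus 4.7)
The plan is to follow exactly the template used for the hexagonal lattice in Proposition 5.2: all color classes are obtained from the tiling of $\mathbb{Z}^{2}$ by five translates of the maximized 2-packing $X_{2}$ (noted just before Proposition \ref{sub4}), and finer classes are produced by iterating the subdivisions from Proposition \ref{sub4}. The four statements are handled case by case.

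For $n\ge 5$ I would treat both bounds at once. The upper bound is the trivial partition: tile $\mathbb{Z}^{2}$ by five copies of $X_{2}$ and color them with five distinct colors $2$; this is a $(2,5)$-coloring and hence a $(2,n)$-coloring for all $n\ge 5$. For the matching lower bound, since $s_{i}=2$ for $i\le n$, Proposition 2.2 gives $d(c_{i})\le 1/A(2)=1/5$ for each color $i$; four color classes therefore cover density at most $4/5<1$, so no 4-coloring can exist.

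For $n=4,3,2$ only upper bounds have to be produced, and each one is a cascade of the points of Proposition \ref{sub4}. For $n=4$ (sequence $2,2,2,2,3,3,\dots$), tile by five $X_{2}$'s, color four of them with four colors $2$, and split the remaining copy of $X_{2}$ into two 3-packings by Point 4 with $k=1$, giving the two colors $3$. For $n=3$ (sequence $2,2,2,3,3,3,4,4,\dots$), color three tiles of $X_{2}$ with the three colors $2$; split each of the two remaining $X_{2}$'s into two 3-packings (Point 2, $k=1$); three of the four resulting 3-packings become the three colors $3$, and the last 3-packing is split once more by Point 3 with $k=1$ into two 5-packings, which serve as the two colors $4$ (a 5-packing is a fortiori a 4-packing). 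For $n=2$ (sequence $2,2,3,3,4,4,\dots,11,11$, so eighteen non-2 colors), the same idea is pushed deeper: two tiles of $X_{2}$ become the two colors $2$, and each of the three remaining copies of $X_{2}$ is split by Point 1 or Point 2 and the resulting pieces are refined with Points 5 and 6, the values of $k$ and $m$ at each level being chosen so that the packing index of every leaf is at least the $s_{i}$ required by the color assigned to it. As in the previous propositions of this paper, the concrete choice will be recorded in a table in the same format as the one displayed after Proposition 5.2.

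The main obstacle is the $n=2$ case, where the density budget is nearly tight: the eighteen non-2 colors must cover the $3/5$ density left by the two color-2 classes, so the decomposition tree has to be designed with the correct arity at each level. Once the explicit coordinate descriptions from the proof of Proposition \ref{sub4} are invoked, verifying that every leaf of the tree is an $s_{i}$-packing reduces to an elementary check of distances in $\mathbb{Z}^{2}$, which I would relegate to the reader exactly as is done elsewhere in the paper.
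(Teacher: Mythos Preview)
Your plan is the paper's plan: partition $\mathbb{Z}^{2}$ into five translates of $X_{2}$ and then subdivide the non-$2$ copies via Proposition~\ref{sub4}. Your constructions for $n\ge5$, $n=4$ and $n=3$ coincide with those in the paper's Table~\ref{tabc4}, and your density lower bound for $n\ge5$ is exactly the intended one.

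One caveat for $n=2$. You say the three remaining $X_{2}$'s will be ``split by Point 1 or Point 2 and the resulting pieces are refined with Points 5 and 6''. Points 5 and 6 only refine \emph{within} the $(3k-1)$-series or the $(4k-1)$-series, and with those alone the last $X_{2}$ cannot be cut into exactly the twelve pieces needed for colors $6,\dots,11$ (nine $A_{8}$'s leave three pieces that refine only to $A_{17}$'s, eighteen $A_{11}$'s are too many, etc.). The paper's Table~\ref{tabc4} gets twenty by using the \emph{cross-series} refinement Point~4 with $k=3$: $X_{2}\to 9\,A_{8}$, keep six $A_{8}$'s for colors $6,6,7,7,8,8$, and split the remaining three $A_{8}$'s into six $A_{11}$'s for colors $9,\dots,11$. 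You already invoke Points 3 and 4 for $n=3,4$, so this is presumably a slip in the narrative rather than a conceptual gap; just make sure your promised table for $n=2$ uses Point~4 (or Point~3) and not only Points~5--6.
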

\begin{proof}
Using Proposition~\ref{sub4}, we define a $(2,n)$-packing coloring of $\mathbb{Z}^{2}$ for each $n=2,3,4$ and $n\ge5$.
$\mathbb{Z}^{2}$ can be partitioned into five 2-packings, hence $\forall n\ge 5$,  $\chi^{2,n}_{\rho}(\mathbb{Z}^{2})= 5$. The other colorings are described in Table~\ref{tabc4}.
\end{proof}
Soukal and Holub \cite{SO2010} have proven that $\chi^{1,1}_{\rho}(\mathbb{Z}^{2})\le 17$, and proposed a $24 \times 24$ pattern in order to color the square lattice.
Their pattern is recalled in Figure 6.
\begin{figure}[t]
\begin{center}
\begin{tikzpicture}
\foreach \u in {0,0.75, ...,8.25}
\foreach \v in {0,0.75, ...,8.25}
{
\node at (\u,\v){1};
\node at (\u+0.375,\v-0.375){1};
}
\node at (0,-0.25*1.5){3};
\node at (0.5*1.5,-0.25*1.5){4};
\node at (1*1.5,-0.25*1.5){8};
\node at (1.5*1.5,-0.25*1.5){5};
\node at (2*1.5,-0.25*1.5){3};
\node at (2.5*1.5,-0.25*1.5){2};
\node at (3*1.5,-0.25*1.5){3};
\node at (3.5*1.5,-0.25*1.5){6};
\node at (4*1.5,-0.25*1.5){12};
\node at (4.5*1.5,-0.25*1.5){7};
\node at (5*1.5,-0.25*1.5){3};
\node at (5.5*1.5,-0.25*1.5){2};
\node at (0,0.25*1.5){2};
\node at (0.5*1.5,0.25*1.5){3};
\node at (1*1.5,0.25*1.5){7};
\node at (1.5*1.5,0.25*1.5){4};
\node at (2*1.5,0.25*1.5){6};
\node at (2.5*1.5,0.25*1.5){3};
\node at (3*1.5,0.25*1.5){2};
\node at (3.5*1.5,0.25*1.5){3};
\node at (4*1.5,0.25*1.5){5};
\node at (4.5*1.5,0.25*1.5){4};
\node at (5*1.5,0.25*1.5){8};
\node at (5.5*1.5,0.25*1.5){3};
\node at (0,0.75*1.5){3};
\node at (0.5*1.5,0.75*1.5){2};
\node at (1*1.5,0.75*1.5){3};
\node at (1.5*1.5,0.75*1.5){12};
\node at (2*1.5,0.75*1.5){5};
\node at (2.5*1.5,0.75*1.5){4};
\node at (3*1.5,0.75*1.5){3};
\node at (3.5*1.5,0.75*1.5){2};
\node at (4*1.5,0.75*1.5){3};
\node at (4.5*1.5,0.75*1.5){6};
\node at (5*1.5,0.75*1.5){5};
\node at (5.5*1.5,0.75*1.5){7};
\node at (0,1.25*1.5){5};
\node at (0.5*1.5,1.25*1.5){3};
\node at (1*1.5,1.25*1.5){2};
\node at (1.5*1.5,1.25*1.5){3};
\node at (2*1.5,1.25*1.5){7};
\node at (2.5*1.5,1.25*1.5){8};
\node at (3*1.5,1.25*1.5){6};
\node at (3.5*1.5,1.25*1.5){3};
\node at (4*1.5,1.25*1.5){2};
\node at (4.5*1.5,1.25*1.5){3};
\node at (5*1.5,1.25*1.5){4};
\node at (5.5*1.5,1.25*1.5){15};
\node at (0,1.75*1.5){4};
\node at (0.5*1.5,1.75*1.5){7};
\node at (1*1.5,1.75*1.5){3};
\node at (1.5*1.5,1.75*1.5){2};
\node at (2*1.5,1.75*1.5){3};
\node at (2.5*1.5,1.75*1.5){5};
\node at (3*1.5,1.75*1.5){4};
\node at (3.5*1.5,1.75*1.5){9};
\node at (4*1.5,1.75*1.5){3};
\node at (4.5*1.5,1.75*1.5){2};
\node at (5*1.5,1.75*1.5){3};
\node at (5.5*1.5,1.75*1.5){6};
\node at (0,2.25*1.5){8};
\node at (0.5*1.5,2.25*1.5){5};
\node at (1*1.5,2.25*1.5){4};
\node at (1.5*1.5,2.25*1.5){3};
\node at (2*1.5,2.25*1.5){2};
\node at (2.5*1.5,2.25*1.5){3};
\node at (3*1.5,2.25*1.5){7};
\node at (3.5*1.5,2.25*1.5){5};
\node at (4*1.5,2.25*1.5){6};
\node at (4.5*1.5,2.25*1.5){3};
\node at (5*1.5,2.25*1.5){2};
\node at (5.5*1.5,2.25*1.5){3};
\node at (0,2.75*1.5){3};
\node at (0.5*1.5,2.75*1.5){6};
\node at (1*1.5,2.75*1.5){13};
\node at (1.5*1.5,2.75*1.5){7};
\node at (2*1.5,2.75*1.5){3};
\node at (2.5*1.5,2.75*1.5){2};
\node at (3*1.5,2.75*1.5){3};
\node at (3.5*1.5,2.75*1.5){4};
\node at (4*1.5,2.75*1.5){8};
\node at (4.5*1.5,2.75*1.5){5};
\node at (5*1.5,2.75*1.5){3};
\node at (5.5*1.5,2.75*1.5){2};
\node at (0,3.25*1.5){2};
\node at (0.5*1.5,3.25*1.5){3};
\node at (1*1.5,3.25*1.5){5};
\node at (1.5*1.5,3.25*1.5){4};
\node at (2*1.5,3.25*1.5){8};
\node at (2.5*1.5,3.25*1.5){3};
\node at (3*1.5,3.25*1.5){2};
\node at (3.5*1.5,3.25*1.5){3};
\node at (4*1.5,3.25*1.5){7};
\node at (4.5*1.5,3.25*1.5){4};
\node at (5*1.5,3.25*1.5){6};
\node at (5.5*1.5,3.25*1.5){3};
\node at (0,3.75*1.5){3};
\node at (0.5*1.5,3.75*1.5){2};
\node at (1*1.5,3.75*1.5){3};
\node at (1.5*1.5,3.75*1.5){6};
\node at (2*1.5,3.75*1.5){5};
\node at (2.5*1.5,3.75*1.5){7};
\node at (3*1.5,3.75*1.5){3};
\node at (3.5*1.5,3.75*1.5){2};
\node at (4*1.5,3.75*1.5){3};
\node at (4.5*1.5,3.75*1.5){9};
\node at (5*1.5,3.75*1.5){5};
\node at (5.5*1.5,3.75*1.5){4};
\node at (0,4.25*1.5){6};
\node at (0.5*1.5,4.25*1.5){3};
\node at (1*1.5,4.25*1.5){2};
\node at (1.5*1.5,4.25*1.5){3};
\node at (2*1.5,4.25*1.5){4};
\node at (2.5*1.5,4.25*1.5){14};
\node at (3*1.5,4.25*1.5){5};
\node at (3.5*1.5,4.25*1.5){3};
\node at (4*1.5,4.25*1.5){2};
\node at (4.5*1.5,4.25*1.5){3};
\node at (5*1.5,4.25*1.5){7};
\node at (5.5*1.5,4.25*1.5){8};
\node at (0,4.75*1.5){4};
\node at (0.5*1.5,4.75*1.5){9};
\node at (1*1.5,4.75*1.5){3};
\node at (1.5*1.5,4.75*1.5){2};
\node at (2*1.5,4.75*1.5){3};
\node at (2.5*1.5,4.75*1.5){6};
\node at (3*1.5,4.75*1.5){4};
\node at (3.5*1.5,4.75*1.5){7};
\node at (4*1.5,4.75*1.5){3};
\node at (4.5*1.5,4.75*1.5){2};
\node at (5*1.5,4.75*1.5){3};
\node at (5.5*1.5,4.75*1.5){5};
\node at (0,5.25*1.5){7};
\node at (0.5*1.5,5.25*1.5){5};
\node at (1*1.5,5.25*1.5){6};
\node at (1.5*1.5,5.25*1.5){3};
\node at (2*1.5,5.25*1.5){2};
\node at (2.5*1.5,5.25*1.5){3};
\node at (3*1.5,5.25*1.5){8};
\node at (3.5*1.5,5.25*1.5){5};
\node at (4*1.5,5.25*1.5){4};
\node at (4.5*1.5,5.25*1.5){3};
\node at (5*1.5,5.25*1.5){2};
\node at (5.5*1.5,5.25*1.5){3};
\node at (0.25*1.5,0){11};
\node at (0.75*1.5,0){2};
\node at (1.25*1.5,0){3};
\node at (1.75*1.5,0){2};
\node at (2.25*1.5,0){13};
\node at (2.75*1.5,0){5};
\node at (3.25*1.5,0){4};
\node at (3.75*1.5,0){2};
\node at (4.25*1.5,0){3};
\node at (4.75*1.5,0){2};
\node at (5.25*1.5,0){9};
\node at (5.75*1.5,0){5};
\node at (0.25*1.5,0.5*1.5){6};
\node at (0.75*1.5,0.5*1.5){5};
\node at (1.25*1.5,0.5*1.5){2};
\node at (1.75*1.5,0.5*1.5){3};
\node at (2.25*1.5,0.5*1.5){2};
\node at (2.75*1.5,0.5*1.5){11};
\node at (3.25*1.5,0.5*1.5){7};
\node at (3.75*1.5,0.5*1.5){10};
\node at (4.25*1.5,0.5*1.5){2};
\node at (4.75*1.5,0.5*1.5){3};
\node at (5.25*1.5,0.5*1.5){2};
\node at (5.75*1.5,0.5*1.5){4};
\node at (0.25*1.5,1*1.5){10};
\node at (0.75*1.5,1*1.5){4};
\node at (1.25*1.5,1*1.5){9};
\node at (1.75*1.5,1*1.5){2};
\node at (2.25*1.5,1*1.5){3};
\node at (2.75*1.5,1*1.5){2};
\node at (3.25*1.5,1*1.5){5};
\node at (3.75*1.5,1*1.5){4};
\node at (4.25*1.5,1*1.5){16};
\node at (4.75*1.5,1*1.5){2};
\node at (5.25*1.5,1*1.5){3};
\node at (5.75*1.5,1*1.5){2};
\node at (0.25*1.5,1.5*1.5){2};
\node at (0.75*1.5,1.5*1.5){17};
\node at (1.25*1.5,1.5*1.5){5};
\node at (1.75*1.5,1.5*1.5){4};
\node at (2.25*1.5,1.5*1.5){2};
\node at (2.75*1.5,1.5*1.5){3};
\node at (3.25*1.5,1.5*1.5){2};
\node at (3.75*1.5,1.5*1.5){14};
\node at (4.25*1.5,1.5*1.5){5};
\node at (4.75*1.5,1.5*1.5){11};
\node at (5.25*1.5,1.5*1.5){2};
\node at (5.75*1.5,1.5*1.5){3};
\node at (0.25*1.5,2*1.5){3};
\node at (0.75*1.5,2*1.5){2};
\node at (1.25*1.5,2*1.5){11};
\node at (1.75*1.5,2*1.5){6};
\node at (2.25*1.5,2*1.5){10};
\node at (2.75*1.5,2*1.5){2};
\node at (3.25*1.5,2*1.5){3};
\node at (3.75*1.5,2*1.5){2};
\node at (4.25*1.5,2*1.5){4};
\node at (4.75*1.5,2*1.5){7};
\node at (5.25*1.5,2*1.5){5};
\node at (5.75*1.5,2*1.5){2};
\node at (0.25*1.5,2.5*1.5){2};
\node at (0.75*1.5,2.5*1.5){3};
\node at (1.25*1.5,2.5*1.5){2};
\node at (1.75*1.5,2.5*1.5){5};
\node at (2.25*1.5,2.5*1.5){4};
\node at (2.75*1.5,2.5*1.5){15};
\node at (3.25*1.5,2.5*1.5){2};
\node at (3.75*1.5,2.5*1.5){3};
\node at (4.25*1.5,2.5*1.5){2};
\node at (4.75*1.5,2.5*1.5){10};
\node at (5.25*1.5,2.5*1.5){4};
\node at (5.75*1.5,2.5*1.5){9};
\node at (0.25*1.5,3*1.5){4};
\node at (0.75*1.5,3*1.5){2};
\node at (1.25*1.5,3*1.5){3};
\node at (1.75*1.5,3*1.5){2};
\node at (2.25*1.5,3*1.5){9};
\node at (2.75*1.5,3*1.5){5};
\node at (3.25*1.5,3*1.5){11};
\node at (3.75*1.5,3*1.5){2};
\node at (4.25*1.5,3*1.5){3};
\node at (4.75*1.5,3*1.5){2};
\node at (5.25*1.5,3*1.5){12};
\node at (5.75*1.5,3*1.5){5};
\node at (0.25*1.5,3.5*1.5){7};
\node at (0.75*1.5,3.5*1.5){10};
\node at (1.25*1.5,3.5*1.5){2};
\node at (1.75*1.5,3.5*1.5){3};
\node at (2.25*1.5,3.5*1.5){2};
\node at (2.75*1.5,3.5*1.5){4};
\node at (3.25*1.5,3.5*1.5){6};
\node at (3.75*1.5,3.5*1.5){5};
\node at (4.25*1.5,3.5*1.5){2};
\node at (4.75*1.5,3.5*1.5){3};
\node at (5.25*1.5,3.5*1.5){2};
\node at (5.75*1.5,3.5*1.5){11};
\node at (0.25*1.5,4*1.5){5};
\node at (0.75*1.5,4*1.5){4};
\node at (1.25*1.5,4*1.5){16};
\node at (1.75*1.5,4*1.5){2};
\node at (2.25*1.5,4*1.5){3};
\node at (2.75*1.5,4*1.5){2};
\node at (3.25*1.5,4*1.5){10};
\node at (3.75*1.5,4*1.5){4};
\node at (4.25*1.5,4*1.5){13};
\node at (4.75*1.5,4*1.5){2};
\node at (5.25*1.5,4*1.5){3};
\node at (5.75*1.5,4*1.5){2};
\node at (0.25*1.5,4.5*1.5){2};
\node at (0.75*1.5,4.5*1.5){15};
\node at (1.25*1.5,4.5*1.5){5};
\node at (1.75*1.5,4.5*1.5){11};
\node at (2.25*1.5,4.5*1.5){2};
\node at (2.75*1.5,4.5*1.5){3};
\node at (3.25*1.5,4.5*1.5){2};
\node at (3.75*1.5,4.5*1.5){17};
\node at (4.25*1.5,4.5*1.5){5};
\node at (4.75*1.5,4.5*1.5){4};
\node at (5.25*1.5,4.5*1.5){2};
\node at (5.75*1.5,4.5*1.5){3};
\node at (0.25*1.5,5*1.5){3};
\node at (0.75*1.5,5*1.5){2};
\node at (1.25*1.5,5*1.5){4};
\node at (1.75*1.5,5*1.5){7};
\node at (2.25*1.5,5*1.5){5};
\node at (2.75*1.5,5*1.5){2};
\node at (3.25*1.5,5*1.5){3};
\node at (3.75*1.5,5*1.5){2};
\node at (4.25*1.5,5*1.5){11};
\node at (4.75*1.5,5*1.5){6};
\node at (5.25*1.5,5*1.5){10};
\node at (5.75*1.5,5*1.5){2};
\node at (0.25*1.5,5.5*1.5){2};
\node at (0.75*1.5,5.5*1.5){3};
\node at (1.25*1.5,5.5*1.5){2};
\node at (1.75*1.5,5.5*1.5){10};
\node at (2.25*1.5,5.5*1.5){4};
\node at (2.75*1.5,5.5*1.5){9};
\node at (3.25*1.5,5.5*1.5){2};
\node at (3.75*1.5,5.5*1.5){3};
\node at (4.25*1.5,5.5*1.5){2};
\node at (4.75*1.5,5.5*1.5){5};
\node at (5.25*1.5,5.5*1.5){4};
\node at (5.75*1.5,5.5*1.5){14};
\end{tikzpicture}
\end{center}
\caption{A $24\times 24$ pattern \cite{SO2010}.}
\end{figure}
\begin{prop}
$\chi^{3,3}_{\rho}(\mathbb{Z}^{2})\le 33$.
\end{prop}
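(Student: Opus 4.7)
The plan is to construct an explicit $(3,3)$-packing coloring of $\mathbb{Z}^{2}$ using at most $33$ colors. In this sequence, $s_{1}=s_{2}=s_{3}=3$, $s_{4}=s_{5}=s_{6}=4$, and so on up to $s_{31}=s_{32}=s_{33}=13$, so one must provide, for each $j\in\{3,4,\ldots,13\}$, exactly three color classes that are $j$-packings.

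I would start from the partition of $\mathbb{Z}^{2}$ into $8$ translates of the maximized $3$-packing $X_{3}$ (noted just after Proposition~\ref{sub5}). Three copies are used unchanged as colors $1,2,3$. Three further copies of $X_{3}$ are subdivided via Point~$1$ of Proposition~\ref{sub5} with $k=2$, producing $12$ copies of $X_{7}$; these are assigned to colors $4$--$15$, since all those colors demand packings of distance at most $7$.

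For the two remaining copies of $X_{3}$, which must supply the $18$ color classes $16$--$33$ (distances $8$ through $13$), I would use nested subdivisions. First apply Point~$1$ with $k=3$ to each copy, obtaining nine $X_{11}$'s per copy. Then, on a carefully chosen subset of those $X_{11}$-pieces, apply Point~$2$ of Proposition~\ref{sub5} with $m=2$ to refine each into four $X_{23}$'s. The $X_{11}$-pieces that are kept intact serve colors $16$--$27$ (since $X_{11}$ is also an $8$-, $9$- and $10$-packing), while the $X_{23}$-pieces, being $23$-packings, cover colors $28$--$33$. The resulting coloring can then be summarized in a table in the same format as those of the previous propositions of this section.

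I expect the main obstacle to be the exact bookkeeping at the last step. Since the density of any $X_{4k-1}$-piece coming from these subdivisions is rigidly $1/(8k^{2})$, simply keeping all eighteen pieces as $X_{11}$'s matches the $18$ color slots but leaves no $12$- or $13$-packing for colors $28$--$33$, whereas fully refining each $X_{11}$ into four $X_{23}$'s overshoots the available color slots by a wide margin. The delicate part is therefore to choose how many $X_{11}$'s to refine and how to group the resulting $X_{23}$-pieces so that the total number of color classes is exactly $33$ and $\mathbb{Z}^{2}$ is still partitioned, using the fact that the four $X_{23}$'s subdividing a single $X_{11}$ have union equal to that $X_{11}$ (which remains an $11$-packing) to recombine any surplus $X_{23}$-pieces where needed.
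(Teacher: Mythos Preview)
Your approach has a genuine density obstruction that the ``bookkeeping'' cannot fix. Every packing produced by Proposition~\ref{sub5} is an $X_{4k-1}$ with density exactly $1/(8k^{2})$, so the densest admissible piece for a colour requiring an $s$-packing is $X_{7}$ (density $1/32$) when $s\in\{4,5,6,7\}$, $X_{11}$ (density $1/72$) when $s\in\{8,9,10,11\}$, and $X_{15}$ (density $1/128$) when $s\in\{12,13\}$. With the $33$ slots of a $(3,3)$-colouring this gives at most
\[
3\cdot\tfrac{1}{8}+12\cdot\tfrac{1}{32}+12\cdot\tfrac{1}{72}+6\cdot\tfrac{1}{128}
=\tfrac{72+72+32+9}{192}=\tfrac{185}{192}<1,
\]
so no choice of pieces from Proposition~\ref{sub5} can partition $\mathbb{Z}^{2}$ into $33$ classes meeting the required distances. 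Your specific plan is even sparser, since you use $X_{23}$ (density $1/288$) instead of $X_{15}$ for colours $28$--$33$, and the ``recombine'' trick only reassembles four $X_{23}$'s back into an $X_{11}$, which is not a $12$- or $13$-packing. In short, the difficulty you flag is not bookkeeping but a hard density deficit of $7/192$.

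The paper therefore abandons the $X_{3}$-subdivision machinery for this bound and instead recycles the Soukal--Holub $24\times24$ pattern from \cite{SO2010}. That pattern already provides sets $B_{2},\ldots,B_{17}$ with $B_{i}$ an $i$-packing, and $B_{1}$ is the translate of $\bigcup_{i\ge2}B_{i}$ by $(1,0)$; writing $B'_{i}$ for the translated copies, the paper assigns three colours each in $\{3,\ldots,13\}$ to the $B_{i}$ and $B'_{i}$ (splitting $B'_{2}$ into four $7$-packings, merging $B_{16}\cup B_{17}$ into a single $11$-packing, and using $B_{9}$ as an $8$-packing, $B_{14},B_{15}$ as $10$--$13$-packings). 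The crucial gain is that this pattern supplies genuine $i$-packings for every $i\le13$ with densities much closer to $1/A(i)$ than the $1/(8k^{2})$ available from Proposition~\ref{sub5}, which is exactly what is needed to close the gap.
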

\begin{proof}
In the pattern of Figure 6, $B_{i}$ denotes the set of vertices colored by $i$.
Note that $B_{2}$ and $B_{3}$ are both 3-packings.
It can be seen that $B_{16}\cup B_{17}$ form a 11-packing and that four 7-packings form a partition of $B_{2}$ or $B_{3}$.
In order to color $\mathbb{Z}^{2}$ starting with 3, we partition $B_{1}$ into sixteen $i$-packings, $2\le i\le 17$ (since $B_{1}$ is $\bigcup\limits_{i=2}^{17}B_{i}$ translated by the vector $(1,0)$). Let $B'_{i}$ denote a copy of $B_{i}$ translated by $(1,0)$. We use two colors 3 to color $B_{2}$ and $B_{3}$, and one color $i$ in order to color $B_{i}$ for $i\in[4,8]$. We color $B'_{i}$ by one color $i$, for $i\in[3,8]$ and $B'_{2}$ that is a 3-packing is colored by one color 4, one color 5, one color 6 and one color 7. We use the remaining color 8 to color $B_{9}$. We use two colors 9 in order to color $B_{16}$, $B'_{16}$, $B_{17}$ and $B'_{17}$. The remaining color 9 is used to color $B'_{9}$. We use two colors $i$ in order to color $B_{i}$ and $B'_{i}$ for $i\in[10,13]$. The remaining colors 10, 11 ,12 and 13 are used to color $B_{14}$, $B'_{14}$, $B_{15}$ and $B'_{15}$.
\end{proof}
\begin{prop}
$\chi^{3,4}_{\rho}(\mathbb{Z}^{2})\le 20$, $\chi^{3,5}_{\rho}(\mathbb{Z}^{2})\le 17$, $\chi^{3,6}_{\rho}(\mathbb{Z}^{2})\le 14$ and $\forall n\ge 8$, $\chi^{3,n}_{\rho}(\mathbb{Z}^{2})= 8$.
\end{prop}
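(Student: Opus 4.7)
The plan is to combine the partition of $\mathbb{Z}^{2}$ into eight copies of the maximized $3$-packing $X_{3}$ (recalled just before Proposition~\ref{sub5}) with Point~1 of Proposition~\ref{sub5}, taking $k=2$, which subdivides any such copy into four $7$-packings. Throughout, note that a $7$-packing is also an $s$-packing for every $s\le 7$, so it can legally receive any color $s_{i}\in\{4,5,6,7\}$.

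For the case $n\ge 8$, the sequence satisfies $s_{1}=\dots=s_{8}=3$. Assigning each of the eight translates of $X_{3}$ its own color thus yields $\chi^{3,n}_{\rho}(\mathbb{Z}^{2})\le 8$; the matching lower bound $\chi^{3,n}_{\rho}(\mathbb{Z}^{2})\ge 8$ comes from Proposition~2.2 applied to $A(3)=8$ (equivalently Corollary~2.4), so equality holds.

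For $n\in\{4,5,6\}$, I would use $n$ of the eight copies of $X_{3}$ as color classes for $s_{1}=\dots=s_{n}=3$, and partition each of the remaining $8-n$ copies into four $7$-packings via Proposition~\ref{sub5}.1 with $k=2$. This produces $4(8-n)$ pairwise disjoint $7$-packings to be absorbed by the color slots $s_{n+1},s_{n+2},\dots$, whose values equal $4,5,6,\dots$ each repeated $n$ times. The only thing to check is that the slot count matches:
\begin{itemize}
\item $n=6$: $8$ $7$-packings to place, with six slots of value $4$ (namely $s_{7},\dots,s_{12}$) and two slots of value $5$ ($s_{13},s_{14}$); total $14$ colors.
\item $n=5$: $12$ $7$-packings, absorbed by five slots of value $4$, five of value $5$, and two of value $6$ ($s_{16},s_{17}$); total $17$ colors.
\item $n=4$: $16$ $7$-packings, absorbed by four slots each of values $4,5,6,7$ ($s_{5},\dots,s_{20}$); total $20$ colors.
\end{itemize}
In every case the bookkeeping matches exactly, producing the claimed upper bounds.

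I do not expect a real obstacle here: the existence of the $X_{3}$-partition and of the $k=2$ subdivision are given by Proposition~\ref{sub5}, and the only delicate point is the arithmetic identity $4(8-n)=\sum_{j\ge 4}(\text{number of slots of value }j\text{ that we actually need})$, which is verified case by case above. The only conceptual subtlety is to observe that $7$-packings are simultaneously $4$-, $5$-, $6$- and $7$-packings, which is what allows one uniform subdivision to serve all three cases and to recycle the construction presented in the table format used throughout Section~3.
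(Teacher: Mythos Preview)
Your proposal is correct and follows essentially the same construction as the paper: partition $\mathbb{Z}^{2}$ into eight translates of $X_{3}$, keep $n$ of them as color classes for the value~$3$, and split each of the remaining $8-n$ copies into four $7$-packings via Proposition~\ref{sub5}.1 with $k=2$, assigning them to the successive color slots of values $4,5,6,7$. Your additional remark that the lower bound $\chi^{3,n}_{\rho}(\mathbb{Z}^{2})\ge 8$ follows from $A(3)=8$ is a welcome clarification that the paper leaves implicit (it can also be read off Proposition~4.5).
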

\begin{proof}
Using Proposition~\ref{sub5}, we define a $(3,n)$-packing coloring of $\mathbb{Z}^{2}$ for each $n=4,5,6$ and $n\ge8$.
$\mathbb{Z}^{2}$ can be partitioned into eight 3-packings, hence $\forall n\ge 8$, $\chi^{3,n}_{\rho}(\mathbb{Z}^{2})= 8$. The other colorings are described in Table~\ref{tabc5}.
\end{proof}
\begin{prop}
$\chi^{4,4}_{\rho}(\mathbb{Z}^{2})\le 56$, $\chi^{4,5}_{\rho}(\mathbb{Z}^{2})\le 34$, $\chi^{4,6}_{\rho}(\mathbb{Z}^{2})\le 28$ and $\forall n\ge 13$, $\chi^{4,n}_{\rho}(\mathbb{Z}^{2})= 13$.
\end{prop}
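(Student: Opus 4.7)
The plan is to treat the four claims separately, using the subdivisions of $X_{4}$ provided by Proposition~\ref{sub6} together with the fact that $\mathbb{Z}^{2}$ is tiled by exactly $13$ translates of $X_{4}$, each of which is a $4$-packing (this is noted right before the proposition, since $d(X_{4})=1/A(4)=1/13$).

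For $n\ge 13$ I would handle both directions. The upper bound is immediate: coloring the $13$ translates of $X_{4}$ with distinct colors $1,\dots,13$ is a valid $(4,n)$-coloring because $s_{i}=4$ for all $1\le i\le 13$. For the matching lower bound I would invoke Corollary~2.3: since every $s_{i}\ge 4$, we have $1/A(s_{i})\le 1/A(4)=1/13$, so at least $13$ colors are required for the series $\sum 1/A(s_{i})$ to reach $1$. Hence $\chi^{4,n}_{\rho}(\mathbb{Z}^{2})=13$.

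For $n\in\{4,5,6\}$ I would build an explicit coloring. First, reserve $n$ of the $13$ translates of $X_{4}$ for colors $1,\dots,n$, each used as a $4$-packing. Then subdivide each of the remaining $13-n$ translates using Proposition~\ref{sub6}: Point~1 splits $X_{4}$ into $k^{2}$ copies of $(5k-1)$-packings, Point~2 into $2k^{2}$ copies of $(6k-1)$-packings, and Points~5 and~6 allow further splitting of any such packing into $m^{2}$ copies of a $(5mk-1)$- or $(6mk-1)$-packing; Points~3 and~4 provide regroupings between the two families when handy. For each target strength $s=4+\lfloor(i-1)/n\rfloor$ appearing in the sequence, the nested parameters $(k,m)$ are chosen so that the total number of packings produced at each required strength is exactly $n$ (up to the last block, which may be smaller). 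As in previous propositions, the resulting allocation would be displayed in a table (of the same format as Table~\ref{tabc5}) rather than recomputed inline.

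The hard part will be the case $n=4$: only $9$ residual $X_{4}$'s are available to cover the strengths $5,6,\dots,17$, with four colors required per strength, and the number of subdivisions of $X_{4}$ into $s$-packings drops roughly like $1/s^{2}$ while the demand stays constant. The allocation is therefore tight at the high end, forcing iterated applications of Points~5 and~6 with small parameters $(k,m)$ (typically a first split into $k^{2}$ or $2k^{2}$ pieces, followed by an $m^{2}$-refinement of some of them). For each piece one must check that the value $5mk-1$ or $6mk-1$ attained really meets or exceeds the target $s_{i}$ of the color it is assigned, but once the table is written down this verification is purely arithmetic.
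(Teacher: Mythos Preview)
Your outline is correct and matches the paper's proof exactly: partition $\mathbb{Z}^{2}$ into thirteen translates of $X_{4}$, color $n$ of them directly with the $n$ available colors of strength~$4$, and subdivide the remaining $13-n$ copies via Proposition~\ref{sub6} (using Points~1,~2 and, for $n=4$, Point~4 to pass from $A_{14}$'s to $A_{17}$'s), with the explicit allocations recorded in a table (Table~\ref{tabc3} in the paper). One small remark: the lower bound for $n\ge 13$ is more cleanly derived from Proposition~2.2 (each color class has density at most $1/A(4)=1/13$, so a partition needs at least $13$ classes) than from Corollary~2.3 as stated, which concerns the full infinite series.
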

\begin{proof}
Using Proposition~\ref{sub6}, we define a $(4,n)$-packing coloring of $\mathbb{Z}^{2}$ for each $n=4,5,6$ and $n\ge13$.
$\mathbb{Z}^{2}$ can be partitioned into thirteen 4-packings, hence $\forall n\ge 13$ $\chi^{4,n}_{\rho}(\mathbb{Z}^{2})= 13$. The other colorings are described in Table~\ref{tabc3}.
\end{proof}
\subsection{Triangular lattice}
\begin{prop}
$\chi^{1,1}_{\rho}(\mathscr{T})=\infty, \ \chi^{3,2}_{\rho}(\mathscr{T})=\infty, \ \chi^{4,3}_{\rho}(\mathscr{T})=\infty,\ \chi^{5,4}_{\rho}(\mathscr{T})=\infty, \ \chi^{7,5}_{\rho}(\mathscr{T})=\infty\ and\  \chi^{8,6}_{\rho}(\mathscr{T})=\infty.$
\end{prop}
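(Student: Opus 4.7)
The plan is to mirror the template already used for $\mathscr{H}$ in Proposition~5.1 and for $\mathbb{Z}^{2}$ in Proposition~5.5: apply the contrapositive of Corollary~2.3. That is, for each pair $(d,n)$ in the statement I would show
\[ \sum_{i=d}^{\infty} \frac{n}{A(i)} \;<\; 1, \]
which immediately forces $\chi^{d,n}_{\rho}(\mathscr{T})=\infty$. All six infinite conclusions will follow at once once a single closed-form (or numerical) bound on the full series $S=\sum_{i\ge1}1/A(i)$ is in hand.

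The first step is to evaluate $S$ by splitting odd and even indices using Proposition~2.9:
\[ S = \sum_{k=1}^{\infty}\frac{1}{3k^{2}+3k+1} + \sum_{k=0}^{\infty}\frac{1}{3(k+1)^{2}}. \]
The second sum is $\pi^{2}/18$ by Basel's identity. For the first, I would rewrite $3k^{2}+3k+1 = 3(k+1/2)^{2}+1/4$ and invoke the classical Mittag--Leffler identity
\[ \sum_{k=0}^{\infty}\frac{1}{(k+1/2)^{2}+c^{2}} = \frac{\pi}{2c}\tanh(\pi c) \]
with $c=1/(2\sqrt{3})$, obtaining $\sum_{k\ge0}1/(3k^{2}+3k+1)=(\pi/\sqrt{3})\tanh(\pi\sqrt{3}/6)$. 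Subtracting the $k=0$ term, which equals~$1$, yields
\[ S = \tfrac{1}{3}\sqrt{3}\,\pi\tanh\!\bigl(\tfrac{1}{6}\pi\sqrt{3}\bigr) - 1 + \tfrac{1}{18}\pi^{2}, \]
and a crude decimal evaluation gives $S<0.854$. In particular $S<1$, which already yields $\chi^{1,1}_{\rho}(\mathscr{T})=\infty$ and recovers the Finbow--Rall result \cite{FIN2010}.

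The second step is the case-by-case verification for the remaining five pairs. Using $A(1)=3$, $A(2)=7$, $A(3)=12$, $A(4)=19$, $A(5)=27$, $A(6)=37$, $A(7)=48$, I would compute the partial sums $S_{d-1}=\sum_{i=1}^{d-1}1/A(i)$ for $d\in\{3,4,5,7,8\}$ and check that $n(S-S_{d-1})<1$ in each case; a direct calculation gives the bounds $0.76,\ 0.89,\ 0.97,\ 0.89,\ 0.94$ respectively, all strictly less than~$1$. Applying Corollary~2.3 then concludes each case.

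The only genuine obstacle is book-keeping: the five numerical checks must be executed carefully enough that the strict inequality survives the round-offs, especially for the pair $(5,4)$ where $4(S-S_{4})$ is the tightest margin below~$1$. No new structural idea is required beyond the closed-form of $S$ and Proposition~2.9.
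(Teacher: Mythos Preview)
Your proposal is correct and follows essentially the same approach as the paper: apply the contrapositive of Corollary~2.3, split the series $\sum_{i\ge1}1/A(i)$ into even and odd parts via Proposition~2.9 to obtain the closed form $\frac{1}{3}\sqrt{3}\,\pi\tanh(\frac{1}{6}\pi\sqrt{3})+\frac{1}{18}\pi^{2}-1<0.854$, and then verify numerically that $n(0.854-\sum_{i<d}1/A(i))<1$ for each of the six pairs. Your numerical bounds (0.854, 0.76, 0.89, 0.97, 0.89, 0.94) agree with the paper's (0.854, 0.755, 0.883, 0.966, 0.887, 0.940) up to rounding, and you even supply the Mittag--Leffler justification for the $\tanh$ term that the paper leaves implicit.
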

\begin{proof}
Let $\mathscr{T}$ be the triangular lattice and $k$ be an integer, $k\ge8$.\newline
$\sum\limits_{i=1}^{k}\frac{1}{A(i)}=\sum\limits_{i=1}^{k}\frac{1}{A(2i)}+\sum\limits_{i=0}^{k}\frac{1}{A(2i+1)}=\sum\limits_{i=1}^{k}\frac{1}{3i^{2}+3i +1}+\sum\limits_{i=0}^{k}\frac{1}{3i^{2}+6i+3}<\frac{1}{3}\sqrt{3}\pi\tanh(\frac{1}{6}\pi\sqrt{3})+\frac{1}{18}\pi^{2}-1<0.854.$\newline
Therefore: $\sum\limits_{i=1}^{k}\frac{1}{A(i)}<0.854<1$, 
$\sum\limits_{i=3}^{k}\frac{2}{A(i)}<2(0.854-\sum\limits_{i=1}^{2}\frac{1}{A(i)})<0.755<1$, 
$\sum\limits_{i=4}^{k}\frac{3}{A(i)}<3(0.854-\sum\limits_{i=1}^{3}\frac{1}{A(i)})<0.883<1$, 
$\sum\limits_{i=5}^{k}\frac{4}{A(i)}<4(0.854-\sum\limits_{i=1}^{4}\frac{1}{A(i)})<0.966<1$, 
$\sum\limits_{i=7}^{k}\frac{5}{A(i)}<5(0.854-\sum\limits_{i=1}^{6}\frac{1}{A(i)})<0.887<1$ and 
$\sum\limits_{i=8}^{k}\frac{6}{A(i)}<6(0.854-\sum\limits_{i=1}^{7}\frac{1}{A(i)})<0.940<1$.
Corollary 2.3 allows us to conclude.
\end{proof}
\begin{prop}
$\chi^{1,2}_{\rho}(\mathscr{T})\le 6$ and $\forall n\ge 3$, $\chi^{1,n}_{\rho}(\mathscr{T})= 3$.
\end{prop}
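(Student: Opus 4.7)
The statement has two parts. The plan is to deal with the equality $\chi^{1,n}_\rho(\mathscr{T}) = 3$ for $n \ge 3$ first, then build an explicit $6$-coloring for the $(1,2)$-case.

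For the equality, the lower bound is immediate: since $s_1 = 1$, any color class of color $1$ is independent, and $\mathscr{T}$ contains triangles, so $\chi(\mathscr{T}) = 3$, forcing $\chi^{1,n}_\rho(\mathscr{T}) \ge 3$ for every $n$. For the upper bound, the paragraph introducing Proposition~\ref{sub7} already records that three translates of $X_1$ partition $\mathscr{T}$ into independent sets; when $n \ge 3$ the sequence $S$ begins $s_1 = s_2 = s_3 = 1$, so this proper $3$-coloring is itself an $S$-packing coloring.

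For the bound $\chi^{1,2}_\rho(\mathscr{T}) \le 6$, the sequence is $S = (1,1,2,2,3,3,\ldots)$, so a $6$-coloring must supply two $1$-packings, two $2$-packings, and two $3$-packings. My construction is: partition $\mathscr{T}$ into three translates of $X_1$, and color two of them with colors $1$ and $2$; let $X$ denote the third translate. Apply Proposition~\ref{sub7}(1) with $k = 2$ to partition $X$ into four $3$-packings $Z_1, Z_2, Z_3, Z_4$. Color $Z_1$ with $5$ and $Z_2$ with $6$, which is valid since both are $3$-packings. For $Z_3$ and $Z_4$, use colors $3$ and $4$: because distance $> 3$ implies distance $> 2$, every $3$-packing is automatically a $2$-packing, so this assignment respects the sequence.

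I do not expect a real obstacle. The only non-bookkeeping observation is the inclusion ``$3$-packing $\subseteq$ $2$-packing'', which lets the four-way $3$-packing decomposition of $X$ from Proposition~\ref{sub7}(1) with $k=2$ absorb both the $2$-packing color slots and the $3$-packing color slots simultaneously. Everything else is assembly of partitions already provided in Section~3, and the standard summary table (in the format introduced for Proposition~5.2) can record the scheme compactly.
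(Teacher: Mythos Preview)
Your proof is correct and follows essentially the same route as the paper: both arguments partition $\mathscr{T}$ into three translates of $X_1$, use two of them for the two $1$-packing colors, and split the third via Proposition~\ref{sub7}(1) with $k=2$ into four $3$-packings that absorb the two $2$-packing colors and the two $3$-packing colors. Your explicit lower bound for the $n\ge 3$ case (via $\chi(\mathscr{T})=3$) is a small addition the paper leaves implicit.
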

\begin{proof}
Using Proposition~\ref{sub7}, we define a $(1,n)$-packing coloring of $\mathscr{T}$ for each $n=2$ and $n\ge3$.
$\mathscr{T}$ can be partitioned into three independent sets, hence $\forall n\ge 3$, $\chi^{1,n}_{\rho}(\mathscr{T})= 3$. The other coloring is described in the following table.
\begin{center}
\begin{tabular}{| c | c | c | c |}
\hline
\multirow{3}{*}{(1,2)-packing} & \multirow{2}{*}{P} & \multirow{2}{*}{2$\times X_{1}$} & $X_{1}$ \\ \cline{4-4}
& & & 4$\times A_{3}$ \\ \cline{2-4}
& C & 2$\times$1 & 2$\times$2, 2$\times$3 \\ \hline 
\end{tabular}
\end{center}
\end{proof}
\begin{prop}
$\chi^{2,4}_{\rho}(\mathscr{T})\le 16$, $\chi^{2,5}_{\rho}(\mathscr{T})\le 13$, $\chi^{2,6}_{\rho}(\mathscr{T})\le 10$ and $\forall n\ge 7$, $\chi^{2,n}_{\rho}(\mathscr{T})= 7$.
\end{prop}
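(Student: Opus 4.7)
The strategy mirrors the one used in the subsections above: for each value of $n$, exhibit an explicit $(2,n)$-packing coloring of $\mathscr{T}$ witnessing the claimed upper bound, built from the density-tight partition of $\mathscr{T}$ into seven translates of the maximized 2-packing $X_{2}$, together with Proposition~\ref{sub8} applied to subdivide the spare translates.

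For $n \ge 7$, the upper bound is immediate: the seven translates of $X_{2}$ that partition $\mathscr{T}$ are assigned the first seven colors, each of type $s_i = 2$, so $\chi^{2,n}_{\rho}(\mathscr{T}) \le 7$. For the matching lower bound, any coloring with at most six colors would force every color class to be a 2-packing (since $n \ge 7 > 6$), and the density bound $d(c_i) \le 1/A(2) = 1/7$ coming from Proposition 2.1 and Proposition 2.9 then yields $\sum_i d(c_i) \le 6/7 < 1$, contradicting the fact that the color classes cover $\mathscr{T}$.

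For $n \in \{4,5,6\}$ the central tool is Proposition~\ref{sub8}(1) with $k=2$, which partitions any copy of $X_{2}$ into four 5-packings; since a 5-packing is automatically an $s$-packing for every $s \le 5$, each such block can fill a color slot of type 3, 4, or 5. The arithmetic is forced by the density factor 7 and the subdivision factor 4. For $n=6$, six translates are used as 2-packings and the one spare translate yields four 5-packings filling four of the six available type-3 slots, totalling $6+4=10$ colors. For $n=5$, five translates are 2-packings and the two spare translates yield eight 5-packings, filling the five type-3 slots together with three of the five type-4 slots, totalling $5+5+3=13$ colors. For $n=4$, four translates are 2-packings and the three spare translates yield twelve 5-packings, distributed as four colors of type 3, four of type 4, and four of type 5, totalling $4+4+4+4=16$ colors.

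I expect no serious obstacle. In each case no color type receives more than the $n$ slots the sequence allows, and the $s_i$-packing condition is satisfied automatically because a $j$-packing is also an $i$-packing whenever $i \le j$. The full write-up will simply record the assignment in a table analogous to those introduced earlier in the paper.
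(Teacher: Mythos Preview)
Your proposal is correct and matches the paper's approach exactly: the paper also partitions $\mathscr{T}$ into seven translates of $X_2$, keeps $n$ of them as color-$2$ classes, and subdivides each remaining translate into four $5$-packings via Proposition~\ref{sub8} with $k=2$, distributing them among the color slots precisely as you describe (recorded in Table~\ref{tabc6}). Your explicit density argument for the lower bound when $n\ge 7$ is a welcome addition, as the paper simply asserts equality without spelling it out.
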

\begin{proof}
Using Proposition~\ref{sub8}, we define a $(2,n)$-packing coloring of $\mathscr{T}$ for each $n=4,5,6$ and $n\ge7$.
$\mathscr{T}$ can be partitioned into seven 2-packings, hence $\forall n\ge 7$, $\chi^{2,n}_{\rho}(\mathscr{T})= 7$. The other colorings are described in Table~\ref{tabc6}.
\end{proof}
\begin{prop}
$\chi^{3,4}_{\rho}(\mathscr{T})\le 72$, $\chi^{3,5}_{\rho}(\mathscr{T})\le 38$, $\chi^{3,6}_{p}(\mathscr{T})\le 26$ and $\forall n\ge 12$, $\chi^{3,n}_{\rho}(\mathscr{T})= 12$.
\end{prop}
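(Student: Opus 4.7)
The plan is to partition $\mathscr{T}$ into twelve translates of the maximized 3-packing $X_{3}$ and then apply Proposition~\ref{sub9} to subdivide individual translates into smaller $i$-packings to match the strength requirements $s_{i} = 3 + \lfloor (i-1)/n \rfloor$.

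For $n \ge 12$, every color of a valid $(3,n)$-coloring is assigned to a 3-packing (since $s_{i} \ge 3$ for all $i$), so a $k$-coloring gives a partition of $V(\mathscr{T})$ into $k$ 3-packings. Hence $\chi^{3,n}_{\rho}(\mathscr{T})$ is at least the 3-distant chromatic number of $\mathscr{T}$, which is $\lceil \frac{3}{4}(3+1)^{2} \rceil = 12$ by Proposition~4.7. The matching upper bound follows from the partition of $\mathscr{T}$ into twelve 3-packings, giving $\chi^{3,n}_{\rho}(\mathscr{T}) = 12$ for $n \ge 12$.

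For $n \in \{4,5,6\}$, I would color $n$ of the twelve 3-packings with color 3 to exhaust the slots $s_{1} = \cdots = s_{n} = 3$ and subdivide the remaining $12 - n$ translates via Points 1--5 of Proposition~\ref{sub9}. The cleanest case is $n = 6$: the six leftover 3-packings must yield 20 further sets --- six 4-packings, six 5-packings, six 6-packings, and two 7-packings. Point 2 with $k=1$ splits a 3-packing into three 5-packings, and Point 1 with $k=2$ splits a 3-packing into four 7-packings. Subdividing four of the six via Point 2 ($k=1$) gives twelve 5-packings (enough for colors 7--18, since a 5-packing is also a 4-packing), and subdividing the other two via Point 1 ($k=2$) gives eight 7-packings for colors 19--26, yielding a total of $6 + 12 + 8 = 26$. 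For $n = 5$ and $n = 4$ I would pursue the same scheme but with deeper refinements --- using Point 1 with $k=3$ (nine 11-packings), and Points 4 or 5 to further split first-level pieces (multiplying the count by $m^{2}$ or $3m^{2}$) --- to reach the top strengths $s_{38} = 10$ and $s_{72} = 20$ respectively.

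The heart of the work lies in the case $n=4$: the 68 higher-indexed colors group into 17 bands of four, with strengths $4, 5, \ldots, 20$, and one must choose, for each of the eight leftover 3-packings, parameters $(k, m)$ in Proposition~\ref{sub9} so that the total histogram of produced pieces exactly matches the $(4, 4, \ldots, 4)$ profile, while each piece meets its required packing strength. This band-by-band balancing is the principal combinatorial obstacle; once selected, the assignments would be recorded in a table in the format already established for the preceding propositions.
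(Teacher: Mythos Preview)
Your approach is exactly the paper's: partition $\mathscr{T}$ into twelve translates of $X_{3}$ and then refine via Proposition~\ref{sub9}. The case $n\ge 12$ is handled correctly (your appeal to Proposition~4.7 for the lower bound is in fact more explicit than the paper's own argument), and your construction for $n=6$ reproduces the paper's exactly: $6\times X_{3}$ kept as colour~3, four copies split into $12\times A_{5}$ for colours $4$--$5$, two copies split into $8\times A_{7}$ for colours $6$--$7$, giving $26$.

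The gap is that for $n=4$ and $n=5$ you have not produced the colourings; you only assert that suitable parameter choices in Proposition~\ref{sub9} exist. The specific numbers $72$ and $38$ are the content of those cases --- they are not consequences of the framework but of an explicit table of decompositions. The paper's proof \emph{is} that table (Table~\ref{tabc7}): for $n=5$ it uses $5\times X_{3}$ at colour~3, $3\times X_{3}\to 9\times A_{5}$, $2\times X_{3}\to 8\times A_{7}$, $2\times X_{3}\to 18\times A_{11}$; for $n=4$ it uses a six-block scheme reaching $A_{23}$'s. Without writing down such a table and checking the colour count, you have not established the bounds. One remark that would ease the bookkeeping: you say the histogram must ``exactly match'' the $(4,4,\ldots,4)$ profile, but in fact any $j$-packing with $j\ge s_{i}$ may carry colour~$i$, so you only need each produced piece to be at least as strong as the colour you assign it --- you can freely overshoot. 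This relaxation is what makes the parameter search tractable.
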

\begin{proof}
Using Proposition~\ref{sub9}, we define a $(3,n)$-packing coloring of $\mathscr{T}$ for each $n=4,5,6$ and $n\ge12$.
$\mathscr{T}$ can be partitioned into twelve 3-packings, hence $\forall n\ge 12$, $\chi^{3,n}_{\rho}(\mathscr{T})= 12$. The other colorings are described in Table~\ref{tabc7}.
\end{proof}
\section{Conclusion}
We have determined or bounded the $(d,n)$-packing chromatic number of three lattices $\mathscr{H}$, $\mathbb{Z}^{2}$ and $\mathscr{T}$ for small values of $d$ and $n$. 
Further studies can be done with other values of $d$ and $n$ or improving existing values.
The $(d,n)$-packing chromatic number can also be investigated for other lattices. As an example, we can prove, using color patterns defined in \cite{TO2010} for distance graphs, that for the octagonal lattice $\mathscr{O}$, \textit{i.e} the strong product of two infinite path (which is a supergraph of $\mathscr{T}$), \textit{$\chi^{1,2}_{\rho}(\mathscr{O})\le 58$}. 
For other finite or infinite graphs, like $k$-regular infinite trees, the method has to be adapted or changed since a maximized packing cannot be described as easily as those considered in this paper.
Also, for each of three lattices studied, finding a sequence $S$ such that $\chi^{S}_{\rho}=k$ and there is no $S$-packing $k$ coloring where the $s_1$-packing is maximized could be an interesting result.

\begin{appendices}
\section{Distances in the three lattices}
\begin{defn}[\cite{JA2005}]
Let $v=(a,b)$ be a vertex in the hexagonal lattice. Then the type of $v$ is
$$\tau(v)=a+b+1\pmod 2.$$
\end{defn}
As $\mathscr{H}=V_{0}\cup V_{1}$ is a bipartite graph, the type of a vertex $v$ corresponds to the index of the set $V_{i}$ to which $v$ belongs. 
\begin{prop}[\cite{JA2005}]
Let $v_{1}=(a_{1},b_{1})$,  $v_{2}=(a_{2},b_{2})$ be two vertices of the hexagonal lattice and assume that $b_{1}\ge b_{2}$. Then the distance between $v_{1}$ and $v_{2}$ is
$$
d(v_{1},v_{2})=\left\{
    \begin{array}{ll}
        |a_{1}-a_{2}|+|b_{1}-b_{2}| & \mbox{if\ }|a_{1}-a_{2}|\ge|b_{1}-b_{2}|; \\
        2|b_{1}-b_{2}|-\tau(v_{1})+\tau(v_{2})  & \mbox{if\ }|a_{1}-a_{2}|<|b_{1}-b_{2}|.
    \end{array}
\right.
$$
\end{prop}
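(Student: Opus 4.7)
The plan is to derive the distance formula by first making the edge structure of $\mathscr{H}$ explicit in the chosen coordinates, then proving matching lower and upper bounds, separating the two cases.

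First I would describe the edges: in the embedding used in the paper, horizontal edges $(a,b)\sim(a+1,b)$ are always present, while a vertical edge $(a,b)\sim(a,b+1)$ exists exactly when the parity condition on $a,b$ determined by $\tau$ holds (concretely, only at those columns where $\tau(a,b)$ takes one fixed value). In any walk $P$ from $v_1$ to $v_2$, let $h$ be the number of horizontal edges and $w$ the number of vertical edges traversed. The net horizontal and vertical displacements give $h \ge |a_1-a_2|$ and $w \ge |b_1-b_2|$, yielding the universal lower bound $d(v_1,v_2)\ge |a_1-a_2|+|b_1-b_2|$.

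For Case~1, $|a_1-a_2|\ge |b_1-b_2|$, I would construct a walk achieving this bound explicitly. Starting from $v_1$, I alternate vertical moves with single horizontal steps in the direction of $v_2$ to align the current $a$-coordinate with a column admitting the next vertical edge; because the horizontal budget $|a_1-a_2|$ is at least the vertical budget $|b_1-b_2|$, every horizontal step inserted to "reach a legal column" can be counted toward the required net horizontal displacement, so no backtracking is needed and the path has length $|a_1-a_2|+|b_1-b_2|$.

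For Case~2, $|a_1-a_2|<|b_1-b_2|$, I would prove that the parity constraints force extra horizontal edges. Since $\mathscr{H}$ is bipartite with parts indexed by $\tau$, any walk has length $\equiv \tau(v_1)+\tau(v_2)\pmod 2$; combined with $w\ge |b_1-b_2|$ one can rule out lengths strictly less than $2|b_1-b_2|-\tau(v_1)+\tau(v_2)$. The combinatorial content is that between two consecutive vertical edges in a walk one must traverse at least one horizontal edge to reach the next "legal" column, and the boundary effects at the starting and ending vertex are precisely accounted for by the types $\tau(v_1)$ and $\tau(v_2)$. I would make this rigorous by induction on $|b_1-b_2|$: a single vertical move costs $1$ if started from a legal column and $2$ (a horizontal plus a vertical) otherwise, and the type of the current endpoint toggles with each vertical move, giving after $|b_1-b_2|$ moves a total of exactly $2|b_1-b_2|-\tau(v_1)+\tau(v_2)$ edges. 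Matching this with an explicit construction (go straight up or down, inserting one horizontal edge before each vertical edge from an "illegal" column) completes Case~2.

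The main obstacle is Case~2: one must carefully verify that no shortcut using net horizontal moves in the "wrong" direction can shorten the walk. This is handled by the two combined invariants (net displacement $|b_1-b_2|$ vertical, and the bipartite parity), which together pin down the exact value. The case split hypothesis $|a_1-a_2|<|b_1-b_2|$ ensures that the forced horizontal detours dominate the raw horizontal displacement $|a_1-a_2|$, so that the formula is indeed a lower bound and the explicit construction shows it is attained.
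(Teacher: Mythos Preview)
The paper does not actually prove this proposition: it is quoted verbatim in Appendix~A from Jacko and Jendrol~\cite{JA2005} and used as a black box to verify that the various coordinate sets are $i$-packings. There is therefore no ``paper's own proof'' to compare your proposal against.

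That said, your sketch is along the right lines for an independent proof. The crucial structural fact you allude to but do not state cleanly is that every vertex of $\mathscr{H}$ has exactly one vertical neighbour (either up or down, determined by its type), so a walk can never contain two consecutive vertical edges. This immediately gives $h\ge w-1$, and combined with $w\ge |b_1-b_2|$ and the bipartite parity constraint you obtain the Case~2 lower bound after a short boundary analysis depending on $\tau(v_1),\tau(v_2)$. In your write-up the sentence ``combined with $w\ge|b_1-b_2|$ one can rule out lengths strictly less than $2|b_1-b_2|-\tau(v_1)+\tau(v_2)$'' is not yet justified: $w\ge|b_1-b_2|$ and parity alone do not suffice, you genuinely need the no-two-consecutive-vertical-edges observation (or your inductive reformulation of it). Also check that your Case~2 construction has enough horizontal slack: the number of forced horizontal steps is $|b_1-b_2|-\tau(v_1)+\tau(v_2)$, which is at least $|b_1-b_2|-1\ge |a_1-a_2|$ and has the correct parity with $|a_1-a_2|$, so the net horizontal displacement can indeed be realised without extra steps. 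With these points made explicit the argument goes through.
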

\begin{exmp}
The set $X_{2}$ from Figure 2 is a 2-packing in $\mathscr{H}$.
\end{exmp}
\begin{proof}
Let $x$ and $y$ be integers, then\newline
$d((2(x+1)+4y,x+1),(2x+4y,x))=|2x+4y+2-2x-4y|+|x+1-x|=3>2$ and
$d((2x+4(y+1),x),(2x+4y,x))=4>2$;\newline
let $i$ and $j$ be integers, then $d((2(x+i)+4(y+j),x+i),(2x+4y,x))\ge min(d((2(x+1)+4y,x+1),(2x+4y,x)),d((2x+4(y+1),x),(2x+4y,x)))=3$, hence $X_{2}$ is a 2-packing.
\end{proof}
\begin{claim}
Let $v_{1}=(a_{1},b_{1})$ and $v_{2}=(a_{2},b_{2})$ be two vertices of the square lattice. Then the distance between $v_{1}$ and $v_{2}$ is
$$
d(v_{1},v_{2})=|a_{1}-a_{2}|+|b_{1}-b_{2}|.
$$
\end{claim}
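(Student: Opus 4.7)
The plan is to prove the equality by showing both inequalities, exploiting the Cartesian product structure $\mathbb{Z}^{2}=P_{\infty}\square P_{\infty}$. Every edge of $\mathbb{Z}^{2}$ joins two vertices that differ by $\pm 1$ in exactly one coordinate, which is the key structural fact I will use throughout.

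For the upper bound, I would exhibit an explicit walk from $v_{1}$ to $v_{2}$ of length $|a_{1}-a_{2}|+|b_{1}-b_{2}|$: traverse $(a_{1},b_{1}),(a_{1}\pm 1,b_{1}),\ldots,(a_{2},b_{1})$ using $|a_{1}-a_{2}|$ horizontal edges, and then $(a_{2},b_{1}),(a_{2},b_{1}\pm 1),\ldots,(a_{2},b_{2})$ using $|b_{1}-b_{2}|$ vertical edges. Both sub-walks consist of valid edges of $\mathbb{Z}^{2}$, so this gives $d(v_{1},v_{2})\le|a_{1}-a_{2}|+|b_{1}-b_{2}|$.

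For the lower bound, I would take any walk $v_{1}=u_{0},u_{1},\ldots,u_{k}=v_{2}$ and project it onto each coordinate. Writing $u_{j}=(x_{j},y_{j})$, for every edge $u_{j}u_{j+1}$ either $|x_{j+1}-x_{j}|=1,\,y_{j+1}=y_{j}$ or $x_{j+1}=x_{j},\,|y_{j+1}-y_{j}|=1$. Thus the numbers of horizontal edges $h$ and vertical edges $w$ satisfy $h+w=k$, and by the triangle inequality along each coordinate, $h\ge|a_{1}-a_{2}|$ and $w\ge|b_{1}-b_{2}|$. Summing gives $k\ge|a_{1}-a_{2}|+|b_{1}-b_{2}|$, so the minimum over all walks yields the desired lower bound.

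There is no real obstacle here: the claim is the standard characterization of the graph distance in $P_{\infty}\square P_{\infty}$ as the $\ell_{1}$ metric, and both bounds follow from the dichotomy that each edge changes exactly one coordinate by exactly one unit. The only care needed is to make the projection argument rigorous (separating horizontal from vertical edges and applying the triangle inequality coordinate-wise), rather than waving at ``obvious'' minimality of the monotone path exhibited in the upper bound.
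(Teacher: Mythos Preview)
Your argument is correct and complete: the explicit monotone walk gives the upper bound, and the edge-by-edge projection onto each coordinate gives the lower bound. Note that the paper does not actually prove this claim; it is stated without proof as a standard fact about $P_{\infty}\square P_{\infty}$, so there is no paper proof to compare against --- your write-up simply supplies the (routine) justification the authors omitted.
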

\begin{exmp}
The set $X_{2}$ from Figure 4 is a 2-packing in $\mathbb{Z}^{2}$.
\end{exmp}
\begin{proof}
Let $x$ and $y$ be integers, then\newline
$d((2(x+1)+y,x+1+3y),(2x+y,x+3y))=|2x+y+2-2x-y|+|x+1+3y-x-3y|=3>2$ and
$d((2x+y+1,x-1+3(y+1),(2x+y,x+3y))=4>2$,
to conclude $X_{2}$ is a 2-packing.
\end{proof}
\begin{claim}
Let $v_{1}=(a_{1},b_{1})$ and $v_{2}=(a_{2},b_{2})$ be two vertices of the triangular lattice. Then the distance between $v_{1}$ and $v_{2}$ is
$$
d(v_{1},v_{2})=\left\{
    \begin{array}{ll}
        max(|a_{1}-a_{2}|,|b_{1}-b_{2}|) & \mbox{if\ }((a_{1}\ge a_{2})\land(b_{1}\le b_{2}))\lor ((a_{1}\le a_{2})\land(b_{1}\ge b_{2})) ; \\
        |a_{1}-a_{2}|+|b_{1}-b_{2}|  & \mbox{otherwise}.
    \end{array}
\right.
$$
\end{claim}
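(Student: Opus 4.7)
By vertex-transitivity of $\mathscr{T}$, it suffices to compute $d((0,0),(\Delta a,\Delta b))$ where $\Delta a=a_2-a_1$ and $\Delta b=b_2-b_1$: the six elementary moves available at each vertex are $\pm(1,0)$, $\pm(0,1)$, $\pm(1,-1)$ (these are the six neighbors of $(0,0)$ listed in the definition of $X_1$). The two cases in the formula correspond exactly to whether the signs of $\Delta a$ and $\Delta b$ are opposite (allowing possibly zero) or strictly equal: the first disjunct $((a_1\ge a_2)\land(b_1\le b_2))\lor((a_1\le a_2)\land(b_1\ge b_2))$ is equivalent to $\Delta a$ and $\Delta b$ having opposite signs (or one being zero).

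First I would establish the upper bounds by exhibiting explicit walks. In the opposite-sign case, say $\Delta a\ge 0$ and $\Delta b\le 0$, concatenate $\min(\Delta a,-\Delta b)$ diagonal moves $(1,-1)$ with $||\Delta a|-|\Delta b||$ axial moves (either $(1,0)$ or $(0,-1)$ depending on which coordinate is larger), yielding a walk of length $\max(|\Delta a|,|\Delta b|)$; the case $\Delta a\le 0, \Delta b\ge 0$ is symmetric using $(-1,1)$. In the same-sign case, say both $\Delta a,\Delta b\ge 0$, simply take $|\Delta a|$ moves $(1,0)$ followed by $|\Delta b|$ moves $(0,1)$, of total length $|\Delta a|+|\Delta b|$; the negative case is again symmetric.

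For the lower bounds, the key tool is to track the three linear forms $L_1(a,b)=a$, $L_2(a,b)=b$, $L_3(a,b)=a+b$ along a walk. A direct check of the six allowed moves shows that each move alters exactly two of $L_1,L_2,L_3$ by $\pm 1$ and leaves the third fixed; in particular each $|L_i|$ changes by at most $1$ per step. Therefore the length of any walk from $(0,0)$ to $(\Delta a,\Delta b)$ is at least $\max(|L_1|,|L_2|,|L_3|)=\max(|\Delta a|,|\Delta b|,|\Delta a+\Delta b|)$. In the opposite-sign case, $|\Delta a+\Delta b|=||\Delta a|-|\Delta b||\le \max(|\Delta a|,|\Delta b|)$, so this maximum equals $\max(|\Delta a|,|\Delta b|)$, matching the upper bound. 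In the same-sign case, $|\Delta a+\Delta b|=|\Delta a|+|\Delta b|$, which dominates and matches the upper bound $|\Delta a|+|\Delta b|$.

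The only subtlety is verifying the ``$|L_i|$ changes by at most $1$'' invariant carefully for all six moves, and making sure the sign-based case distinction in the statement lines up with the sign distinction that drives the argument; once those bookkeeping items are handled, the two matching bounds close the proof.
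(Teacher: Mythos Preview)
Your argument is correct. The paper actually states this claim without proof (it is presented as a well-known fact in Appendix~A, immediately followed by an example rather than a proof environment), so there is no paper proof to compare against. Your approach---explicit walks for the upper bound and the three linear forms $L_1(a,b)=a$, $L_2(a,b)=b$, $L_3(a,b)=a+b$ for the lower bound---is the standard clean way to handle this: the key observation that each of the six generators $\pm(1,0),\pm(0,1),\pm(1,-1)$ changes exactly two of the $L_i$ by $\pm1$ gives the lower bound $\max(|\Delta a|,|\Delta b|,|\Delta a+\Delta b|)$ in one line, and the sign analysis you give collapses this to the two stated cases. The only bookkeeping you flag (checking the sign condition in the statement matches ``$\Delta a$ and $\Delta b$ have weakly opposite signs'') is straightforward and your translation is correct.
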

\begin{exmp}
The set $X_{1}$ from Figure 5 is an independent set in $\mathscr{T}$.
\end{exmp}
\begin{proof}
Let $x$ and $y$ be integers, then,\newline
$d((x+1+3y,x+1),(x+3y,x))=|x+1+3y-x-3y|+|x+1-x|=2>1$ and
$d((x+3(y+1),x),(x+3y,x))=3>1$, 
to conclude $X_{1}$ is an independent set.
\end{proof}
\section{Decomposition of an $i$-packing in the three lattices}
\renewcommand\thetable{\thesection.\arabic{table}}
\begin{table}[H]
\begin{center}
\begin{tabular}{| c |  c | c | c |}
\hline
$i$ & Number of & Description of & Family of\\
& $i$-packings & a $i$-packing & translation vectors \\ \hline
$4k-1$ & $k^{2}$  & $\{3kx+6ky,kx)\}$ & $(3i+6j,i)$ \\
 & & & $i,j\in\{0,\ldots,k-1\}$\\ \hline
$6k-1$ & $3k^{2}$  & $\{3kx+6ky,3kx)\}$ & $(3i+6j,3i+2a)$ \\
 & & & $i,j\in\{0,\ldots,k-1\}$, $a\in\{0,1,2\}$\\ \hline
$10k-1$ & $8k^{2}$  & $\{6kx+12ky,4kx)\}$ & $(6i+12j+3b,4i+2a+b)$ \\
 & & & $i,j\in\{0,\ldots,k-1\}$,\\ 
 & & & $a\in\{0,1,2,3\}$, $b\in\{0,1\}$\\ \hline
$18k-1$ & $24k^{2}$  & $\{12kx+24ky,6kx)\}$ & $(12i+24j+3b,6i+2a+b)$ \\
 & & & $i,j\in\{0,\ldots,k-1\}$,\\ 
 & & & $a\in\{0,\ldots,5\}$, $b\in\{0,1,2,3\}$\\ \hline
\end{tabular}
\caption{Decomposition of $X_{3}$ in $\mathscr{H}$ into $i$-packings.}
\label{tab4}
\end{center}
\end{table}
\begin{table}[h]
\begin{center}
\begin{tabular}{| c |  c | c | c |}
\hline
$i$ & Number of & Description of & Family of\\
& $i$-packings & a $i$-packing & translation vectors \\ \hline
$5k-1$ & $k^{2}$  & $\{3kx-ky,2kx+3ky)\}$ & $(3i-j,2i+3j)$ \\
 & & & $i,j\in\{0,\ldots,k-1\}$\\ \hline
$6k-1$ & $2k^{2}$  & $\{7kx-ky,kx+3ky)\}$ & $(7i+3a-j,i+2a+3j)$ \\
 & & & $i,j\in\{0,\ldots,k-1\}$, $a\in\{0,1\}$\\ \hline
$8k-1$ & $3k^{2}$  & $\{7kx+2ky,kx+5ky)\}$ & $(7i+2j+3a,i+5j+2a)$ \\
 & & & $i,j\in\{0,\ldots,k-1\}$, $a\in\{0,1,2\}$\\ \hline
$11k-1$ & $6k^{2}$  & $\{-2kx+11ky,6kx)\}$ & $(-2i+11j+7a,6i+a)$ \\
 & & & $i,j\in\{0,\ldots,k-1\}$, $a\in\{0,\ldots,5\}$\\ \hline
\end{tabular}
\caption{Decomposition of $X_{4}$ in $\mathscr{H}$ into $i$-packings.}
\label{tab5}
\end{center}
\end{table}
\begin{table}[h]
\begin{center}
\begin{tabular}{| c | c |  c | c | c |}
\hline
& $i$ & Number of & Description of & Family of\\
& & $i$-packings & a $i$-packing & translation vectors \\ \hline
\multirow{4}{*}{$X_{2}$} & $3k-1$ & $k^{2}$  & $\{2kx-ky,kx+2ky)\}$ & $(2i-j,i+2j)$ \\
& & & & $i,j\in\{0,\ldots,k-1\}$\\ \cline{2 - 5}
& $4k-1$ & $2k^{2}$  & $\{4kx+ky,2kx+3ky)\}$ & $(4i+2a+j,2i+2a+3j)$ \\
& & & & $i,j\in\{0,\ldots,k-1\}$, $a\in\{0,1\}$\\ \hline \hline
\multirow{2}{*}{$X_{3}$} & $4k-1$ & $k^{2}$  & $\{2kx+4ky,2kx)\}$ & $(2i+4j,2i)$ \\
& & & & $i,j\in\{0,\ldots,k-1\}$\\ \hline \hline
\multirow{4}{*}{$X_{4}$} & $5k-1$ & $k^{2}$  & $\{3kx-2ky,2kx+3ky)\}$ & $(3i-2j,2i+3j)$ \\
& & & & $i,j\in\{0,\ldots,k-1\}$\\ \cline{2 - 5}
& $6k-1$ & $2k^{2}$  & $\{6kx+ky,4kx+5ky)\}$ & $(6i+j+3a,4i+5j+2a)$ \\
& & & & $i,j\in\{0,\ldots,k-1\}$, $a\in\{0,1\}$\\ \hline
\end{tabular}
\caption{Decomposition of $X_{2}$, $X_{3}$ and $X_{4}$ in $\mathbb{Z}^{2}$ into $i$-packings.}
\label{tab7}
\end{center}
\end{table}
\begin{table}[h]
\begin{center}
\begin{tabular}{| c | c |  c | c | c |}
\hline
& $i$ & Number of & Description of & Family of\\
& & $i$-packings & a $i$-packing & translation vectors \\ \hline
\multirow{4}{*}{$X_{1}$} & $2k-1$ & $k^{2}$  & $\{kx+3ky,kx)\}$ & $(i+3j,i)$ \\
& & & & $i,j\in\{0,\ldots,k-1\}$\\ \cline{2 - 5}
& $3k-1$ & $3k^{2}$  & $\{3kx+3ky,3kx)\}$ & $(3i+3j+a,3i+a)$ \\
& & & & $i,j\in\{0,\ldots,k-1\}$, $a\in\{0,1,2\}$\\ \hline \hline
\multirow{2}{*}{$X_{2}$} & $3k-1$ & $k^{2}$  & $\{2kx+7ky,kx)\}$ & $(2i+7j,i)$ \\
& & & & $i,j\in\{0,\ldots,k-1\}$\\ \hline \hline
\multirow{4}{*}{$X_{3}$} & $4k-1$ & $k^{2}$  & $\{2kx+6ky,2kx)\}$ & $(2i+6j,2i)$ \\
& & & & $i,j\in\{0,\ldots,k-1\}$\\ \cline{2 - 5}
& $6k-1$ & $3k^{2}$  & $\{6kx+6ky,6kx)\}$ & $(6i+6j+2a,6i+2a)$ \\
& & & & $i,j\in\{0,\ldots,k-1\}$, $a\in\{0,1,2\}$\\ \hline
\end{tabular}
\caption{Decomposition of $X_{1}$, $X_{2}$ and $X_{3}$ in $\mathscr{T}$ into $i$-packings.}
\label{tab10}
\end{center}
\end{table}
\section{Decomposition and associated colors}
\begin{table}[h]
\begin{center}
\begin{tabular}{| c | c |  c | c | c | c | c |}
\hline
\multirow{6}{*}{(3,2)-packing} & \multirow{3}{*}{P} & \multirow{3}{*}{2$\times X_{3}$} & $X_{3}$ & $X_{3}$ & $X_{3}$ & $X_{3}$ \\ \cline{4-7}
& & & 3$\times X_{5}$ & 4$\times X_{7}$ & 4$\times X_{9}$, 8$\times X_{15}$ & $X_{5}$, 3$\times X_{11}$, \\ 
& & & &  & & 4$\times X_{17}$, 6$\times X_{23}$\\ \cline{2-7}
& \multirow{3}{*}{C} & 2$\times$3 & 2$\times$4, 5 & 2$\times$6, 2$\times$7 & 2$\times$8, 2$\times 9$, & 5, 2$\times10$, 2$\times 11$, \\
& & & & &  2$\times 12$, 2$\times 13$, & 2$\times 16$, 2$\times 17$, 2$\times 18$, \\
& & & & & 2$\times 14$, 2$\times 15$ & 2$\times 19$, $20$  \\ \hline
\multirow{3}{*}{(3,3)-packing} & \multirow{2}{*}{P} & \multirow{2}{*}{3$\times X_{3}$} & $X_{3}$ & $X_{3}$ & $X_{3}$ & \\ \cline{4-7}
& & & 3$\times X_{5}$ & 3$\times X_{5}$ & 4$\times X_{7}$ & \\ \cline{2-7}
& C & 3$\times$3 & 3$\times$4 & 3$\times$5 & 3$\times$6, 7 & \\ \hline
\multirow{3}{*}{(3,4)-packing} & \multirow{2}{*}{P} & \multirow{2}{*}{4$\times X_{3}$} & $X_{3}$ & $X_{3}$ & & \\ \cline{4-7}
& & & 3$\times X_{5}$ & 3$\times X_{5}$ & & \\\cline{2-7}
& C & 4$\times$3 & 3$\times$4 & 4, 2$\times$5 & & \\ \hline
\multirow{3}{*}{(3,5)-packing} & \multirow{2}{*}{P} & \multirow{2}{*}{5$\times X_{3}$} & $X_{3}$ & & & \\ \cline{4-7}
& & & 3$\times X_{5}$ & & & \\ \cline{2-7}
& C & 5$\times$3 & 3$\times$4 & & & \\ \hline
\end{tabular}
\end{center}
\caption{Decomposition of $\mathscr{H}$ into $3$-packings and associated colors.}
\label{tabc1}
\end{table}
\begin{table}[h]
\begin{center}
\begin{tabular}{| c | c | c | c | c | c | c |}
\hline
\multirow{7}{*}{(4,3)-packing} & \multirow{2}{*}{P} & \multirow{2}{*}{3$\times X_{4}$} & $X_{4}$ & 2$\times X_{4}$ & 2$\times X_{4}$ & $X_{4}$ \\ \cline{4-7}
& & & 2$\times A_{5}$ & 6$\times A_{7}$ & $A_{5}$, 6$\times A_{9}$ & 6$\times A_{11}$, 4$\times A_{19}$\\ \cline{2-7}
& C & 3$\times$4 & 2$\times$5 & 3$\times$6, 3$\times$7 & 5, 3$\times$8, 3$\times 9$ & 3$\times$10, 3$\times$11, 18, 3$\times$19 \\ \cline{2-7}
& \multirow{2}{*}{P} & & & & $X_{4}$ & $X_{4}$ \\ \cline{3-7}
& & & & & 9$\times A_{14}$ & 11$\times A_{19}$, 10$\times A_{23}$ \\ \cline{2-7}
& \multirow{2}{*}{C} & & & & 3$\times 12$, 3$\times 13$, & 3$\times 15$, 3$\times 16$, 3$\times 17$, \\
& & & & & 3$\times 14$ & 2$\times 18$, 3$\times 20$, 3$\times 21$ \\
& & & & & & 3$\times 22$, 23 \\ \hline
\multirow{3}{*}{(4,4)-packing} & \multirow{2}{*}{P} & \multirow{2}{*}{4$\times X_{4}$} & 2$\times X_{4}$ & 2$\times X_{4}$ & 2$\times X_{4}$ & $X_{4}$ \\ \cline{4-7}
& & & 4$\times A_{5}$ & 6$\times A_{7}$ & 8$\times A_{9}$ & 2$\times A_{7}$, 3$\times A_{14}$ \\ \cline{2-7}
& C & 4$\times$4 & 4$\times$5 & 4$\times$6, 2$\times$7 & 4$\times$8, 4$\times$9 & 2$\times$7, 3$\times 10$ \\ \hline
\multirow{3}{*}{(4,5)-packing} & \multirow{2}{*}{P} & \multirow{2}{*}{5$\times X_{4}$} & 2$\times X_{4}$ & 3$\times X_{4}$ & $X_{4}$ & \\ \cline{4-7}
& & & 4$\times A_{5}$ & 9$\times A_{7}$ & $A_{5}$, 2$\times A_{9}$ & \\ \cline{2-7}
& C & 5$\times$4 & 4$\times$5 & 5$\times$6, 4$\times$7 & 5, 7, 8 & \\ \hline 
\multirow{3}{*}{(4,6)-packing} & \multirow{2}{*}{P} & \multirow{2}{*}{6$\times X_{4}$} & 3$\times X_{4}$ & 2$\times X_{4}$ & & \\ \cline{4-7}
& & & 6$\times A_{5}$ & 6$\times A_{7}$ & & \\ \cline{2-7}
& C & 6$\times$4 & 6$\times$5 & 6$\times$6 & & \\ \hline 
\end{tabular}
\end{center}
\caption{Decomposition of $\mathscr{H}$ into $4$-packings and associated colors.}
\label{tabc2}
\end{table}
\begin{table}[h]
\begin{center}
\begin{tabular}{| c | c | c | c | c | c |}
\hline
\multirow{4}{*}{(2,2)-packing} & \multirow{2}{*}{P} & \multirow{2}{*}{2$\times X_{2}$} & $X_{2}$ & $X_{2}$ & $X_{2}$ \\ \cline{4-6}
& & & 2$\times A_{3}$ & 4$\times A_{5}$ & 6$\times A_{8}$,  6$\times A_{11}$ \\ \cline{2-6}
& \multirow{2}{*}{C} & 2$\times$2 & 2$\times$3 & 2$\times$4, 2$\times$5 & 2$\times$6, 2$\times$7, 2$\times$8 \\ 
& & & & & 2$\times$9, 2$\times$10, 2$\times$11 \\ \hline
\multirow{3}{*}{(2,3)-packing} & \multirow{2}{*}{P} & \multirow{2}{*}{3$\times X_{2}$} & $X_{2}$ & $X_{2}$ &\\ \cline{4-6}
& & & 2$\times A_{3}$ & $A_{3}$, 2$\times A_{5}$ & \\ \cline{2-6}
& C & 3$\times$2 & 2$\times$3 & 3, 2$\times$4 & \\ \hline
\multirow{3}{*}{(2,4)-packing} & \multirow{2}{*}{P} & \multirow{2}{*}{4$\times X_{2}$} & $X_{2}$ & &\\ \cline{4-6}
& & & 2$\times A_{3}$ & & \\ \cline{2-6}
& C & 4$\times$2 & 2$\times$3 & & \\ \hline
\end{tabular}
\end{center}
\caption{Decomposition of $\mathbb{Z}^{2}$ into $2$-packings and associated colors.}
\label{tabc4}
\end{table}
\begin{table}[h]
\begin{center}
\begin{tabular}{| c | c | c | c | }
\hline
\multirow{4}{*}{(3,4)-packing} & \multirow{2}{*}{P} & \multirow{2}{*}{4$\times X_{3}$} & 4$\times X_{3}$ \\ \cline{4-4}
& & & 16$\times A_{7}$  \\ \cline{2-4}
& C & 4$\times$3 & 4$\times$4, 4$\times$5, 4$\times$6, 4$\times$7 \\ \hline
\multirow{3}{*}{(3,5)-packing} & \multirow{2}{*}{P} & \multirow{2}{*}{5$\times X_{3}$} & 3$\times X_{3}$ \\ \cline{4-4}
& & & 12$\times A_{7}$  \\ \cline{2-4}
& C & 5$\times$3 & 5$\times$4, 5$\times$5, 2$\times$6 \\ \hline
\multirow{3}{*}{(3,6)-packing} & \multirow{2}{*}{P} & \multirow{2}{*}{6$\times X_{3}$} & 2$\times X_{3}$ \\ \cline{4-4}
& & & 8$\times A_{7}$ \\ \cline{2-4}
& C & 6$\times$3 & 6$\times$4, 2$\times$5 \\ \hline
\end{tabular}
\end{center}
\caption{Decomposition of $\mathbb{Z}^{2}$ into $3$-packings and associated colors.}
\label{tabc5}
\end{table}
\begin{table}[h]
\begin{center}
\begin{tabular}{| c | c | c | c | c | c | c | c |}
\hline
\multirow{4}{*}{(4,4)-packing} & \multirow{2}{*}{P} & \multirow{2}{*}{4$\times X_{4}$} & 2$\times X_{4}$ &  4$\times X_{4}$ & $X_{4}$ & $X_{4}$ & $X_{4}$ \\ \cline{4-8}
& & & 4$\times A_{5}$ & 16$\times A_{9}$ & 8$\times A_{11}$ & 9$\times A_{14}$ & 3$\times A_{14}$, 12$\times A_{17}$ \\ \cline{2-8}
& \multirow{2}{*}{C} & 4$\times$4 & 4$\times$5 & 4$\times$6, 4$\times$7, & 4$\times$10, & 4$\times$12 & 3$\times$14, 4$\times$15, \\
& & & & 4$\times$8, 4$\times$9 & 4$\times$11 & 4$\times$13, 14 & 4$\times$16, 4$\times$17 \\ \cline{2-8}
\multirow{6}{*}{(4,5)-packing} & \multirow{3}{*}{P} & \multirow{2}{*}{5$\times X_{4}$} & 2$\times X_{4}$ & 5$\times X_{4}$ & $X_{4}$ & & \\ \cline{4-8}
& & & 4$\times A_{5}$ & $A_{5}$, 18$\times A_{9}$ &2$\times A_{9}$, & &\\ 
& & & & &  4$\times A_{11}$& &\\ \cline{2-8}
& \multirow{2}{*}{C} & 5$\times$4 & 4$\times$5 & 5, 5$\times$6, 5$\times$7 & 2$\times$9, & & \\
& & & & 5$\times$8, 3$\times$9 & 4$\times$10 & & \\ \hline
\multirow{3}{*}{(4,6)-packing} & \multirow{2}{*}{P} & \multirow{2}{*}{6$\times X_{4}$} & 3$\times X_{4}$ & 4$\times X_{4}$ & & & \\ \cline{4-8}
& & & 6$\times A_{5}$ & 16$\times A_{9}$ & & & \\ \cline{2-8}
& C & 6$\times$4 & 6$\times$5 & 6$\times$6, 6$\times$7, 4$\times$8 & & & \\ \hline
\end{tabular}
\end{center}
\caption{Decomposition of $\mathbb{Z}^{2}$ into $4$-packings and associated colors.}
\label{tabc3}
\end{table}
\begin{table}[h]
\begin{center}
\begin{tabular}{| c | c | c | c |}
\hline
\multirow{3}{*}{(2,4)-packing} & \multirow{2}{*}{P} & \multirow{2}{*}{4$\times X_{2}$} & 3$\times X_{2}$ \\ \cline{4-4}
& & & 12$\times A_{5}$ \\ \cline{2-4}
& C & 4$\times$2 & 4$\times$3, 4$\times$4, 4$\times$5 \\ \hline 
\multirow{3}{*}{(2,5)-packing} & \multirow{2}{*}{P} & \multirow{2}{*}{5$\times X_{2}$} & 2$\times X_{2}$ \\ \cline{4-4}
& & & 8$\times A_{5}$ \\ \cline{2-4}
& C & 5$\times$2 & 5$\times$3, 3$\times$4 \\ \hline 
\multirow{3}{*}{(2,6)-packing} & \multirow{2}{*}{P} & \multirow{2}{*}{6$\times X_{2}$} & $X_{2}$ \\ \cline{4-4}
& & & 4$\times A_{5}$ \\ \cline{2-4}
& C & 6$\times$2 & 4$\times$3 \\ \hline 
\end{tabular}
\end{center}
\caption{Decomposition of $\mathscr{T}$ into $2$-packings and associated colors.}
\label{tabc6}
\end{table}
\begin{table}[h]
\begin{center}
\begin{tabular}{| c | c | c | c | c | c |}
\hline
\multirow{6}{*}{(3,4)-packing} & \multirow{2}{*}{P} & \multirow{2}{*}{4$\times X_{3}$} & 2$\times X_{3}$ & 2$\times X_{3}$ & 2$\times X_{3}$ \\ \cline{4-6}
& & & 6$\times A_{5}$ & 8$\times A_{7}$ & 2$\times A_{5}$, 12$\times A_{11}$ \\ \cline{2-6}
& C & 4$\times$3 & 4$\times$4, 2$\times$5 & 4$\times$6, 4$\times$7 & 2$\times$5, 4$\times$9, 4$\times$10, 4$\times$11 \\ \cline{2-6} 
 & \multirow{2}{*}{P} &  &  & $X_{3}$ & $X_{3}$ \\ \cline{4-6}
& & & & 16$\times A_{15}$ & 4$\times A_{11}$, 20$\times A_{23}$ \\ \cline{2-6}
& \multirow{2}{*}{C} & & & 4$\times$12, 4$\times$13 & 4$\times$8, 4$\times$16, 4$\times$17, \\ 
& & & & 4$\times$14, 4$\times$15 & 4$\times$18, 4$\times$19, 4$\times$20 \\ \hline 
\multirow{3}{*}{(3,5)-packing} & \multirow{2}{*}{P} & \multirow{2}{*}{5$\times X_{3}$} & 3$\times X_{3}$ & 2$\times X_{3}$ & 2$\times X_{3}$ \\ \cline{4-6}
& & & 9$\times A_{5}$ & 8$\times A_{7}$ & 18$\times A_{11}$ \\ \cline{2-6}
& C & 5$\times$3 & 5$\times$4, 4$\times$5 & 5$\times$6, 3$\times$7 & 5, 2$\times$7, 5$\times$8, 5$\times$9, 3$\times$10 \\ \hline 
\multirow{3}{*}{(3,6)-packing} & \multirow{2}{*}{P} & \multirow{2}{*}{6$\times X_{3}$} & 4$\times X_{3}$ & 2$\times X_{3}$ & \\ \cline{4-6}
& & & 12$\times A_{5}$ & 8$\times A_{7}$ & \\ \cline{2-6}
& C & 6$\times$3 & 6$\times$4, 6$\times$5 & 6$\times$6, 2$\times$7 & \\ \hline 
\end{tabular}
\end{center}
\caption{Decomposition of $\mathscr{T}$ into $3$-packings and associated colors.}
\label{tabc7}
\end{table}
\end{appendices}
\end{document}